\newcommand\pto{\mathrel{\ooalign{\hfil$\mapstochar$\hfil\cr$\to$\cr}}}
\DeclareMathOperator*{\E}{\mathbb{E}}
\let\Pr\relax
\DeclareMathOperator*{\Pr}{\mathbb{P}}
\DeclareMathOperator*{\Null}{null}
\DeclareMathOperator*{\Var}{\mathrm{Var}}
\DeclareMathOperator*{\length}{length}
\DeclareMathOperator*{\poly}{poly}
\DeclareMathOperator*{\cost}{cost}
\newcommand{\cC}[0]{\mathcal{C}}
\newcommand{\cD}[0]{\mathcal{D}}
\newcommand{\be}[0]{\begin{enumerate}}
	\newcommand{\ee}[0]{\end{enumerate}}
\newcommand{\bi}[0]{\begin{itemize}}
	\newcommand{\ei}[0]{\end{itemize}}
\newcommand{\AlgComment}[1]{{~~\{#1\}}}
\newcommand{\ElseComment}[1]{\Else {~~\{#1\}}}
\renewcommand{\Null}{\textbf{null}}
\newcommand{\stp}[1]{\emph{#1}}
\newcommand{\Mb}{\overline{M}}
\newtheorem{theorem}{Theorem}
\newtheorem*{theorem*}{Theorem}
\newtheorem{lemma}[theorem]{Lemma}
\newtheorem{corollary}[theorem]{Corollary}
\newtheorem{proposition}[theorem]{Proposition}
\newtheorem{definition}[theorem]{Definition}
\title{Fast and Simple Edge-Coloring Algorithms}
\author{Corwin Sinnamon\thanks{Department of Computer Science, Princeton University, Princeton, NJ 08540, USA.\newline
\emph{Email:} {\tt sinncore@gmail.com}}}
\date{}
\begin{document}
\maketitle
\begin{abstract}
We develop sequential algorithms for constructing edge-colorings of graphs and multigraphs efficiently and using few colors.
Our primary focus is edge-coloring arbitrary simple graphs using $d+1$ colors, where $d$ is the largest vertex degree in the graph.
Vizing's Theorem states that every simple graph can be edge-colored using $d+1$ colors.
Although some graphs can be edge-colored using only $d$ colors, it is NP-hard to recognize graphs of this type [Holyer, 1981].
So using $d+1$ colors is a natural goal.
Efficient techniques for $(d+1)$-edge-coloring were developed by Gabow, Nishizeki, Kariv, Leven, and Terada in 1985, and independently by Arjomandi in 1982, leading to algorithms that run in $O(|E| \sqrt{|V| \log |V|})$ time.
They have remained the fastest known algorithms for this task.

We improve the runtime to $O(|E| \sqrt{|V|})$ with a small modification and careful analysis.
We then develop a randomized version of the algorithm that is much simpler to implement and has the same asymptotic runtime, with very high probability.
On the way to these results, we give a simple algorithm for $(2d-1)$-edge-coloring of multigraphs that runs in $O(|E|\log d)$ time.
Underlying these algorithms is a general edge-coloring strategy which may lend itself to further applications.
\end{abstract}

\section{Introduction}
\label{sec:intro}

An \emph{edge-coloring} of a graph is an assignment of colors to edges such that any two incident edges have different colors.
A \textit{$k$-edge-coloring} is one that uses at most $k$ colors.
Edge-coloring is a standard notion in mathematics and has been well studied in many contexts; this paper is about sequential algorithms for constructing them.

Edge-coloring graphs efficiently and in few colors is a classic problem in graph algorithms, with natural applications to various scheduling tasks, e.g.~\cite{AMSZ03, CY07, KN03, WHH97}.
This problem has been considered in the context of simple graphs~\cite{GNKLT85, Arj82}, multigraphs~\cite{HNS86, SS08}, bipartite multigraphs~\cite{Alon03, COS01, GK82}, and planar graphs~\cite{CN90, CK08}, to name a few.
There has also been extensive work on distributed edge-coloring algorithms, e.g.~\cite{BE13, GDP08}. Some recent work has explored dynamic edge-coloring~\cite{BCHN18, BGW21, DHZ19}, where the coloring must be maintained as edges are inserted and deleted.
Across this area of study, there is a natural tradeoff between the number colors used and the efficiency or simplicity of the algorithm. In most cases, using more colors yields simpler, faster algorithms.
In general, we prefer to use as few colors as possible.\footnote{The \textit{chromatic index} of $G$ (denoted $\chi'(G)$) is the smallest. number of colors in any edge-coloring of $G$.}

Let $G$ be a simple graph having $n$ vertices and $m$ edges, and let $d$ be the maximum degree of any vertex in $G$.
Clearly, at least $d$ colors are needed to edge-color $G$ because the edges incident to any vertex must be assigned different colors.
Vizing's Theorem~\cite{Viz64} states that $G$ is always $d+1$-edge-colorable.\footnote{Note that Vizing's theorem is only true for simple graphs; multigraphs can require up to $\lfloor 3d/2 \rfloor$ colors~\cite{Sha49}.}
This leaves a very small gap: $G$ requires either $d$ or $d+1$ colors.
Unfortunately, distinguishing between graphs that require $d$ colors and those that require $d+1$ colors is NP-hard~\cite{Hol81}.
Since it seems infeasible to always use the optimal number of colors is infeasible, so we settle for $d+1$.

It is then natural to aim for using $d+1$ colors.
A 1985 technical report of Gabow, Nishizeki, Kariv, Leven, and Terada~\cite{GNKLT85} presents an algorithm for edge-coloring simple graphs using $d+1$ colors.\footnote{See also~\cite{Arj82} and the discussion of that work in~\cite{GNKLT85}.}
Their algorithm runs in $O(m \sqrt{n \log n})$ time.
Although there has been significant progress for algorithms on some restricted graph classes (bipartite graphs, for example~\cite{COS01}), there have been no faster algorithms proposed to edge-color arbitrary simple graphs in $d+1$ colors.
This paper improves the runtime to $O(m \sqrt n)$, first by a deterministic algorithm, and then by a much simpler randomized one.

\subsection*{Contributions}

In this paper, we present three algorithms for edge-coloring graphs: \textproc{Greedy-Euler-Color}, \textproc{Euler-Color}, and \textproc{Random-Euler-Color}.
They are founded on a general recursive strategy for edge-coloring that we lay out in detail. The strategy is a generalization of the methods used by Gabow et al. in~\cite{GNKLT85}, and may help in developing other edge-coloring methods.

\bi
\item
\textproc{Greedy-Euler-Color} finds a $(2d-1)$-edge-coloring of a multigraph in $O(m\log d)$ time.
It is simple and efficient, and perhaps the most natural possible application of our strategy.
Note that $O(m\log d)$ time for $(2d-1)$-edge-coloring can be achieved using a known dynamic edge-coloring algorithm~\cite{BCHN18}.

\item
\textproc{Euler-Color} is an algorithm that uses $d+1$ colors and runs in $O(m\sqrt n)$ time, improving on the result of~\cite{GNKLT85}.
Surprisingly, this is achieved with one subtle change to the algorithm of~\cite{GNKLT85} and some additional analysis which removes a factor of $\sqrt{\log n}$ from the runtime.

\item
\textproc{Random-Euler-Color} is a much simpler randomized version of \textproc{Euler-Color} that uses $d+1$ colors and runs in $O(m \sqrt n)$ time with probability $1 - e^{-\poly(n)}$.
\ei

We consider \textproc{Random-Euler-Color} the primary contribution of this paper, due to its efficiency and simplicity.
To approach the same runtime, both \textproc{Euler-Color} and the algorithm of~\cite{GNKLT85} rely on a complex subroutine (\textproc{Color-Many} in this paper and \textproc{Parallel-Color} in~\cite{GNKLT85}). \textproc{Random-Euler-Color} avoids that type of procedure using randomness, and runs just as quickly with very high probability.

We also present three coloring subroutines: \textproc{Color-One}, \textproc{Random-Color-One}, and \textproc{Color-Many}.
They are needed by \textproc{Euler-Color} and \textproc{Random-Euler-Color}.

The paper is organized as follows. Section~\ref{sec:euler} presents the edge-coloring strategy. Sections~\ref{sec:greedyeuler},~\ref{sec:eulereuler}, and~\ref{sec:randomeuler} introduce the three main algorithms of this paper and analyze them.
The rest of the paper tackles the subroutines.
Section~\ref{sec:defs} introduces the necessary definitions, Section~\ref{sec:one} presents \textproc{Color-One}, Section~\ref{sec:random} presents \textproc{Random-Color-One}, and Section~\ref{sec:many} presents \textproc{Color-Many}.

\section{Edge-coloring with Recursion}
\label{sec:euler}

In this section, we present our edge-coloring strategy, setting the stage for the main algorithms of this paper.
It can be approached with little preliminary discussion.

\subsection{Preliminaries}

Let $G=(V,E)$ be a undirected multigraph with $n$ vertices, $m$ edges, and maximum degree $d$ (that is, $d = \max_{v \in V} \deg(v)$).
As shorthand, we shall write $uv$ or $vu$ to denote the edge $\{u, v\}$.
We assume without loss of generality that $G$ has no isolated vertices (vertices with no incident edges), and so $m \geq n/2$.\footnote{Isolated nodes are irrelevant to edge-coloring, so this is fair assumption.}

Assume we maintain a partial edge-coloring of $G$.
An edge is \emph{colored} if it is assigned a color in the edge-coloring and otherwise it is \emph{uncolored}.
Let $\ell$ denote the number of uncolored edges in $G$.
We say a color $\alpha$ is \emph{missing} at vertex $v$ if no edge incident to $v$ is colored by $\alpha$.
Let $M(v)$ be the set of colors missing at $v$.

\subsection{Strategy}

All of the algorithms in this paper employ the same strategy that we now describe.
It is a generalization of a recursive method originally applied to bipartite graphs in~\cite{Gab76} and later adapted to general graphs~\cite{GNKLT85, Arj82}.
The strategy can be applied to both simple graphs and multigraphs.

The strategy uses a recursive divide-and-conquer approach.
It splits the original graph into two edge-disjoint subgraphs, recursively produces an edge-coloring of each one, and then stitches together the colorings at a reduced cost.
The parameter $d$ (the maximum degree of a vertex in the graph) is important: The subgraphs will have maximum degree roughly half that of the original graph, and consequently they can be edge-colored more efficiently and with fewer colors.

The strategy is flexible in the number of colors that can be used to edge-color the graph (the \emph{available} colors).
Let $\#_c(d)$ be the number of available colors.
This paper will only study $\#_c(d) = d+1$ and $\#_c = 2d-1$, and in particular we treat $\#_c$ as a function of $d$, but in principle $\#_c$ could be any function of the graph.

The strategy works in four steps: \stp{Partition}, \stp{Recurse}, \stp{Prune}, and \stp{Repair}.
The only freedom lies in the \stp{Repair} step; the other three steps will be the same in all of our algorithms.
\begin{description}
\item[\emph{Partition.}]
Partition the edges of the graph into two edge-disjoint subgraphs so that each subgraph has half of the edges ($\leq \lceil m/2 \rceil$) and the maximum degree of each subgraph is at most half the maximum degree of the original graph ($\leq \lceil d/2 \rceil$).
This can be done in $O(m)$ time by the method in Section~\ref{sec:partition}.

\item[\emph{Recurse.}]
Recurse to edge-color each subgraph using $\#_c(\lceil d/2 \rceil)$ colors.
We require that the two edge-colorings use different sets of colors; if necessary, relabel one of the edge-colorings so that no color is used in both subgraphs.

Combine the edge-colorings by taking their union.
As the subgraphs are edge-disjoint and their edge-colorings use disjoint color sets, the union is an edge-coloring of the original graph.


\item[\emph{Prune.}]
At this point, the edge-coloring uses up to $2\#_c(\lceil d/2 \rceil)$ colors, which may be more than the allowed $\#_c(d)$ colors. (For example, if $\#_c(d) = d+1$, the current coloring may use up to $d+3$ colors.)
We must eliminate these extra colors.

Let $t$ be number of extra colors.
Choose the $t$ \emph{least common} colors and uncolor all edges with those colors.
Now only $\#_c(d)$ colors are used, and the number of uncolored edges is at most $mt/(\#_c+t)$.

\item[\emph{Repair.}]
To complete the coloring, we must somehow assign colors to those few uncolored edges without using more than the available colors.
There is no prescribed way to do this step.
We will use the subroutines mentioned at the start of this section for this purpose.
\end{description}

\noindent Finally, if $d \leq 1$ then edge-color the graph using a single color. This serves as the base case of the recursion.
The template for this strategy is presented below in pseudocode.
\begin{algorithmic}[1]
\Procedure {Euler-Template}{$G$}
\Statex {\emph{Base Case:}}
\State {If $d \leq 1$, color every edge by the same color and return}
\Statex {\textbf{\stp{Partition}}}
\State {Decompose $G$ into subgraphs $G_1$ and $G_2$}
\Statex {\textbf{\stp{Recurse}}}
\State\AlgComment {Recursively edge-color $G_1$ and $G_2$ using different sets of colors}
\State {\textproc{Euler-Template}$(G_1)$}
\State {\textproc{Euler-Template}$(G_2)$}
\State\AlgComment {$G$ is edge-colored by $\leq 2 \#_c(\lceil d/2 \rceil)$ colors}
\Statex {\textbf{\stp{Prune}}}
\While {more than $\#_c$ colors are used}
\State {Choose the least common color $\gamma$}
\State {Uncolor all edges colored by $\gamma$}
\EndWhile
\State\AlgComment {At most $\#_c$ colors are used now and $\ell \leq mt/(\#_c+t)$}
\Statex {\textbf{\stp{Repair}}}
\State\AlgComment {Somehow color all the uncolored edges using $\#_c(d)$ colors}
\State {\hspace{60pt}$\vdots$}
\EndProcedure
\end{algorithmic}

Now that the strategy is stated, the \stp{Repair} step will be the only interesting component of our algorithms.
The other steps will not change.
As long as the \stp{Repair} step stays within $\#_c(d)$ colors and colors all the uncolored edges, the procedure must output a complete edge-coloring.

The potential of this strategy lies in the fact that only a small number of uncolored edges need to be colored at each step, and so much of the actual work of edge-coloring is done on small graphs with few edges and low maximum degree.
Our \stp{Repair} steps run much faster on these graphs.
The efficiency of the \stp{Repair} step will determine the efficiency of the whole algorithm, since the other non-recursive work done in each iteration can be executed in $O(m)$ time.

\subsection{Euler Partitions}
\label{sec:partition}

Before we proceed, let us summarize how a multigraph can be split into two edge-disjoint subgraphs as needed by the \emph{Partition} step.

An \emph{Euler partition} is a partition of the edges of a multigraph into a set of edge-disjoint tours\footnote{A \emph{tour} is a walk that does not repeat an edge.} such that every odd-degree vertex is the endpoint of exactly one tour, and no even-degree vertex is an endpoint of a tour.
Such a partition can be found greedily in $O(m)$ time, simply by removing maximal tours of the graph until no edges remain.
The edges of the graph can then be split between the subgraphs by traversing each tour and alternately assigning the edges to the two subgraphs.
This yields two subgraphs, each having half of the edges (either $\lceil m/2\rceil$ or $\lfloor m/2\rfloor$) and maximum degree at most $\lceil d/2 \rceil$.
The entire procedure takes $O(m)$ time.

\section{Greedy-Euler-Color}
\label{sec:greedyeuler}

Our first application of the strategy is a very simple algorithm that edge-colors a multigraph in $2d-1$ colors in $O(m\log d)$ time.
We state the \stp{Repair} step; the other steps are unchanged from the template.

The \stp{Repair} step uses the local coloring routine \textproc{Greedy-Color}, which takes as input an uncolored edge $uv$ and colors it.
\textproc{Greedy-Color} simply checks each of the $2d-1$ available colors to find one that is missing at both $u$ or $v$.
Such a color must exist by pigeonhole principle: There are at least $d$ missing colors at each endpoint and $2d-1$ colors in total.
It colors $uv$ by that color.
The result is a legal partial edge-coloring, i.e. no color is reused at any vertex.
This takes $O(d)$ time in the worst case.

\textproc{Greedy-Color} is stated here in pseudocode. Recall that $M(v)$ is the set of colors missing at vertex $v$.

\begin{algorithmic}[1]
\Procedure {Greedy-Color}{$uv$}
\For {each color $\alpha \in \{1, \dots, 2d-1\}$}
\If {$\alpha \in M(u)$ and $\alpha \in M(v)$}
\State {Color $uv$ by $\alpha$}
\State {\textbf{return}}
\EndIf
\EndFor
\EndProcedure
\end{algorithmic}

The \stp{Repair} step just applies \textproc{Greedy-Color} to every uncolored edge. The result is a $(2d-1)$-edge-coloring.

\begin{theorem}
\textproc{Greedy-Euler-Color} edge-colors $G$ by $2d-1$ colors in $O(m \log d)$ time.
\end{theorem}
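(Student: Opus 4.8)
The plan is to establish correctness and then bound the running time by setting up a recurrence on the total work, using the key fact from the \emph{Prune} step that only $\ell \leq mt/(\#_c + t)$ edges remain uncolored after recursion. Here $\#_c(d) = 2d-1$, and the number of extra colors is $t = 2\#_c(\lceil d/2\rceil) - \#_c(d) = 2(2\lceil d/2\rceil - 1) - (2d-1) = 2 \cdot 2\lceil d/2 \rceil - 2 - 2d + 1$, which is at most $1$ when $d$ is even and at most $3$ when $d$ is odd (in fact $t \le 3$ always, often less). The point is that $t = O(1)$, so $\ell \leq mt/(\#_c + t) = O(m/d)$ uncolored edges remain, each of which \textproc{Greedy-Color} handles in $O(d)$ time; thus the \emph{Repair} step costs $O(\ell \cdot d) = O(m)$ time.

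For correctness, I would first argue that \textproc{Greedy-Color} always succeeds: at any point in the \emph{Repair} step the coloring uses exactly $\#_c(d) = 2d-1$ colors and is a legal partial coloring, so each endpoint of the uncolored edge $uv$ has degree at most $d$ and hence misses at least $2d-1-(d-1) = d$ colors. By inclusion–exclusion, $|M(u) \cap M(v)| \geq d + d - (2d-1) = 1$, so a common missing color exists and coloring $uv$ by it keeps the partial coloring legal. Applying this to every uncolored edge yields a complete legal $(2d-1)$-edge-coloring. One should also check the base case $d \le 1$ trivially uses one color $\le 2d-1$ (for $d=1$), and confirm that after \emph{Prune} at most $\#_c(d)$ colors remain — this is immediate from the \emph{Prune} loop's termination condition.

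For the running time, let $T(m,d)$ denote the time on a graph with $m$ edges and maximum degree $d$. The \emph{Partition} step costs $O(m)$ by Section~\ref{sec:partition}; the two recursive calls are on graphs with $\le \lceil m/2 \rceil$ edges and maximum degree $\le \lceil d/2 \rceil$; the relabeling, \emph{Prune} (scanning/counting colors), and \emph{Repair} steps together cost $O(m)$ as argued above. This gives
\begin{equation*}
T(m,d) \le 2\, T(\lceil m/2 \rceil, \lceil d/2 \rceil) + O(m).
\end{equation*}
The recursion depth in $d$ is $O(\log d)$ — after $\log_2 d$ halvings of $d$ we reach the base case $d \le 1$ — and the total number of edges summed across all nodes at any fixed depth is $O(m)$ (the subgraphs at each level partition the edges, up to the $\lceil\cdot\rceil$ rounding, which only inflates things by a constant factor since $m \ge n/2$ bounds the number of singleton-type blow-ups). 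Hence $T(m,d) = O(m \log d)$, as claimed.

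The main obstacle I anticipate is the bookkeeping around the ceilings and the precise value of $t$: one must verify that $t$ stays bounded by a constant through all recursion levels (so that the "$O(m/d)$ uncolored edges, $O(d)$ each" cancellation really holds), and that the $\lceil m/2 \rceil$ rounding does not cause the per-level edge total to grow — this is handled by the standard observation that the recursion tree has $O(\log d) = O(\log m)$ levels and each level's rounding contributes at most a constant extra term, or alternatively by noting the recursion bottoms out once $m$ is small. Everything else is routine.
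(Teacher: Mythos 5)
Your proposal is correct and fills in exactly the outline the paper gives (and leaves as an exercise): the \stp{Repair} step costs $O(m)$ since at most $O(m/d)$ edges remain uncolored and \textproc{Greedy-Color} costs $O(d)$ each, and the recursion tree has depth $O(\log d)$ with $O(m)$ total work per level because the subgraphs at each level partition $E$. One small arithmetic slip: for $\#_c(d) = 2d-1$ the number of extra colors is $t = 4\lceil d/2\rceil - 2d - 1$, which is $-1$ (i.e.\ effectively $0$, no pruning) when $d$ is even and $1$ when $d$ is odd, not ``at most $1$'' and ``at most $3$'' respectively --- but your conclusion $t = O(1)$, and hence $\ell = O(m/d)$, still stands.
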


The proof is straightforward: The \stp{Repair} step takes $O(m)$ time, and $d$ is roughly halved with each recursive call.
This yields a recursion tree of depth $O(\log d)$, where the total work done at each level in the tree is $O(m)$.
The details are left as an exercise.

\section{Euler-Color}
\label{sec:eulereuler}
Assume henceforth that $G$ is simple.
\textproc{Euler-Color} builds a $(d+1)$-edge-coloring of $G$ in $O(m\sqrt{n})$ time.
Here the \stp{Repair} step will use a combination of the subroutines \textproc{Color-One} and \textproc{Color-Many}.
Both subroutines increase the number of colored edges in a given partial edge-coloring (without using more than $d+1$ colors).
They are presented in Sections~\ref{sec:one} and~\ref{sec:many}, respectively.
\begin{itemize}
\item\textproc{Color-One} colors a given uncolored edge in $O(n)$ time.
\item\textproc{Color-Many} colors $\Omega(\ell/d)$ of the uncolored edges in $O(m)$ time.
(Recall that $\ell$ denotes the number of uncolored edges in the graph.)
\end{itemize}

Notice that \textproc{Color-Many} takes amortized $O(md/\ell)$ time for each edge it colors.
Thus, as long as $\ell = \Omega(md/n)$, \textproc{Color-Many} is more efficient than \textproc{Color-One} in terms of the cost per edge colored.
This observation suggests the following \stp{Repair} step:
Use \textproc{Color-Many} while $\ell \geq 2md/n$, and then switch to \textproc{Color-One} for the remainder.


\begin{algorithmic}[1]
\State {\textbf{\stp{Repair}} (Euler-Color)}
\State\AlgComment {At most $d+1$ colors are used and $\ell \leq \frac{2m}{d+3}$}
\While {$\ell \geq 2md/n$}
\State {\textproc{Color-Many}$()$}
\EndWhile
\While {$\ell > 0$}
\State {Choose an uncolored edge $e$}        
\State {\textproc{Color-One}$(e)$}
\EndWhile
\end{algorithmic}

As discussed in the introduction, \textproc{Euler-Color} arises as a small modification to the algorithm of the same name in~\cite{GNKLT85} that runs in $O(m\sqrt{n\log n})$ time.
So what is the difference between this procedure and the original \textproc{Euler-Color} of~\cite{GNKLT85}?
They use a similar recursive strategy and two subroutines similar to ours: \textproc{Recolor} (analogous to \textproc{Color-One}) and \textproc{Parallel-Color} (analogous to \textproc{Color-Many}).
We prefer our versions of the subroutines, but we could use \textproc{Recolor} and \textproc{Parallel-Color} instead.
The only significant difference is that \textproc{Parallel-Color} is not integrated into the recursive strategy.
Their \stp{Repair} step applies \textproc{Recolor} to all uncolored edges.
When $d$ gets sufficiently small ($d \leq \sqrt{n/\log n}$), the recursive strategy is abandoned and the graph is edge-colored using only \textproc{Parallel-Color}.
This difference increases the runtime slightly to $O(m\sqrt{n \log n})$.
We bring \textproc{Color-Many} into the \stp{Repair} step to further exploit the tradeoff between these two subroutines, yielding the improved time.

We now analyze the complexity of \textproc{Euler-Color}.

\begin{theorem}
\label{thm:euler}
\textproc{Euler-Color} edge-colors a simple graph using $d+1$ colors in $O(m \sqrt{n})$ time.
\end{theorem}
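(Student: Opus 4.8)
The plan is to set up a recurrence for the total running time $T(m,d,n)$ of \textproc{Euler-Color} on a graph with $m$ edges, maximum degree $d$, and $n$ vertices (the vertex count $n$ stays fixed along the recursion, while $m$ and $d$ halve). The non-recursive work splits into three parts: the \stp{Partition}, \stp{Recurse}-stitching, and \stp{Prune} steps, which together cost $O(m)$; the \textproc{Color-Many} phase; and the \textproc{Color-One} phase. So I would write
\begin{equation*}
T(m,d,n) \le T(m_1, \lceil d/2\rceil, n) + T(m_2, \lceil d/2\rceil, n) + O(m) + (\text{Repair cost}),
\end{equation*}
with $m_1 + m_2 = m$ and $m_1, m_2 \le \lceil m/2 \rceil$, and then bound the Repair cost.

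The key step is bounding the Repair cost at a single node of the recursion tree. After \stp{Prune} we have $\ell \le \frac{2m}{d+3} = O(m/d)$ uncolored edges. The \textproc{Color-One} phase runs only once $\ell < 2md/n$, and each call costs $O(n)$ and reduces $\ell$ by one, so that phase costs $O(n \cdot md/n) = O(md)$. The \textproc{Color-Many} phase: each invocation costs $O(m)$ and colors $\Omega(\ell/d)$ edges, i.e. it multiplies $\ell$ by a factor of at most $1 - \Omega(1/d)$; to drive $\ell$ down from $O(m/d)$ to $2md/n$ takes $O(d \log(\text{ratio})) = O(d \log n)$ invocations, for a cost of $O(md\log n)$ — which is too much by a $\log n$ factor. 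The fix, and the point I expect to be the main obstacle, is the amortized/potential argument that removes this logarithm: one should not pay $O(m)$ for every \textproc{Color-Many} call but rather charge its cost against the uncolored edges it eliminates. Since \textproc{Color-Many} colors $\Omega(\ell/d)$ edges in $O(m)$ time, it colors each edge at amortized cost $O(md/\ell)$, and while $\ell \ge 2md/n$ this is $O(md/\ell) = O(n/2) = O(n)$ per edge — the same per-edge cost as \textproc{Color-One}. Hence the entire \stp{Repair} step colors at most $\ell \le 2m/(d+3)$ edges at amortized cost $O(n)$ each, giving Repair cost $O(mn/d)$ at a node with parameters $(m,d)$. (One must check that the $O(m)$ cost of a \textproc{Color-Many} call that colors fewer than its guaranteed quota — e.g. the last one — is absorbed, which it is, since there are $O(\log n)$ such "boundary" situations across the whole recursion or, more simply, one extra $O(m)$ term per node is already accounted for in the $O(m)$ of the other steps.)

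Next I would solve the recurrence. At depth $i$ of the recursion tree there are up to $2^i$ nodes, each with $\le \lceil m/2^i\rceil$ edges and max degree $\le \lceil d/2^i \rceil$; the recursion bottoms out at depth $O(\log d)$ when $d \le 1$. Summing the per-node costs: the $O(m)$ terms contribute $O(m)$ per level and $O(m \log d)$ total. The Repair terms contribute, at level $i$, about $2^i \cdot O\!\left(\frac{(m/2^i)\, n}{d/2^i}\right) = O\!\left(\frac{mn}{d}\right)$ — wait, this is $O(mn/d)$ per level, which over $O(\log d)$ levels gives $O((mn/d)\log d)$, still not obviously $O(m\sqrt n)$. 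Here is where the case analysis on $d$ versus $\sqrt n$ enters, and it is the real crux. When $d \ge \sqrt n$, the top-level Repair cost $O(mn/d) \le O(m\sqrt n)$ dominates and the geometric decay of $n/d \to n/(d/2^i)$... no: as we recurse $d$ shrinks so $n/d$ grows. The correct accounting is: Repair cost at a node is $O(mn/d)$, and as we go down a level $m$ halves but $d$ also halves, so $mn/d$ stays the same per node and doubles per level — bad. So the recursion must be cut off: once $d$ drops below $\sqrt n$, the subgraph has $m' \le m/\sqrt{n/d'}$... I would instead argue directly that a subgraph at the point where its max degree first falls to $d' \le \sqrt{n}$ has at most $m' = O(m d'/d) \le O(m/\sqrt n \cdot \sqrt n)$ edges — more carefully, a subgraph reached after $i$ halvings has $\le m/2^i$ edges, and its total Repair-plus-recursive cost is $O((m/2^i)\sqrt n)$ provided its max degree is $\le \sqrt n$, which one proves by the same recurrence now in the regime $d' \le \sqrt n$ where Repair cost $O(m'n/d') $ is increasing down the tree but bounded at the leaves. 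The clean way: prove by induction on $\log d$ that $T(m,d,n) = O(m\sqrt n + m\sqrt n \cdot \frac{d}{\sqrt n})$ when $d \ge \sqrt n$ and $T(m,d,n) = O(m \sqrt n)$ when $d \le \sqrt n$, using that the two children have half the edges and that $\sqrt n \cdot (\text{number of edges})$ is superadditive-balanced under the partition. I expect the bookkeeping to reconcile the $O(mn/d)$ Repair term (large when $d$ small) against the shrinking edge count $m/2^i$ (small when deep in the tree, which is exactly when $d$ is small) — the balance point $d = \sqrt n$ is where both the "\textproc{Color-One}-dominated" deep part and the "$O(m)$-per-level" shallow part contribute $O(m\sqrt n)$, and verifying this crossover cleanly is the main obstacle.

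Finally, I would assemble the bound: the recursion contributes $O(m \log d) = O(m \log n)$ from the linear-time steps, plus $O(m\sqrt n)$ from the Repair steps across all levels once the $d \lessgtr \sqrt n$ split is handled, and since $\log n = O(\sqrt n)$ the total is $O(m\sqrt n)$, as claimed. Correctness is immediate from the template discussion: \textproc{Color-Many} and \textproc{Color-One} never use more than $d+1$ colors and the Repair loop terminates with $\ell = 0$, so the output is a valid $(d+1)$-edge-coloring; only the runtime requires the argument above.
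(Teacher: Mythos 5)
Your high-level plan is sound — set up the recurrence, bound the per-node \stp{Repair} cost, and split on $d$ versus $\sqrt n$ — and your observation that \textproc{Color-Many} costs $O(md/\ell) = O(n)$ amortized per edge while $\ell \ge 2md/n$ is correct. But the argument has a real gap: you replace the \stp{Repair}-cost estimate with the looser bound $O(mn/d)$ uniformly at every node, and this bound does not sum to $O(m\sqrt n)$ over the recursion tree. You notice this yourself (``$mn/d$ stays the same per node and doubles per level --- bad''), but the patch you reach for does not repair it. For a subgraph $H$ at depth $t$ with $d_H \approx d/2^t$ and $m_H$ edges, the loose bound gives per-depth cost $\sum_H m_H n/d_H \approx mn\,2^t/d$, which grows exponentially in $t$ and totals $\Theta(mn)$, not $O(m\sqrt n)$. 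The proposed inductive invariant $T(m,d,n) = O(m\sqrt n + m\sqrt n\cdot d/\sqrt n)$ also does not fix this: at the root with $d$ as large as $n-1$ it evaluates to $O(mn)$, so even if the induction closed it would not prove the theorem.

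The missing idea is to \emph{keep} the tighter \stp{Repair} bound in the low-degree regime, $O(m_H d_H \log(2\sqrt n/d_H))$ for $d_H < \sqrt n$ (which you correctly derive from ``\textproc{Color-Many} runs $O(d\log(\sqrt n/d))$ times at $O(m)$ each, then \textproc{Color-One} for $O(md)$''), rather than round it up to $O(m_H n/d_H)$. The crucial feature of $m_H d_H\log(2\sqrt n/d_H)$ is that $d_H$ sits in the \emph{numerator}: when a low-degree node $H$ splits into children, $m_H$ and $d_H$ both roughly halve while the $\log$ factor barely changes, so the cost drops by a factor of about $4$. The $2^t$ descendants at depth $t$ below $H$ therefore have combined cost about $(2/4)^t\cost(H)$, a convergent geometric series. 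Your weak bound has $d_H$ in the denominator, which flips the decay into growth. The $\log(2\sqrt n/d)$ factor is not an obstacle to be amortized away; it is harmless precisely because $m_H d_H$ decays geometrically. With this in hand, the account closes exactly as in the paper: the high-degree nodes ($d_H\ge\sqrt n$) span at most $\log(2d/\sqrt n)$ levels and their $O(m_H n/d_H)$ costs sum to $O(m\sqrt n)$ using $d_H \approx d/2^t$; the maximal low-degree subgraphs have $d_H \in [\sqrt n/2, \sqrt n)$, partition the edges of $G$, and contribute $O(m\sqrt n)$; and their descendants are absorbed by the geometric decay.
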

\begin{proof}
The correctness of \textproc{Euler-Color} is clear from the strategy, for the \stp{Repair} step does not terminate until the graph is completely colored and it cannot use more than $d+1$ colors during that step.

Let us analyze the runtime.
At the start of the \stp{Repair} step, there are at most $2m/(d+3)$ uncolored edges.
The \stp{Repair} step behaves differently depending on whether $d$ is much larger or smaller than $\sqrt{n}$.

If $d < \sqrt{n}$, then \textproc{Color-Many} is run until the number of uncolored edges is reduced from $\ell \leq 2m/(d+3)$ to $2md/n$.
Let us bound the number of repetitions:
A constant number of applications of \textproc{Color-Many} decreases $\ell$ by a factor of $(1 - 1/d)$.
Since $(1-1/d)^d < 1/e$,
\[\ell\left(1 - \frac{1}{d}\right)^{2d\ln(2\sqrt{n}/d)} < \left(\frac{2m}{d+3}\right) \left(\frac{1}{e}\right)^{2\ln(2\sqrt{n}/d)} = \left(\frac{2m}{d+3}\right) \left(\frac{d^2}4{n}\right) < \frac{2md}{n}\]
Thus it requires $O(d\log(2\sqrt{n}/d))$ repetitions, where each repetition takes $O(m)$ time.
Afterwards it colors the remaining $md/n$ edges in $O(md)$ time using \textproc{Color-One}.
Hence the \stp{Repair} step takes $O(md \log(2\sqrt{n}/d))$ time.

If $d \geq \sqrt{n}$, then \textproc{Color-Many} is not run at all since $\ell \leq 2m/(d+3) \leq 2md/n$.
In this case, the \stp{Repair} step takes $O(mn/d)$ time.

Now consider the recursion tree for \textproc{Euler-Color}($G$).
This is a binary tree whose root corresponds to $G$, and the two children of the root correspond to the subgraphs $G_1$ and $G_2$ constructed during the \stp{Partition} step on $G$.
The children of the nodes for $G_1$ and $G_2$ are defined recursively.
A leaf corresponds to a subgraph of maximum degree 1.
Identify each node with its corresponding subgraph.
For a subgraph $H$, let $m_H$ and $d_H$ respectively denote the number of edges and maximum degree of $H$.
Every subgraph contains $n$ nodes.

Define the \emph{cost} of $H$ to be
\[\cost(H) \coloneqq \begin{cases}
~~\quad m_Hn/d_H \qquad&\text{if $d_H \geq \sqrt{n}$}\\
m_H d_H\log(2\sqrt{n}/d_H) &\text{if $d_H < \sqrt{n}$}
\end{cases}\]
This cost captures (up to a constant) all the time spent on the subgraph $H$, according to our earlier analysis.
Thus, the total time for \textproc{Euler-Color} is proportional to the sum of costs of all subgraphs in the recursion tree.

Consider the sum of costs over \textit{high-degree} subgraphs: the subgraphs $H$ with $d_H \geq \sqrt{n}$.
(We can assume here that $d \geq \sqrt n$, for otherwise there would be no high-degree subgraphs.)
These subgraphs have depth at most $\log(2d/\sqrt{n})$, since the vertex degrees roughly halve between a subgraph and its child.
Moreover, the subgraphs at a given depth partition the edges of $G$, so the sum of their edge counts is exactly $m$.
Hence, the total cost of all high-degree subgraphs is at most \[\sum_{t=0}^{\log(2d/\sqrt{n})} \frac{mn}{d} = \frac{mn}{d}\log(2d/\sqrt{n}) \leq \frac{mn}{\sqrt{n}} \log(2\sqrt{n}/\sqrt{n}) = m\sqrt{n}\]

Now consider the total cost of the remaining \textit{low-degree} subgraphs, which have $d_H < \sqrt{n}$.
Fix such a subgraph $H$, and let $H_1$ and $H_2$ be its children.
Observe that
\[\cost(H_1), \cost(H_2) \leq \left(\frac{m_H+1}{2}\right)\left(\frac{d_H+1}{2}\right)\log(4\sqrt{n}/d) \approx \frac{m_Hd_H}{4} \log(2\sqrt{n}/d) \approx \cost(H)/4\]
That is, the cost shrinks by a factor of 4, roughly, from a parent to its child.
Even though it does not shrink by a factor of 4 exactly, it is not hard to show that it shrinks by \textit{at least} a factor of 3, for sufficiently large $d_H$.
Hence, cost decreases geometrically with depth, shrinking by a factor of at least 3 at each level (except possibly near the bottom of the tree, vertex degrees are less than some constant).

Why does this matter? Each subgraph $H$ has 2 children, 4 grandchildren, 8 great-grandchildren, etc. So $H$ has $2^t$ descendants at a depth $t$ levels lower than $H$.
But the cost of each descendant is at most $\frac{1}{3^t} \cost(H)$.
Hence the \textit{combined} cost of the descendants of $H$ decreases as a geometric series: $\sum_{t=0} \left(\frac{2}{3}\right)^t \cost(H) = O(\cost H)$.\footnote{A constant number of levels near the bottom of the tree (for which the $d$ value is constant) may not follow this geometric trend, but their total cost is negligible.}

Therefore, the cost of all low-degree subgraphs is only a constant factor larger than the cost of the ``maximal'' low-degree subgraphs (those whose parents are high-degree).

In the case that $d \geq \sqrt n$, there are multiple maximal subgraphs, and partition the edges of $G$.
Moreover, each maximal subgraph $H$ has $d_H \in [\sqrt{n}/2, \sqrt{n})$.
Hence, their combined cost is at most
$m \sqrt{n}\log(4\sqrt{n}/\sqrt{n}) = O(m\sqrt{n})$.

If $d < \sqrt n$, $G$ is the only such maximal subgraph, and so the total cost is $O(\cost(G)) = O(md\log(2\sqrt{n}/d))$. Observe that $md\log(2\sqrt{n}/d) \leq m\sqrt{n}$ for all $d <\sqrt{n}$. Hence the total cost is $O(m\sqrt{n})$.
\end{proof}

The trouble with \textproc{Euler-Color} is that \textproc{Color-Many} is complicated, much more so than the rest of the algorithm.
Thankfully we can simplify enormously by injecting a little randomness.

\section{Random-Euler-Color}
\label{sec:randomeuler}

We now present \textproc{Random-Euler-Color}, a randomized version of \textproc{Euler-Color}.
It also uses $d+1$ colors and runs in $O(m \sqrt n)$ time with high probability.

We use a subroutine \textproc{Random-Color-One}, a randomized extension to \textproc{Color-One}, that is presented in Section~\ref{sec:random}.\footnote{ \textproc{Random-Color-One} essentially applies \textproc{Color-One} to an uncolored edge \textit{chosen uniformly at random}.
It also chooses one other variable (internal to \textproc{Color-One}) uniformly at random. These small changes are enough to improve the expected run time.}
\bi
\item
\textproc{Random-Color-One} colors an uncolored edge in $O(md/\ell)$ \emph{expected} time, or $O(n)$ time in the worst case.
\ei

Notice that the expected time matches the $O(md/\ell)$ amortized time-per-edge-colored that \textproc{Color-Many} achieves.
It also has the same $O(n)$ worst-case time bound of \textproc{Color-One}, meaning that \textproc{Random-Color-One} is an effective drop-in replacement for \textit{both} \textproc{Color-One} and \textproc{Color-Many}.

The \stp{Repair} step just applies \textproc{Random-Color-One} until the edge-coloring is complete.

\begin{algorithmic}[1]
\State {\textbf{\stp{Repair}} (Random-Euler-Color)}
\State\AlgComment {At most $d+1$ colors are used and $\ell \leq \frac{2m}{d}$}
\While {$\ell > 0$}
\State {\textproc{Random-Color-One}$()$}
\EndWhile
\end{algorithmic}

Now we analyze the expected runtime --- but the hard work has already been done!
The \stp{Repair} step of \textproc{Euler-Color} used a combination of \textproc{Color-One} and \textproc{Color-Many} to color the $O(m/d)$ uncolored edges.
We notice that \textproc{Random-Color-One} performs as well in expectation as \textit{either} subroutine.
Hence, the \stp{Repair} step of \textproc{Random-Euler-Color} can be no slower than that of \textproc{Euler-Color}.
Indeed, direct analysis of the \stp{Repair} step leads to the same cost function that appeared in our analysis of \textproc{Euler-Color}:
\[\sum_{\ell=1}^{2m/d} \min\left(n, \frac{md}{\ell}\right) =
\begin{cases}
~~\quad O(mn/d) \qquad&\text{ if $d \geq \sqrt n$}\\
O(md\log(2\sqrt{n}/d)) &\text{ if $d < \sqrt n$}
\end{cases}\]
By Theorem~\ref{thm:euler}, we obtain a tight bound on the expected runtime of \textproc{Random-Euler-Color}.

\begin{lemma}
\label{lem:randomeuler}
\textproc{Random-Euler-Color} edge-colors a simple graph by $d+1$ colors in $O(m\sqrt{n})$ expected time.
\end{lemma}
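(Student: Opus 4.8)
The plan is to replicate the proof of Theorem~\ref{thm:euler} almost verbatim, replacing its deterministic accounting of the \stp{Repair} step with an expected-time accounting and then reusing its recursion-tree summation unchanged.

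First I would dispatch correctness: \textproc{Random-Color-One} colors one previously uncolored edge per call and never introduces a color beyond $d+1$, so the \stp{Repair} loop of \textproc{Random-Euler-Color} terminates with a complete $(d+1)$-edge-coloring, and since the \stp{Partition}, \stp{Recurse}, and \stp{Prune} steps are exactly the template's, the algorithm is correct. For the runtime I would set up the same binary recursion tree as in Theorem~\ref{thm:euler}, identifying each node with its subgraph $H$ and writing $m_H, d_H$ for its edge count and maximum degree; the \stp{Partition} step guarantees $m_H \le \lceil m_{\mathrm{parent}}/2\rceil$ and $d_H \le \lceil d_{\mathrm{parent}}/2\rceil$ deterministically, so the tree's shape and the quantities $m_H, d_H$ are exactly the randomness-free objects analyzed there. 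Writing $T_{\mathrm{rep}}(H)$ for the (random) time spent inside the \stp{Repair} loop of the call on $H$, the total running time is $\sum_H\bigl(O(m_H) + T_{\mathrm{rep}}(H)\bigr)$, where $O(m_H)$ covers the deterministic partition/prune/merge work; by linearity of expectation --- which requires no independence among the random choices made in different recursive calls --- $\E[\text{total}] = \sum_H\bigl(O(m_H)+\E[T_{\mathrm{rep}}(H)]\bigr)$.

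Next I would bound $\E[T_{\mathrm{rep}}(H)]$ for a single node. In that loop the uncolored count starts at some $\ell_0 \le 2m_H/d_H$ and decreases by exactly one per call; when the current count is $\ell$, that call runs in $O(\min(n,\, m_H d_H/\ell))$ expected time, combining the $O(m_H d_H/\ell)$ expected-time and $O(n)$ worst-case-time guarantees of Section~\ref{sec:random}, and crucially this is a \emph{conditional} bound --- it holds for every reachable state, depending only on the present graph and partial coloring. Summing over the at most $2m_H/d_H$ calls and taking expectations,
\[
\E[T_{\mathrm{rep}}(H)] \;=\; \sum_{\ell=1}^{2m_H/d_H} O\!\left(\min\!\left(n,\, \frac{m_H d_H}{\ell}\right)\right)
\;=\;
\begin{cases}
O(m_H n/d_H) & \text{if } d_H \ge \sqrt n,\\
O\!\left(m_H d_H \log(2\sqrt n/d_H)\right) & \text{if } d_H < \sqrt n,
\end{cases}
\]
by the same split at $\ell \approx m_H d_H/n$ used in the display preceding this lemma. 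Hence $\E[T_{\mathrm{rep}}(H)] = O(\cost(H))$ for the function $\cost(\cdot)$ defined in the proof of Theorem~\ref{thm:euler}.

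Finally I would sum over the tree. The deterministic overhead is $\sum_H O(m_H) = O(m\log d)$, since the subgraphs at each of the $O(\log d)$ depths partition $E(G)$, and $O(m\log d) = O(m\sqrt n)$; and $\sum_H \E[T_{\mathrm{rep}}(H)] = O\!\bigl(\sum_H \cost(H)\bigr)$, which the proof of Theorem~\ref{thm:euler} shows equals $O(m\sqrt n)$ --- separating high-degree subgraphs, whose edge counts telescope to $m$ at each of the $O(\log(2d/\sqrt n))$ relevant depths, from low-degree subgraphs, whose per-node costs shrink by a factor of at least $3$ down the tree so that their total is dominated by the maximal low-degree subgraphs. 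Adding the two contributions gives $\E[\text{total}] = O(m\sqrt n)$. I expect no genuine obstacle, since the heavy lifting --- the tree summation and the \textproc{Random-Color-One} guarantee --- is already in hand; the only point deserving care is the probabilistic bookkeeping: making explicit that linearity of expectation tolerates the dependence among the many recursive random choices, and that the $O(\min(n,\, m_H d_H/\ell))$ bound is a per-state conditional statement, so that the per-node sum is legitimate even though successive calls are not independent.
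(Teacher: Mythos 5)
Your proposal is correct and follows essentially the same approach as the paper: the paper also observes that each call to \textproc{Random-Color-One} costs $O(\min(n, md/\ell))$ in expectation, sums this over $\ell = 1, \dots, 2m/d$ to get the same per-node cost function as in Theorem~\ref{thm:euler}, and then invokes that theorem's recursion-tree analysis to conclude $O(m\sqrt n)$. The paper's treatment is terser; your proposal spells out the probabilistic bookkeeping (linearity of expectation under dependence, the conditional nature of the per-call bound) that the paper leaves implicit, but the underlying argument is identical.
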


Now that we have established the expected time, we can go further.
In fact, $O(m\sqrt n)$ time is achieved except with \emph{exponentially} small probability, and we can show this much stronger claim with a little work.

\begin{theorem}
\label{thm:highprob}
\textproc{Random-Euler-Color} runs in $O(m\sqrt n)$ time with probability $1 - e^{-\sqrt{m}}$.
\end{theorem}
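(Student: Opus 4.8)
The plan is to split the running time into a deterministic part and a random part, the latter being the sum $S$ of the times taken by the individual \textproc{Random-Color-One} calls, and then to establish an exponential tail bound for $S$ via a Bennett-/Chernoff-type estimate of its moment generating function. First I would observe that the \stp{Partition}, \stp{Prune}, and the non-recursive bookkeeping of the \stp{Recurse} step are deterministic and cost $O(m_H)$ at a subgraph $H$; summing over each level of the recursion tree (the subgraphs at each level collectively contain at most $m$ edges) and over the $O(\log d)$ levels gives $O(m\log d) = O(m\sqrt n)$ deterministic work in total, using $\log d = O(\sqrt n)$. All the randomness is confined to the \textproc{Random-Color-One} calls. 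Since each call colors exactly one uncolored edge, the recursion tree, the sequence of subgraphs visited, and the value of $\ell$ at each call are determined by the input. Enumerate the calls $1,\dots,N$, let $X_i$ be the time of the $i$-th call, and let $\mathcal F_{i-1}$ be the $\sigma$-algebra generated by the random choices of the first $i-1$ calls. The guarantee on \textproc{Random-Color-One} supplies a constant $C$ with $0 \le X_i \le Cn$ always and $\E[X_i \mid \mathcal F_{i-1}] \le \mu_i \coloneqq C\min\!\big(n,\; m_{H_i} d_{H_i}/\ell_i\big)$, where each $\mu_i$ is deterministic. Moreover $\sum_i \mu_i = O(m\sqrt n)$: the calls belonging to a fixed subgraph $H$ contribute $C\sum_{\ell=1}^{\ell_0} \min(n, m_H d_H/\ell) = O(\cost(H))$, and $\sum_H \cost(H) = O(m\sqrt n)$ is exactly the cost sum bounded in the proof of Theorem~\ref{thm:euler} — equivalently, this is the content of Lemma~\ref{lem:randomeuler}.

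With $S = \sum_{i=1}^N X_i$, the total runtime is $S$ plus a deterministic $O(m\sqrt n)$, so it suffices to show $\Pr[S \ge a\,m\sqrt n] \le e^{-\sqrt m}$ for a suitable constant $a$. For $\lambda > 0$, convexity of $x \mapsto e^{\lambda x}$ on $[0, Cn]$ gives $e^{\lambda x} \le 1 + \frac{x}{Cn}(e^{\lambda Cn} - 1)$, so $\E[e^{\lambda X_i} \mid \mathcal F_{i-1}] \le 1 + \frac{\mu_i}{Cn}(e^{\lambda Cn} - 1) \le \exp\!\big(\frac{\mu_i}{Cn}(e^{\lambda Cn} - 1)\big)$; because the $\mu_i$ are deterministic, telescoping with the tower property yields $\E[e^{\lambda S}] \le \exp\!\big(\frac{e^{\lambda Cn} - 1}{Cn}\sum_i \mu_i\big) \le \exp\!\big(\frac{e^{\lambda Cn} - 1}{Cn}\, c'm\sqrt n\big)$ for a constant $c'$. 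Markov's inequality then gives $\Pr[S \ge a\,m\sqrt n] \le \exp\!\big(-\lambda a\,m\sqrt n + \frac{e^{\lambda Cn} - 1}{Cn}\, c'm\sqrt n\big)$. Choosing $\lambda = \frac{\ln 2}{Cn}$ makes $e^{\lambda Cn} - 1 = 1$ and collapses this to $\exp\!\big(-\frac{m}{C\sqrt n}(a\ln 2 - c')\big)$. Finally, $G$ has no isolated vertices, so $m \ge n/2$ and hence $\frac{m}{\sqrt n} \ge \frac{\sqrt m}{\sqrt 2}$; the exponent is therefore at most $-\frac{a\ln 2 - c'}{C\sqrt 2}\sqrt m$, and picking the runtime constant $a \ge (C\sqrt 2 + c')/\ln 2$ makes it at most $-\sqrt m$. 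This proves the theorem.

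The main obstacle — and the reason a plain Azuma--Hoeffding bound will not do — is the heavy worst-case bound $X_i \le Cn$: using only that, Azuma would give a tail of the form $e^{-\Theta(m/n)}$, which is useless for sparse graphs where $m$ is close to $n$. The point is instead to exploit that $\E[X_i \mid \mathcal F_{i-1}]$ is small on average; the Bennett-type MGF estimate above is essentially the statement that $S$ is stochastically dominated, in the moment-generating-function sense, by $Cn$ times a Poisson random variable of mean $\frac{1}{Cn}\sum_i \mu_i = O(m\sqrt n / n)$, whose upper tail has precisely the needed shape. The remaining details are routine: that each $\mu_i$ is genuinely deterministic (so no measurability subtleties enter the telescoping), which follows from the fact that a \textproc{Random-Color-One} call colors exactly one edge, and that $\sum_i \mu_i$ really is the $O(m\sqrt n)$ cost sum already established for \textproc{Euler-Color}; both come directly from the analysis preceding Lemma~\ref{lem:randomeuler}.
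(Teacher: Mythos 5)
Your proof is correct, and it takes a genuinely different route from the paper's. The paper treats the per-call runtimes $R_i$ as \emph{independent} random variables (arguing that the subgraphs $G_i$ and their statistics $m_i, d_i, \ell_i$ are deterministic), bounds $\Var[R_i] \le Cn\,\E[R_i]$, and then invokes the classical Bernstein inequality for independent sums with $M = Cn$, $\sigma^2 \le C^2 m n^{3/2}$, and $t = 3m^{1/2}/n^{1/4}$. You instead build the Chernoff/Bennett bound by hand: the convexity inequality $e^{\lambda x} \le 1 + \tfrac{x}{Cn}(e^{\lambda Cn}-1)$ on $[0,Cn]$, telescoping with the tower property using the \emph{deterministic} conditional bound $\E[X_i \mid \mathcal F_{i-1}] \le \mu_i$, and then Markov with $\lambda = \tfrac{\ln 2}{Cn}$. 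Both arguments hinge on the same two facts — the worst-case cap $O(n)$ and the $O(m\sqrt n)$ bound on the sum of expectations — and both use $m \ge n/2$ to convert $m/\sqrt n$ into $\Omega(\sqrt m)$.

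There is a substantive difference worth pointing out. The paper's claim that the $R_i$ are independent is actually questionable: while the subgraphs $G_i$ (and hence $m_i, d_i, \ell_i$) are deterministic, the \emph{partial coloring} at the moment of call $i$ depends on all earlier random choices, and the running time of \textproc{Random-Color-One} (the length of the flipped path) depends on that coloring. So the $R_i$ are not literally independent. What is true is exactly the martingale-type statement you use: the conditional expectation given the past is deterministically bounded by $\mu_i$, because the bound of Lemma~\ref{lem:expected} holds for every partial coloring. Your tower-property telescoping uses only this and therefore sidesteps the issue entirely; the paper's argument would, strictly speaking, need to be recast via a martingale Bernstein (Freedman-type) inequality, or the independence claim needs further justification. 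In that sense your proof is the more careful one, and it is also more self-contained since it does not cite Bernstein as a black box. Your remark about why plain Azuma would give only $e^{-\Theta(m/n)}$ is accurate and correctly identifies why one must exploit the small conditional means rather than just the worst-case increment.
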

\begin{proof}
Let $T(G)$ be the combined runtime of all calls to \textproc{Random-Color-One} during an execution of \textproc{Random-Euler-Color} on $G$.
All other operations in \textproc{Random-Euler-Color} are deterministic and run in $o(m\sqrt{n})$ time, so it suffices to study $T(G)$.

\textproc{Random-Color-One} is applied many times during \textproc{Random-Euler-Color} to many partially edge-colored subgraphs of $G$.
Call them $G_1, \dots, G_k$, where $G_i$ has $m_i$ edges, maximum degree $d_i$, and $\ell_i$ uncolored edges.
Note that these are not random variables. The graphs $G_1, \dots, G_k$ are determined uniquely by the input graph because the \stp{Partition} step is deterministic and depends only on the structure of the graph.
Although the partial edge-colorings are randomly altered by \textproc{Random-Color-One}, the values of $m_i$, $d_i$, and $\ell_i$ are predictable.

Let $R_i$ be a random variable for the runtime of \textproc{Random-Color-One} on $G_i$.
Since the $G_i$ are deterministically chosen, the $R_i$ are independent.
Then $T(G) = \sum_{i=1}^k R_i$.
Lemma~\ref{lem:randomeuler} states that $\E[T(G)] = \E\left[\sum_{i=1}^k R_k\right] \leq C m\sqrt{n}$, for a sufficiently large constant $C$.

Recall that \textproc{Random-Color-One} takes $O(n)$ time in the worst case, and so $R_i \leq Cn$ (we reuse the constant $C$ here, without loss of generality).
Hence $\Var[R_i] \leq \E[R_i^2] \leq Cn \E[R_i]$ for $i = 1, \dots, k$.
By independence, we have \[\Var[T(G)] \leq Cn \E[T(G)] \leq C^2mn^{3/2}\]

We now use the following bound of Bernstein, proved in~\cite{Ber24} and communicated in~\cite{Ben62}.
\begin{theorem*}[Bernstein~\cite{Ber24}]
Let $X_1,\dots, X_n$ be independent random variables, let $\mathbf{X} = \sum_{i=1}^n X_i$, and let $\sigma^2 = \Var[\mathbf{X}] = \sum_{i=1}^n \Var[X_i]$. Suppose that $|X_{i} - \E[X_i]| \leq M$ for all $i$. Then, for $t > 0$,
\[\Pr\left[\mathbf{X} \geq \E[\mathbf{X}] + t\sigma\right] \leq \exp\left(-\frac{\frac{1}{2}t^2}{1 + \frac{1}{3}\frac{M}{\sigma}t}\right)\]
\end{theorem*}

Apply this theorem to $T(G)$. Here $M = Cn$ and $\sigma^2 = \Var[T(G)] \leq C^2 mn^{3/2}$, and we set $t = 3m^{1/2}/n^{1/4}$. We have $t\sigma \leq 3Cm\sqrt{n}$.
Hence,
\[\Pr\left[T(G) \geq \E[T(G)] + 3Cm\sqrt{n} \right]
\leq \exp\left(-\frac{\frac{9}{2}m/\sqrt{n}}{1 + \frac{1}{3}\frac{Cn}{Cm^{1/2}n^{3/4}}\frac{3m^{1/2}}{n^{1/4}}}\right)
= \exp\left(-\frac{\frac{9}{2}m/\sqrt{n}}{2}\right) \leq e^{-\sqrt{m}}
\]
The last inequality holds because $m \geq n/2$, by the assumption that there are no isolated nodes in any subgraph.
Thus, the runtime of \textproc{Random-Euler-Color} is $O(m\sqrt{n})$ except with probability $e^{-\sqrt m}$.
\end{proof}

In light of its strongly concentrated runtime and the simplicity of \textproc{Color-One} compared to \textproc{Color-Many}, \textproc{Random-Euler-Color} is preferable to \textproc{Euler-Color} in almost any context.

This concludes our study of the three major algorithms of this paper.
It remains to present the subroutines used by \textproc{Euler-Color} and \textproc{Random-Euler-Color}.

\section{Colors, Fans, and Paths}
\label{sec:defs}

The rest of the paper is devoted to presenting the subroutines \textproc{Color-One}, \textproc{Color-Many}, and \textproc{Random-Color-One} that are used in Sections~\ref{sec:eulereuler} and~\ref{sec:randomeuler}.
This section provides some terminology and tools that we will need to describe them, including the data structures underlying our implementations.

\subsection{Colors}

Since the remainder of this paper is concerned with edge-coloring in $d+1$ colors, we shall assume the available colors are $1, 2, \dots, d+1$.
We also restrict ourselves to simple graphs.

Let $G=(V,E)$ be a simple undirected graph with maximum degree $d$.
Let $c \colon E \pto [d+1]$ be a partial edge-coloring of $G$.
An edge $e$ is said to be \emph{colored} if $c(e)$ is defined and \emph{uncolored} otherwise.
Recall that a color $\alpha$ is said to be missing at vertex $v$ if no edge incident to $v$ is colored by $\alpha$.
Define $M(v) = \{\alpha \in [d+1] \mid \text{$\alpha$ is missing at $v$}\}$ and $\Mb(v) = [d+1] \setminus M(v)$.
Observe that $M(v)$ is always nonempty as $v$ has at most $d$ neighbours and there are $d+1$ colors to choose from.


\subsection{Fans}

Our subroutines will use a construction called a c-fan (sometimes called a Vizing fan), illustrated in Figure~\ref{fig:cfan}.
\begin{definition}
\label{def:cfan}
A \emph{c-fan} $F$ is a sequence $(\alpha, v, x_0, x_1, \dots, x_k)$ such that
\bi
\item $\alpha$ is a color in $M(v)$,
\item $x_1, \dots, x_k$ are distinct neighbours of $v$,
\item $vx_0$ is uncolored,
\item $vx_i$ is colored for $i = 1, \dots, k$, and
\item $c(vx_i)$ is missing at $x_{i-1}$ for $i = 1, \dots, k$.
\ei
Vertex $v$ is called the \emph{center} of $F$, and $F$ is said to be \emph{centered} at $v$.
The other vertices ($x_0, \dots, x_k$) are called the \emph{leaves} of $F$.
\end{definition}

\begin{figure}[!ht]
\begin{center}
\includegraphics[scale=0.6]{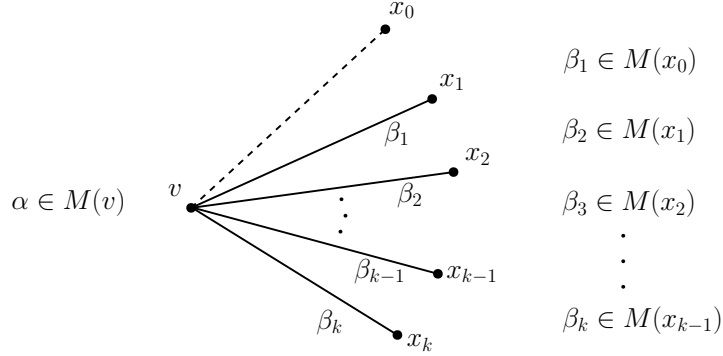}
\end{center}
\caption{A c-fan $(\alpha, v, x_0, x_1, \dots, x_k)$. Here $c(vx_i) = \beta_i$ and $\beta_i$ is always missing at $x_{i-1}$.}
\label{fig:cfan}
\end{figure}

The useful property of a c-fan is that we may ``rotate'' the colors of the fan without making the edge-coloring invalid.
Let $F = (\alpha, v, x_0, \dots, x_k)$ be a c-fan with $\beta_i = c(vx_i)$ for $i = 1, \dots, k$, and let $0 \leq j \leq k$.
To \emph{shift $F$ from $x_j$} means to set $c(vx_{i-1}) = \beta_i$ for $i = 1, \dots, j$, and to make $vx_j$ uncolored.
Since $\beta_i$ is required to be missing at $x_{i-1}$, the function $c$ is still a partial edge-coloring after a shift.
Note that $M(v)$ is unchanged and that $F' = (\alpha, v, x_j, x_{j+1}, \dots, x_k)$ is a c-fan after the shift.
Shifting a c-fan is shown in Figure~\ref{fig:fanshift}.

\begin{figure}[!ht]
\begin{center}
\includegraphics[scale=0.4]{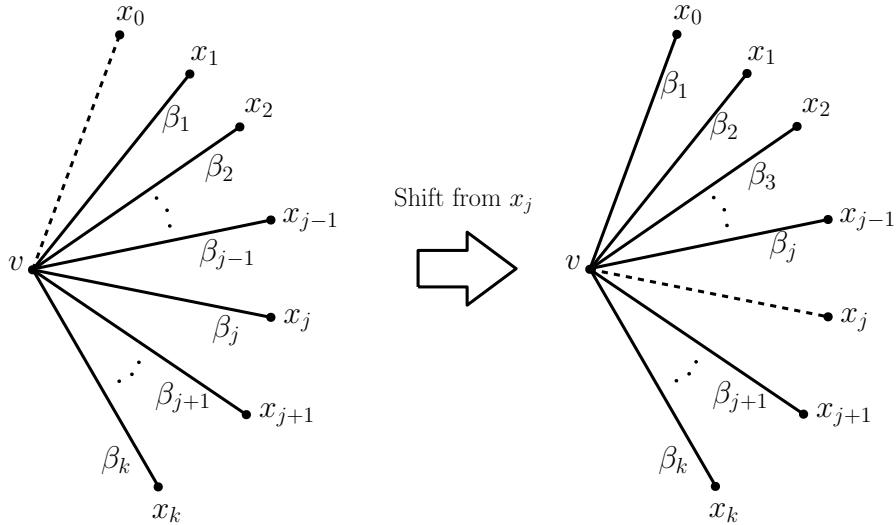}
\end{center}
\caption{A c-fan (left) and the resulting edge-coloring after shifting from $x_j$ (right). Solid lines are colored and dashed lines uncolored. A shift always preserves the legality of the edge-coloring.}
\label{fig:fanshift}
\end{figure}

Let us extend the definition of a c-fan by adding a parameter $\beta$.
Intuitively, a c-fan is \emph{primed} by $\beta$ once it is complete and does not need to be made larger.
A primed c-fan is exactly what we need to grow the edge-coloring, as we will see in the next sections.

\begin{definition}
\label{def:primedcfan}
A \emph{primed c-fan} is a c-fan $F = (\alpha, v, x_0, \dots, x_k)$ with an extra parameter $\beta$.
The color $\beta$ must satisfy one of the following conditions:
\be
\item $\beta \in M(x_k)$ and $\beta \in M(v)$, or
\item $\beta \in M(x_k)$ and there is a leaf $x_j$ such that $c(vx_j) = \beta$.
\ee
Setting $\beta$ is called \emph{priming} $F$, and we say $F$ is \emph{primed} by $\beta$.
\end{definition}

\subsection{Alternating Paths}
\label{sec:alternating}

We require one more tool to describe our algorithms.
For any two colors $\alpha$ and $\beta$, restricting $G$ to the edges colored by $\alpha$ or $\beta$ yields a subgraph of $G$ in which every component is either a path or an even-length cycle.
We call each component an $\alpha\beta$-path or an $\alpha\beta$-cycle, as appropriate.
We shall not require that $\alpha$ and $\beta$ be distinct, and an $\alpha\beta$-path may consist of a single vertex.
An $\alpha\beta$-path may also be called an \emph{alternating} path when $\alpha$ and $\beta$ are not specified.


To \emph{flip} an alternating path means to interchange the colors of its edges.
Any alternating path can be flipped in a partially edge-colored graph and the resulting assignment of colors will still be a partial edge-coloring.

Using the data structures in the next section, an alternating path can be flipped in time proportional to its length.

\subsection{Data Structures}
\label{sec:datastructs}

Our algorithms use the following elementary data structures. Here we diverge from~\cite{GNKLT85} by introducing a dictionary data structure that simplifies our subroutines. See Appendix~\ref{app:dictionary} for further discussion of this structure.

\bi
\item
We represent $G$ by storing for each vertex a list of its incident edges.
Each edge has a color field to track its color, if any.

\item
We keep a data structure $\mu(v)$ that stores colors missing at each vertex $v$.
However, $\mu(v)$ will not store all the colors in $M(v)$, as that could require too much time and space to initialize.
Instead $\mu(v)$ contains only the colors in $M(v) \cap [\deg(v)+1]$.
There must always be some color in $\mu(v)$ because $M(v) \cap [\deg(v)+1]$ is always nonempty.

We implement $\mu(v)$ as a doubly-linked list containing those colors in $M(v) \cap [\deg(v)+1]$ together with an array of length $\deg(v) + 1$.
The array contains pointers to the $\deg(v)+1$ nodes that can be in the list.
We can maintain $\mu(v)$ with constant overhead: Every time an edge incident to $v$ is colored by $\gamma \in [\deg(v) + 1]$, find the list node for $\gamma$ (using the array) and remove it from the list.
Every time an edge incident to $v$ with a color in $[\deg(v) + 1]$ is uncolored, append the node for that color to the linked list.
We can initialize $\mu(v)$ in $O(\deg(v))$ time.

\item
We also use a dictionary data structure $\cD$ that maps a vertex-color pair $(v, \gamma)$ to the edge incident to $v$ colored by $\gamma$, if there is such an edge.
It supports three operations that each take constant time: \textproc{Search}, \textproc{Insert}, and \textproc{Delete}.

$\cD$ can be initialized in $O(\sqrt{mnd})$ time, which is no more than $O(m\sqrt n)$.
Our implementation of $\cD$ is unusual because it must achieve this low initialization cost, but it operates by elementary methods.
In Appendix~\ref{app:dictionary}, we describe the implementation of $\cD$ and how to integrate $\cD$ into the recursive algorithms of Section~\ref{sec:euler}. The latter topic merits discussion because $\cD$ can only be initialized once, yet it must work for all the subgraphs in the recursion.
\ei

A data structure like $\cD$ was not present in~\cite{GNKLT85}.
It allows us to speed up or simplify all of our subroutines, and it is not known how to make a procedure like \textproc{Random-Color-One} function without such a data structure.

In summary, our data structures permit the following operations in constant time for any vertex $v$ and color $\gamma$: check whether $\gamma \in M(v)$; get a color in $M(v)$; color an edge; uncolor an edge; find the edge incident to $v$ colored by $\gamma$. As a corollary, we are able to shift a c-fan in time proportional to its size and flip an alternating path in time proportional to its length.

\section{Color-One}
\label{sec:one}

Our basic edge-coloring subroutine \textproc{Color-One} is analogous to \textproc{Recolor} in~\cite{GNKLT85} and derived from Vizing's original work~\cite{Viz64}.

\textproc{Color-One} first chooses an uncolored edge $vx_0$ and a color $\alpha \in M(v)$.
It calls a procedure \textproc{Make-Primed-Fan}, which builds a fan $F = (\alpha, v, x_0, x_1, \dots, x_k)$ primed by a color $\beta \in M(x_k)$.
It then \emph{activates} the fan by calling \textproc{Activate-c-Fan}, which uses the fan $F$ to color $vx_0$.
\begin{algorithmic}[1]
\Procedure {Color-One}{{}}
\State {Choose an uncolored edge $vx_0$}
\State {Choose any $\alpha \in M(v)$}
\State {$F \gets \textproc{Make-Primed-Fan}(v, x_0, \alpha)$}
\State {$\textproc{Activate-c-Fan}(F)$}
\EndProcedure
\end{algorithmic}

We describe \textproc{Make-Primed-Fan} and \textproc{Activate-c-Fan} in turn.\footnote{In most presentations of this type of coloring procedure, the construction of the c-fan and the activation of the fan are not treated as separate components. This bulky presentation will help us when we state \textproc{Random-Color-One} and \textproc{Color-Many}.}

\subsection*{Make-Primed-Fan}
\textproc{Make-Primed-Fan} creates a primed c-fan around $vx_0$, as follows.
Initialize a fan $F = (\alpha, v, x_0)$. We add leaves to $F$ until we find an opportunity to prime $F$.
Leaves are added greedily.
Pick a color $\beta$ missing at $x_k$, the last leaf of $F$.
If $\beta$ is missing at $v$, then we can prime $F$ by $\beta$; this satisfies the first way for $F$ to be primed.
Otherwise, there is a vertex $x_{k+1}$ with $vx_{k+1}$ colored by $\beta$.
If $x_{k+1}$ is already a leaf of $F$, then we can again primes $F$ by $\beta$; this satisfies the second way for $F$ to be primed.
If $x_{k+1}$ does not already appear in $F$, then appended to $F$ as the last leaf.
Return $F$ as soon as it is primed.

\begin{algorithmic}[1]
\Procedure {Make-Primed-Fan}{$v, x_0, \alpha$}
\Require {$vx_0$ uncolored}
\Require {$\alpha \in M(v)$}
\State {$F \gets (\alpha, v, x_0)$}
\State {$k \gets 0$}
\While {$F$ is not primed}
\State {Pick any $\beta \in M(x_k)$}
\If {$\beta \in M(v)$}
\State {Prime $F$ by $\beta$}
\Else \AlgComment {$\beta \not\in M(v)$}
\State {Find neighbour $x_{k+1}$ such that $c(vx_{k+1}) = \beta$}
\If {$x_{k+1} \in \{x_1, \dots, x_k\}$}
\State {Prime $F$ by $\beta$}
\Else\AlgComment {$x_{k+1}$ is not already in $F$}
\State {Append $x_{k+1}$ to $F$}
\State {$k \gets k + 1$}
\EndIf
\EndIf
\EndWhile
\State\Return {$F$}
\EndProcedure
\end{algorithmic}

\begin{lemma}
\label{lem:makefan}
\textproc{Make-Primed-Fan} returns a primed c-fan in $O(\deg(v))$ time.
\end{lemma}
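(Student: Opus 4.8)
The plan is to argue correctness and the time bound separately, both resting on the observation that each iteration of the while-loop either terminates (by priming $F$) or appends a genuinely new leaf to $F$. First I would verify the loop invariant: at the top of each iteration, $F = (\alpha, v, x_0, x_1, \dots, x_k)$ is a legitimate c-fan. This is immediate from the construction — $\alpha \in M(v)$ never changes, the $x_i$ are kept distinct by the explicit membership check, $vx_0$ stays uncolored (no shift happens inside \textproc{Make-Primed-Fan}), and every time a leaf $x_{k+1}$ is appended it is because $c(vx_{k+1}) = \beta$ with $\beta \in M(x_k)$, which is exactly the defining condition $c(vx_i)$ missing at $x_{i-1}$. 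Then I would check that when the procedure returns, $F$ is primed in the sense of Definition~\ref{def:primedcfan}: if the algorithm primes in the first branch we have $\beta \in M(x_k)$ (by choice of $\beta$) and $\beta \in M(v)$, matching condition~1; if it primes in the second branch we have $\beta \in M(x_k)$ and $\beta = c(vx_j)$ for the already-present leaf $x_j = x_{k+1}$, matching condition~2.

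Next I would argue termination together with the running time, since they come from the same counting argument. Each iteration that does not return strictly increases $k$ by adding a new neighbour of $v$ as a leaf. Since the leaves $x_1, \dots, x_k$ are distinct neighbours of $v$ and $v$ has at most $\deg(v)$ neighbours, we can append at most $\deg(v)$ leaves, so the loop runs at most $\deg(v) + 1$ iterations and in particular terminates. For the per-iteration cost I would invoke the data structures of Section~\ref{sec:datastructs}: picking a $\beta \in M(x_k)$ is $O(1)$ using $\mu(x_k)$; testing $\beta \in M(v)$ is $O(1)$ using $\mu(v)$; finding the neighbour $x_{k+1}$ with $c(vx_{k+1}) = \beta$ is $O(1)$ via the dictionary $\cD$; and testing whether $x_{k+1}$ is already a leaf of $F$ can be done in $O(1)$ by maintaining a flag on each vertex as leaves are added (or by observing that $x_{k+1}$ is already in $F$ iff $\beta \in \Mb(v)$ and the edge $vx_{k+1}$ we just looked up is marked as a fan edge). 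Multiplying $O(1)$ per iteration by $O(\deg(v))$ iterations gives the claimed $O(\deg(v))$ bound.

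One subtlety I would want to nail down is the case analysis in \textproc{Make-Primed-Fan} where $\beta \notin M(v)$ and the unique edge at $v$ colored $\beta$ leads to a vertex $x_{k+1}$ that is \emph{not} currently a leaf — here one must be sure this edge $vx_{k+1}$ is colored (so that $x_{k+1}$ is well-defined), which holds because $\beta \in \Mb(v)$ means precisely that some incident edge of $v$ carries color $\beta$, and in a simple graph that edge is unique. I expect the only real bookkeeping obstacle to be the $O(1)$ leaf-membership test: a naive implementation scans the current fan and costs $O(k)$, which would inflate the bound to $O(\deg(v)^2)$. The clean fix is a per-vertex boolean array (indexed by vertex, reset lazily only along the $O(\deg(v))$ leaves that were touched) or, equivalently, recording on each edge incident to $v$ whether it is a fan edge; either gives constant-time membership and constant-time cleanup, preserving the $O(\deg(v))$ total.
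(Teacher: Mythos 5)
Your proposal is correct and follows essentially the same approach as the paper: both bound the number of iterations by $\deg(v)$ by observing that each non-terminating iteration appends a distinct new leaf, and both identify the leaf-membership test on line~10 as the only step requiring care, solving it with the same lazy-reset marking scheme. You simply spell out in more detail the verification that $F$ remains a c-fan and that the returned object satisfies Definition~\ref{def:primedcfan}, which the paper compresses into the phrase ``by construction.''
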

\begin{proof}
By construction, $F$ is a c-fan at all times during \textproc{Make-Primed-Fan}.
Each iteration of the loop adds a new leaf to $F$ or primes $F$, and no leaf can be added twice; thus the loop repeats at most $\deg(v)$ times before $F$ is primed and returned.

Each operation within the loop takes constant time --- this is clear for every line except line 10, where we check whether $x_{k+1}$ appears already as a leaf of $F$. We can make this operation take constant time by marking each leaf when it is added to $F$ and unmarking them all at the end.
Hence the procedure takes $O(\deg(v))$ time.
\end{proof}

\subsection*{Activate-c-Fan}
\textproc{Activate-c-Fan} accepts a c-fan $F = (\alpha, v, x_0, x_1, \dots, x_k)$ primed by $\beta$ and uses $F$ to extend the coloring.
By Definition~\ref{def:primedcfan}, there are two ways for $F$ to be primed by $\beta$: Either $\beta \in M(v)$ or there is a leaf $x_j$ with $c(vx_j) = \beta$ (recall that $\beta \in M(x_k)$ in both cases).
\textproc{Activate-c-Fan} handles them differently.

If $\beta \in M(v)$, then shift $F$ from $x_k$ (so that $vx_k$ becomes uncolored) and color $vx_k$ by $\beta$. This extends the coloring by one edge, and we return.

Otherwise, $\beta$ is not missing at $v$, but there is some leaf $x_j$ where $vx_j$ is colored by $\beta$.
Observe that $\beta$ is missing at $x_k$ and at $x_{j-1}$, by construction of the c-fan.

Consider the $\alpha\beta$-path $P$ beginning at $v$.
Recall that $\alpha$ is missing at $v$ and $vx_j$ is colored by $\beta$.
Hence, $v$ is an endpoint of $P$, with $vx_j$ being the first edge of $P$.
Path $P$ ends at some other vertex $w$ (note that $w$ is missing exactly one of $\alpha$ or $\beta$).
Flip $P$ (i.e.\ interchange $\alpha$ and $\beta$ on its edges).
After the flip, $vx_j$ is colored by $\alpha$, and $\beta$ is missing at $v$.
The flip also affects $w$: If $\alpha$ was missing at $w$, now $\beta$ is missing, and vice versa.
The set of missing colors does not change for any vertices besides $v$ and $w$.

We now act in two cases depending on $w$. The cases are illustrated in Figure~\ref{fig:colorone}.
\be[label={\textbf{Case \Roman*:}}, leftmargin=\widthof{\textbf{Case II:}}+\labelsep]
\item If $w \not= x_{j-1}$, then $\beta$ is still missing at $x_{j-1}$. Observe that $F' = (\beta, v, x_0, \dots, x_{j-1})$ is a c-fan (this holds even if $w$ is a leaf of $F'$). Shift $F'$ from $x_{j-1}$ and color $vx_{j-1}$ by $\beta$.
\item If $w = x_{j-1}$, then $\beta$ is no longer missing at $x_{j-1}$ (now $\alpha$ is missing instead). Since $vx_j$ is colored by $\alpha$ and $\beta$ is missing at $v$, the sequence $F' = (\beta, v, x_0, x_1, \dots, x_k)$ is a c-fan. Moreover, $\beta$ must still be missing at $x_k$ because $w \not= x_k$. Shift $F'$ from $x_{k}$ and color $vx_{k}$ by $\beta$.
\ee

In either case, the procedure succeeds in extending the edge-coloring by $vx_0$ without uncoloring any other vertices.

\textproc{Activate-c-Fan} is given in pseudocode below.

\begin{algorithmic}[1]
\Procedure {Activate-c-Fan}{{$F$}}
\Require {$F = (\alpha, v, x_0, \dots, x_k)$ is a c-fan primed by $\beta$}
\If {$\beta \in M(v)$}
\State\AlgComment {$\beta \in M(v)$ and $\beta \in M(x_k)$}
\State {Shift $F$ from $x_k$}
\State {Color $vx_k$ by $\beta$}
\Else\AlgComment {$\beta \not\in M(v)$}
\State {Let $x_j$ be the leaf of $F$ with $c(vx_j) = \beta$}
\State\AlgComment {$\beta \in M(x_{j-1})$ and $\beta \in M(x_k)$}
\State {Flip the $\alpha\beta$-path $P$ beginning at $v$}
\State {Let $w$ be the other endpoint of $P$}
\If {$w \not= x_{j-1}$}
\State\AlgComment {$\beta \in M(x_{j-1})$}
\State {Shift $F$ from $x_{j-1}$}
\State {Color $vx_{j-1}$ by $\beta$}
\Else
\State\AlgComment {$\alpha \in M(x_{j-1})$}
\State {Shift $F$ from $x_k$}
\State {Color $vx_k$ by $\beta$}
\EndIf
\EndIf
\EndProcedure
\end{algorithmic}

\begin{figure}[ht!]
\begin{center}
\includegraphics[scale=0.46]{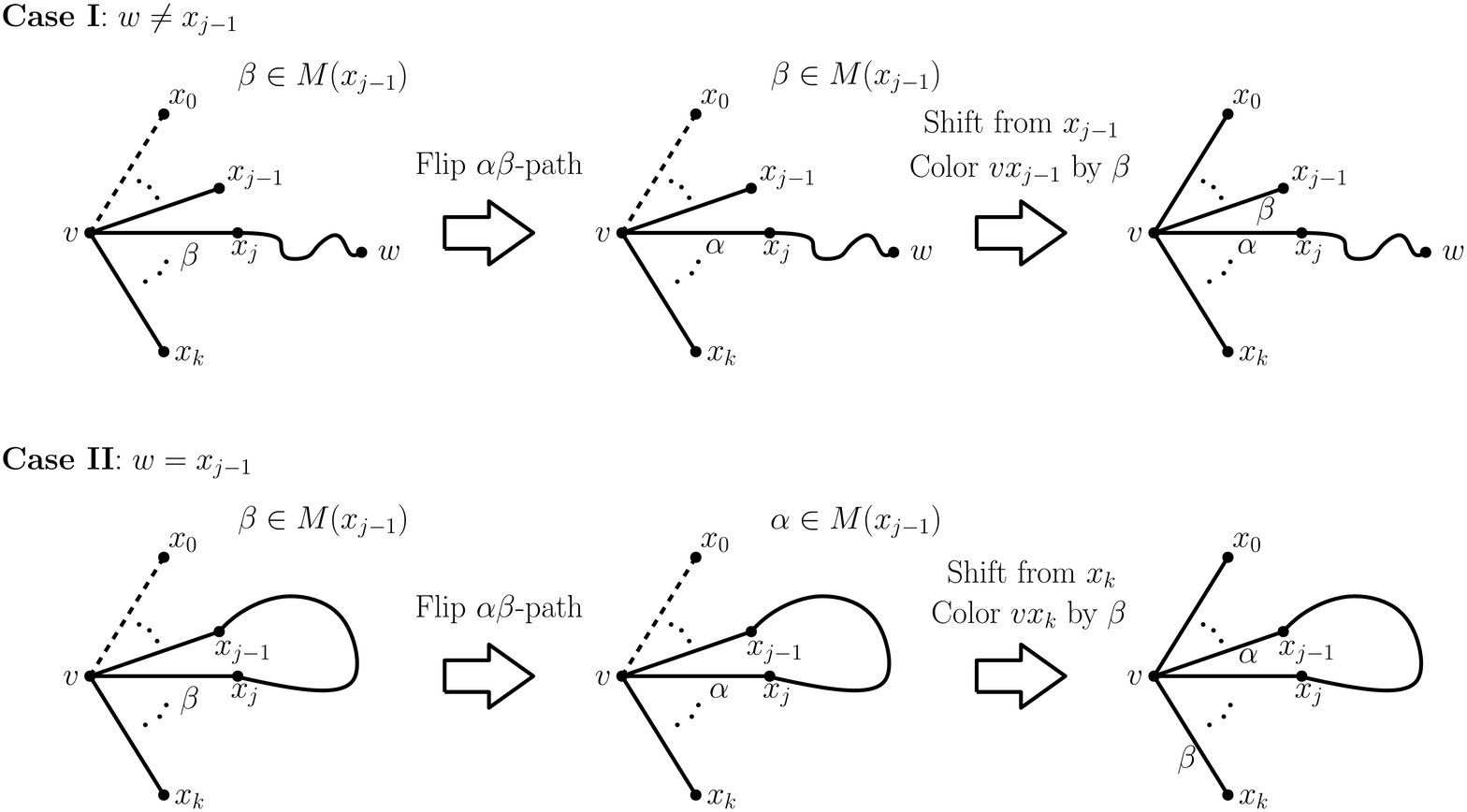}
\end{center}
\caption{The two cases of flipping the $\alpha\beta$-path in \textproc{Activate-c-Fan}.}
\label{fig:colorone}
\end{figure}

For convenience, let us assume going forward that \textproc{Activate-c-Fan} always flips one $\alpha\beta$-path with an endpoint at $v$.
This is indeed the case if \textproc{Activate-c-Fan} reaches line 9.
If it instead executes lines 3---5, we can just pretend that the $\alpha\beta$-path beginning at $v$ is flipped; it does no harm to assume this because the path has length 0.

\begin{lemma}
\label{lem:activatecfan}
\textproc{Activate-c-Fan} colors one edge in $O(\deg(v) + \length(P))$ time, where $P$ is the length of the flipped path.
\end{lemma}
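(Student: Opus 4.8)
The plan is to account for every line of \textproc{Activate-c-Fan} and show that each one costs either $O(\deg(v))$, $O(\length(P))$, or $O(1)$. First I would invoke the data-structure summary at the end of Section~\ref{sec:datastructs}: checking whether $\beta \in M(v)$, coloring an edge, and uncoloring an edge are each $O(1)$ operations, and finding the neighbour with a given color uses the dictionary $\cD$ in $O(1)$ time. So lines 3, 5, 6, 8, 14, 15, 18, and 19 are individually constant-time, and line 10 (reading off the endpoint $w$ after the flip) is $O(1)$ as well since we can track the far end of the path as we traverse it.

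The two nontrivial costs are the shift operations and the path flip. For the shifts (lines 4, 13, 17): a c-fan $F = (\alpha, v, x_0, \dots, x_k)$ has at most $\deg(v) + 1$ leaves, since $x_1, \dots, x_k$ are distinct neighbours of $v$, so $k \leq \deg(v)$. By the remark at the end of Section~\ref{sec:datastructs}, a shift takes time proportional to the size of the fan, hence $O(\deg(v))$. Each invoked fan ($F$ or one of the $F'$ described in Cases I and II) is a sub-sequence of the original fan, so the same bound applies. For the flip on line 9, the remark in Section~\ref{sec:alternating} gives that an alternating path is flipped in time proportional to its length, namely $O(\length(P))$. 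This is also where I would use the convention stated just before the lemma — that we may treat lines 3--5 as flipping a length-0 $\alpha\beta$-path at $v$ — so that the bound $O(\deg(v) + \length(P))$ is meaningful and uniform across both branches of the procedure.

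Putting it together: in the branch $\beta \in M(v)$, the cost is one shift plus one coloring, i.e.\ $O(\deg(v)) = O(\deg(v) + \length(P))$ with $\length(P) = 0$. In the branch $\beta \notin M(v)$, the cost is one dictionary lookup, one flip ($O(\length(P))$), one shift ($O(\deg(v))$), and one coloring, again $O(\deg(v) + \length(P))$ total. Summing the constantly-many line costs changes nothing asymptotically, and correctness of the coloring output was already argued in the case analysis preceding the pseudocode, so the lemma follows.

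I do not expect a real obstacle here; the statement is essentially a bookkeeping consequence of the data-structure guarantees. The one point requiring a little care is the shift bound: one must observe that $k \leq \deg(v)$ so that "time proportional to the size of the fan" is genuinely $O(\deg(v))$ and not something larger, and that the truncated fans $F'$ in the two cases are no larger than $F$. Beyond that, the proof is just a walk through the pseudocode citing Section~\ref{sec:datastructs}.
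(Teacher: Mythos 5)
Your proposal is correct and matches the paper's proof in substance: both defer correctness (that exactly one new edge becomes colored) to the case analysis preceding the pseudocode, and both attribute the runtime bound to the $O(\deg(v))$ cost of fan operations (including shifts, since $k \leq \deg(v)$) plus the $O(\length(P))$ cost of the path flip. The paper simply states ``the runtime is clear'' where you spell out the line-by-line accounting, but the argument is the same.
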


\begin{proof}
Case analysis of \textproc{Activate-c-Fan} shows that it outputs a legal edge-coloring. Observe that no previously-colored edges are uncolored by \textproc{Activate-c-Fan}, and the first edge of the fan becomes colored. Hence it increases the number of colored edges by one.

The runtime is clear, as \textproc{Activate-c-Fan} takes $O(\deg(v))$ time for all operations except for flipping the path, which takes $O(\length(P))$ time.
\end{proof}

Lemmas~\ref{lem:makefan} and~\ref{lem:activatecfan} complete the analysis of \textproc{Color-One}.
Theorem~\ref{thm:dettime} shows that \textproc{Color-One} matches the behaviour promised in Section~\ref{sec:eulereuler}.

\begin{theorem}
\label{thm:dettime}
\textproc{Color-One} colors an uncolored edge $vx_0$ in $O(\deg(v) + \length(P))$ time in the worst case.
\end{theorem}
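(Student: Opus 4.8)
The plan is to assemble the theorem directly from the two lemmas already proved for the components of \textproc{Color-One}. Recall the procedure does three things: it selects an uncolored edge $vx_0$ together with a color $\alpha \in M(v)$; it builds a primed c-fan $F$ via \textproc{Make-Primed-Fan}$(v, x_0, \alpha)$; and it extends the coloring via \textproc{Activate-c-Fan}$(F)$. First I would check that the preconditions line up: $vx_0$ is uncolored by the choice on line~2, and a color $\alpha \in M(v)$ exists and can be found in $O(1)$ time from $\mu(v)$ (since $M(v)$ is always nonempty), so the requirements of \textproc{Make-Primed-Fan} are met. The two selection steps cost $O(1)$ given the data structures of Section~\ref{sec:datastructs} (or at worst $O(\deg(v))$, which is absorbed in any case).

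Next, by Lemma~\ref{lem:makefan}, \textproc{Make-Primed-Fan} returns a primed c-fan $F = (\alpha, v, x_0, \dots, x_k)$ in $O(\deg(v))$ time, and this $F$ satisfies the precondition of \textproc{Activate-c-Fan}. By Lemma~\ref{lem:activatecfan}, \textproc{Activate-c-Fan}$(F)$ then runs in $O(\deg(v) + \length(P))$ time, where $P$ is the $\alpha\beta$-path beginning at $v$ that is flipped (invoking the convention adopted just after the \textproc{Activate-c-Fan} pseudocode, so that exactly one such path is flipped, possibly of length $0$). That lemma also gives correctness: the output is a legal edge-coloring in which $vx_0$ has become colored and no previously colored edge has been uncolored. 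Summing the three contributions yields total running time $O(\deg(v)) + O(\deg(v) + \length(P)) = O(\deg(v) + \length(P))$, as claimed.

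I do not expect a genuine obstacle here; the substantive work was already done in Lemmas~\ref{lem:makefan} and~\ref{lem:activatecfan}. The only point requiring mild care is composing the two lemmas without a gap — namely, confirming that the $\alpha$ chosen in \textproc{Color-One} is the same $\alpha$ threaded through the fan and into the flipped path $P$, so that the $P$ appearing in the statement of this theorem is precisely the one bounded in Lemma~\ref{lem:activatecfan}.
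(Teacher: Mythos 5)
Your proposal is correct and matches the paper's approach: the paper explicitly states that Lemmas~\ref{lem:makefan} and~\ref{lem:activatecfan} complete the analysis of \textproc{Color-One}, and you simply spell out the composition of those two lemmas with the $O(1)$ selection overhead. No gap or divergence from the intended argument.
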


\section{Random-Color-One}
\label{sec:random}

While \textproc{Color-One} can take $\Theta(n)$ time when the flipped path is very long, we might expect better behaviour ``on average''.
Indeed, it cannot happen that every alternating path induced by an edge-coloring is very long.
It turns out that randomization helps a great deal in expectation.

The intuition is this: Let \textproc{Color-One} choose $vx_0$ at random from the uncolored edges.
\textproc{Color-One} can only flip an alternating path if $v$ is an endpoint of the path.
Since there are many different edges $vx_0$ to choose from, any particular alternating path is unlikely to be chosen.

This intuition is not far off, but it glosses over an important fact: If a vertex $v$ is incident to many uncolored edges, it is much more likely to be chosen for \textproc{Color-One}. If $v$ happens to be the endpoint of a very long path, \textproc{Color-One} may be slow in expectation.
To compensate for this, we also choose $\alpha$ at random. The flipped path must use $\alpha$ as one of its colors, so more choices for $\alpha$ translates to more options for the flipped path. Moreover, if $v$ is incident to many uncolored edges, it must also be missing many colors.

In summary, run \textproc{Color-One} with $vx_0$ and $\alpha$ chosen randomly.
We call this randomized procedure \textproc{Random-Color-One}.

\begin{algorithmic}[1]
\Procedure {Random-Color-One}{{}}
\State {Choose an uncolored edge $vx_0$ uniformly at random}
\State {Choose $\alpha \in M(v)$ uniformly at random}
\State {$F \gets \textproc{Make-Primed-Fan}(v, x_0, \alpha)$}
\State {$\textproc{Activate-c-Fan}(F)$}
\EndProcedure
\end{algorithmic}

\begin{lemma}
\label{lem:expected}
\textproc{Random-Color-One} colors an uncolored edge. It takes $O(md/\ell)$ time in expectation and $O(n)$ time in the worst case.
\end{lemma}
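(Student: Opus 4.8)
The plan is to analyze the two components of \textproc{Random-Color-One} separately. The call to \textproc{Activate-c-Fan} flips a single $\alpha\beta$-path with endpoint $v$ (by the convention established after its pseudocode), so its cost is $O(\deg(v) + \length(P))$ by Lemma~\ref{lem:activatecfan}, where $\beta$ is the priming color and $P$ is the $\alpha\beta$-path starting at $v$. The call to \textproc{Make-Primed-Fan} costs $O(\deg(v))$ by Lemma~\ref{lem:makefan}. Since $\deg(v) \leq d \leq n$, the worst-case bound of $O(n)$ is immediate (a path has at most $n$ vertices). The real work is the expected bound. I would write the expected running time as $O\bigl(\E[\deg(v)]\bigr) + O\bigl(\E[\length(P)]\bigr)$, where the randomness is over the uniform choice of uncolored edge $vx_0$ and then the uniform choice of $\alpha \in M(v)$ (the priming color $\beta$ is determined by the deterministic \textproc{Make-Primed-Fan}, so I must be careful that $\beta$ is not under my control).

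For the first term: a uniformly random uncolored edge has each endpoint chosen with probability proportional to its uncolored-degree. So $\E[\deg(v)] = \frac{1}{\ell}\sum_{\text{uncolored } e=uv}\frac{\deg(u)+\deg(v)}{2} \leq \frac{1}{\ell}\sum_{\text{uncolored } e} d = d$; in fact I should aim for the sharper statement that $\E[\deg(v)] = O(m/\ell)$ when that is smaller, but since $m/\ell \cdot d \geq d^2/2 \cdot \ldots$ — actually the clean bound I want is just $\E[\deg(v)] = O(md/\ell)$, which follows because $\deg(v) \leq d$ always and trivially $d \leq md/\ell$ would need $\ell \leq m$, which holds. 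So the fan-construction cost is within the claimed budget with room to spare; the genuine obstacle is bounding $\E[\length(P)]$.

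For the path term, here is the key idea. Fix the vertex $v$ once it is chosen (condition on $v$); this happens with probability $q_v := (\text{number of uncolored edges at } v)/\ell$. Given $v$, the color $\alpha$ is uniform over $M(v)$, and $|M(v)| \geq \deg(v)+1 - (\text{colored degree of } v) \geq 1 + (\text{uncolored degree of }v)$, so $|M(v)|$ is at least the uncolored degree of $v$ — this is the crucial inequality that makes the two sources of bias cancel. Now for each choice of $\alpha$, the flipped path $P$ is a subpath of the $\alpha\beta$-path through $v$ for the particular $\beta$ produced by \textproc{Make-Primed-Fan}; in the worst case $\beta$ is adversarial, so I should bound $\length(P)$ by $\max_\beta \length(\text{the } \alpha\beta\text{-path with endpoint } v)$, or better, observe that regardless of $\beta$ the relevant path lies in the union over $\beta$ of $\alpha\beta$-paths at $v$ — but a cleaner route is: the flipped path $P$ consists of edges colored $\alpha$ or $\beta$ incident along a path from $v$, and in particular every second edge of $P$ is colored $\alpha$; hence $\length(P) \leq 2 \cdot (\text{number of } \alpha\text{-colored edges reachable}) $. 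The cleanest accounting: for a fixed $\alpha$, the edges colored $\alpha$ form a partial matching, and the $\alpha\beta$-paths for all $\beta$ simultaneously decompose the $\alpha$-colored edges among disjoint paths; summing $\length(P)$ over the at most one path per choice of $(v,\alpha)$, and using that distinct $(v,\alpha)$ with the same $\alpha$ give edge-disjoint paths, the total $\alpha$-edge-count is at most $m$. Putting it together:
\[
\E[\length(P)] = \sum_{v} q_v \cdot \frac{1}{|M(v)|}\sum_{\alpha \in M(v)} \length(P_{v,\alpha})
\leq \sum_v \frac{q_v}{|M(v)|}\cdot O\!\Bigl(\sum_{\alpha}\length(P_{v,\alpha})\Bigr),
\]
and since $q_v \leq (\text{uncolored deg of }v)/\ell \leq |M(v)|/\ell$, the $|M(v)|$ cancels, leaving $\E[\length(P)] \leq \frac{1}{\ell}\sum_v O\bigl(\sum_\alpha \length(P_{v,\alpha})\bigr) = \frac{1}{\ell}\cdot O\bigl(\sum_\alpha \sum_v \length(P_{v,\alpha})\bigr) = \frac{1}{\ell}\cdot O\bigl(\sum_\alpha m\bigr) = O(md/\ell)$, using that there are $d+1$ colors $\alpha$ and for each one the paths $P_{v,\alpha}$ over distinct $v$ are edge-disjoint, hence total at most $m$ edges.

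The main obstacle is the middle step: handling the adversarial priming color $\beta$ and making precise the claim that the paths $P_{v,\alpha}$ (for a fixed $\alpha$ and ranging over the possible endpoints $v$) are edge-disjoint, so that their lengths sum to at most $m$. I expect this requires observing that $P_{v,\alpha}$ is contained in the connected component of $v$ in the subgraph of edges colored $\alpha$ or (potentially) $\beta$, and that although $\beta$ varies, each edge colored $\alpha$ lies on exactly one maximal alternating path starting from each of its "ends," so a careful double-counting over the $\alpha$-edges gives the $O(m)$ bound per color. I would isolate this as a small combinatorial claim about alternating paths and prove it directly from the fact that $\alpha$-colored edges form a matching. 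Everything else is bookkeeping with the inequality $q_v \le |M(v)|/\ell$ doing the essential work of neutralizing the high-degree-vertex bias.
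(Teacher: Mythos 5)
Your bookkeeping is solid up to the final combinatorial step, and the key inequality you isolate---that the uncolored degree of $v$ is at most $|M(v)|$, so the bias toward high-uncolored-degree vertices is exactly cancelled by the uniform choice of $\alpha$---is the same device the paper uses (the paper writes $\Pr[v = x] \leq |M(x)|/2\ell$ and then multiplies by $\Pr[\alpha = \gamma_x \mid v = x] = 1/|M(x)|$). But the last step of your chain, ``for each $\alpha$ the paths $P_{v,\alpha}$ over distinct $v$ are edge-disjoint, hence total at most $m$ edges,'' is false, and you correctly sense this when you flag it as the main obstacle. The problem is that $P_{v,\alpha}$ and $P_{v',\alpha}$ are $\alpha\beta_v$- and $\alpha\beta_{v'}$-paths for possibly \emph{different} second colors $\beta_v \neq \beta_{v'}$; they are then edge-disjoint on their $\beta$-colored edges but can overlap on $\alpha$-colored edges (an $\alpha$-colored edge lies on one $\alpha\gamma$-path for every $\gamma$, so it can appear in up to $2(d+1)$ of your paths $P_{v,\alpha}$). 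Your parenthetical fix---double-count over $\alpha$-edges---can be made to work: for fixed $\alpha$, count only the non-$\alpha$ edges of each $P_{v,\alpha}$, note these are bounded by $2m$ in total (for each $\beta$ the $\alpha\beta$-paths are genuinely disjoint with at most two $v$'s per path), and recover the $\alpha$-edges by alternation; but you need to actually do it, and there is also the wrinkle that $\beta_v$ depends on the random $x_0$ too, so $P_{v,\alpha}$ is not even a well-defined single path given $(v,\alpha)$.

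The paper sidesteps all of this with a simpler accounting: rather than conditioning on $v$ and $\alpha$ and then chasing the algorithm's choice of $\beta$, it fixes an arbitrary maximal alternating path $Q$ with endpoints $x,y$ and observes that the event $P = Q$ forces $v \in \{x,y\}$ \emph{and} $\alpha$ to be the unique color of $Q$ missing at that endpoint; this gives $\Pr[P = Q] \leq 1/\ell$ uniformly over $Q$, with no need to know which $\beta$ the fan construction picks. Then $\E[\length(P)] = \sum_Q \length(Q)\Pr[P=Q] \leq \frac{1}{\ell}\sum_Q \length(Q) \leq m(d+1)/\ell$, where the last inequality holds because each colored edge lies on at most $d+1$ maximal alternating paths (one per choice of the other color). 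This is the ``double-counting over edges'' you intuit, but applied to \emph{all} alternating paths at once rather than per-$\alpha$, which is both cleaner and avoids the edge-disjointness pitfall entirely. I recommend abandoning the per-$\alpha$ sum and adopting the per-path probability bound.
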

\begin{proof}
The correctness of \textproc{Random-Color-One} follows from Lemmas~\ref{lem:makefan} and~\ref{lem:activatecfan}, as does the worst-case $O(n)$ runtime.
It remains to show the expected time.

\textproc{Random-Color-One} chooses $v$, $x_0$, and $\alpha$ by a random process.
Treat these as random variables and let $P$ be a random variable representing the flipped path.
Since \textproc{Random-Color-One} runs in $O(d + \length(P))$ time, it is sufficient to show that $\E[\length(P)] = O(md/\ell)$.

Let $Q$ be an alternating path in the graph with endpoints $x$ and $y$.
Since they are endpoints, $x$ and $y$ must each be missing exactly one of the colors that appear on $Q$.
Say $\gamma_x \in M(x)$ and $\gamma_y \in M(y)$ are both colors used in $Q$.
In order for $P$ to take the value $Q$, we need either $v = x$ and $\alpha = \gamma_x$ or $v = y$ and $\alpha = \gamma_y$.

Since $vx_0$ is chosen uniformly at random from the uncolored edges, we have \[\Pr[v = x] = \frac{\deg(x) - |\Mb(x)|}{2\ell} \leq \frac{|M(x)|}{2\ell}\]
and similarly $\Pr[v = y] \leq |M(y)|/2\ell$. Since $\alpha$ is chosen uniformly at random from $M(v)$,
\begin{align*}
\Pr[P = Q] &\leq \Pr[v = x \text{ and } \alpha = \gamma_x]~+~\Pr[v = y \text{ and } \alpha = \gamma_y]\\
&= \Pr[v = x] \Pr[\text{$\alpha = \gamma_x$} \mid v = x]~+~\Pr[v = y] \Pr[\text{$\alpha = \gamma_y$} \mid v = y]\\
&\leq \frac{|M(x)|}{2\ell}\cdot\frac{1}{|M(x)|}~+~\frac{|M(y)|}{2\ell}\cdot\frac{1}{|M(y)|}\\
&= \frac{1}{\ell}
\end{align*}

Now, letting $Q$ range over all alternating paths in $G$, we have
\[\E[\length(P)] = \sum_{Q} \length(Q) \cdot \Pr[P = Q] \leq \sum_{Q} \frac{\length(Q)}{\ell}\]

Observe that every colored edge in $G$ contributes to at most $d+1$ different alternating paths in $G$. 
Hence $\sum_{Q} \length(Q) \leq m(d+1)$, and so $\E[\length(P)] \leq m(d+1)/\ell$.
\end{proof}

Thus, with this trivial modification to \textproc{Color-One}, the runtime is much improved.
It is especially significant because it does as well (in expectation) as the better of \textproc{Color-One} and \textproc{Color-Many} in terms of the time per edge colored, yet it is no more difficult to implement than \textproc{Color-One}.
As we showed via \textproc{Random-Euler-Color} in Section~\ref{sec:randomeuler}, \textproc{Random-Color-One} is all we need to edge-color a graph in $O(m\sqrt n)$ time with high probability.

\section{Color-Many}
\label{sec:many}

This section presents \textproc{Color-Many}, a subroutine that colors $\Omega(\ell/d)$ uncolored edges in a graph in $O(m)$ time.
In other words, it colors edges at an average cost of $O(md/\ell)$ time per edge.
In that sense, \textproc{Color-Many} serves as a deterministic replacement for \textproc{Random-Color-One}.
We used \textproc{Color-Many} as a subroutine in \textproc{Euler-Color} to achieve a runtime of $O(m\sqrt n)$.

\textproc{Color-Many} is our version of \textproc{Parallel-Color}, a procedure presented in~\cite{GNKLT85}.
As far as our results are concerned, the differences between \textproc{Color-Many} and \textproc{Parallel-Color} are unimportant---they have the same asymptotic efficiency per edge that they color.
\textproc{Color-Many} uses the essential ideas and mechanisms of \textproc{Parallel-Color}, and indeed they are similar in many ways.
We offer a more complete and thorough analysis of the procedure and we introduce some new ideas in \textproc{Color-Many} to simplify the presentation and implementation.
Although \textproc{Parallel-Color} could be used in place of \textproc{Color-Many}, we believe that \textproc{Color-Many} is preferable.

Despite these improvements, \textproc{Color-Many} is technical and requires some care.
Before we begin, we need some additional machinery.

\subsection*{Coloring with u-Fans}
We need a second kind of fan called a \emph{u-fan}.
Like a c-fan, a u-fan can be used to color an uncolored edge by a procedure we call \textproc{Activate-u-Fan}.

\begin{definition}
\label{def:ufan}
A \emph{u-fan} $F$ is a sequence $(\alpha, v, x_1, \dots, x_k)$ where $k \geq 2$, such that 
\bi
\item $x_1, \dots, x_k$ are distinct vertices that are all incident to $v$,
\item $vx_i$ is uncolored for $i = 1, \dots k$, and
\item $\alpha$ is a color that is missing at every $x_i$ and not missing at $v$; that is, \[\alpha \in \Mb(v) \cap \bigcap_{i=1}^{k} M(x_i).\]
\ei
Vertex $v$ is called the \emph{center} of $F$, and $F$ is said to be \emph{centered} at $v$.
The vertices $x_1, \dots, x_k$ are called the \emph{leaves} of $F$.
\end{definition}

Note that we require a u-fan to have at least two leaves.
We need this property to reliably color an uncolored edge using a u-fan.
If a u-fan is reduced to have less than two leaves, we call it \emph{degenerate}.



\textproc{Activate-u-Fan} takes a u-fan $F = (\alpha, v, x_1, \dots, x_k)$ and a color $\beta$ that is missing at $v$.
It flips the $\alpha\beta$-path beginning at $v$ so that $\alpha$ is missing at $v$.
If the flipped path ends at one of the leaves, then $\alpha$ is no longer missing there, so that leaf is removed from $F$.
A leaf $x$ of $F$ is chosen and $vx$ is colored by $\alpha$.
Then $x$ is removed from $F$ so that the resulting fan is still a u-fan (possibly a degenerate one).

\begin{algorithmic}[1]
\Procedure {Activate-u-Fan}{{$F, \beta$}}
\Require {$F = (\alpha, v, x_1, \dots, x_k)$ is a u-fan}
\Require {$\beta \in M(v)$}
\State {Flip the $\alpha\beta$-path $P$ beginning at $v$}
\State {Let $w$ be the endpoint of $P$}
\If {$w$ is a leaf of $F$}
\State {Remove $w$ from $F$}
\EndIf
\State {Pick any leaf $x$ of $F$}
\State {Color $vx$ by $\alpha$}
\State {Remove $x$ from $F$}
\EndProcedure
\end{algorithmic}

\textproc{Activate-u-Fan} takes time proportional to the length of the flipped path.
It always colors an edge and $F$ remains a (possibly degenerate) u-fan.
Since the colored edge is missing $\alpha$ at both endpoints, the partial edge-coloring of the graph remains legal.

%
%

\subsection*{Outline of \textproc{Color-Many}}

\textproc{Color-Many} is based on a set of disjoint fans that use the same color $\alpha$ as their first parameter.
We refer to such a set as an \emph{$\alpha$-collection}.
Formally, for a color $\alpha \in [d+1]$:
\begin{definition}
An \emph{$\alpha$-collection} is a set $\cC_\alpha$ of vertex-disjoint fans such that
every fan $F \in \cC_\alpha$ is either a primed c-fan $(\alpha, v, x_0, x_1, \dots, x_k)$ or a u-fan $(\alpha, v, x_1, \dots, x_k)$.
Let $V(\cC_\alpha)$ denote the set of vertices in all the fans of $\cC_\alpha$.
\end{definition}

\textproc{Color-Many} builds an $\alpha$-collection $\cC_\alpha$ around a set of uncolored edges where each edge has one endpoint missing $\alpha$. Then it uses $\cC_\alpha$ to color a large fraction of those edges.
It proceeds in three steps, each requiring $O(m)$ time, that work broadly as follows.
\bi[leftmargin=*]
\item[]\textbf{Step 1: Choose $\alpha$}

Say a vertex is \emph{incomplete} if it is incident to an uncolored edge.
Let $I$ be the set of incomplete vertices.
For any color $\beta$, let $I_\beta = \{v \in I \mid \beta \in M(v)\}$.
Choose $\alpha$ to maximize $|I_\alpha|$.

\item[]\textbf{Step 2: Build an $\alpha$-collection}

Build an $\alpha$-collection $\cC_\alpha$ using the vertices of $I_\alpha$.
The goal is for many vertices of $I_\alpha$ to end up in $V(\cC_\alpha)$ as the centers of primed c-fans or the leaves of u-fans.
\textproc{Color-Many} does this by a procedure called \textproc{Build-Collection}.

\item[]\textbf{Step 3: Activate the $\alpha$-collection}

Activate each fan in $\cC_\alpha$ using \textproc{Activate-c-Fan} or \textproc{Activate-u-Fan}, as appropriate.
After an activation, find any other fans in the collection that have been damaged and either repair them or discard them so that $\cC_\alpha$ remains an $\alpha$-collection.
\textproc{Color-Many} does this by a procedure called \textproc{Activate-Collection}.
\ei
Note that edges may be colored in both step 2 and step 3.

\begin{algorithmic}[1]
\Procedure {Color-Many}{{}}
\State {Choose $\alpha$ that maximizes $|I_\alpha|$}
\State {$\cC_\alpha \gets \textproc{Build-Collection}(\alpha, I_\alpha)$}
\State {\textproc{Activate-Collection($\cC_\alpha$)}}
\EndProcedure
\end{algorithmic}

These steps are laid out in detail in the next sections. We shall prove the following theorem.
\begin{theorem}[\ref{thm:many}]
\textproc{Color-Many} colors $\Omega(\ell/d)$ edges in $O(m)$ time.
\end{theorem}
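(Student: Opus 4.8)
The plan is to walk \textproc{Color-Many} through its three steps, tracking at each step both the running time (which should be $O(m)$) and a lower bound on the number of newly coloured edges (which should be $\Omega(\ell/d)$). Everything funnels through the single quantity $|I_\alpha|$, so the first task is to show that the colour $\alpha$ chosen in Step~1 satisfies $|I_\alpha| = \Omega(\ell/d)$. This is a short averaging argument. If $v$ is incomplete then $v$ is incident to at least one uncoloured edge, so among its $\deg(v) \leq d$ incident edges at most $d-1$ are coloured; writing $d^u(v)$ for the number of uncoloured edges at $v$, we get $|M(v)| \geq (d+1) - (\deg(v) - d^u(v)) \geq 1 + d^u(v)$. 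Summing over $I$ and using $\sum_{v \in I} d^u(v) = 2\ell$,
\[
\sum_{\beta \in [d+1]} |I_\beta| \;=\; \sum_{v \in I} |M(v)| \;\geq\; |I| + 2\ell \;\geq\; 2\ell ,
\]
so the maximising colour $\alpha$ satisfies $|I_\alpha| \geq 2\ell/(d+1) = \Omega(\ell/d)$. Computing every $|I_\beta|$ and taking the maximum is $O(m)$ using the data structures of Section~\ref{sec:datastructs}.

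Next I would invoke the guarantee that the analysis of \textproc{Build-Collection} will supply: in $O(m)$ time it returns an $\alpha$-collection $\cC_\alpha$ that is \emph{productive} for a constant fraction of $I_\alpha$. The right measure of productivity is the number of edges that activating all of $\cC_\alpha$ would colour in the absence of any damage: a primed c-fan contributes one, and a u-fan with $k$ leaves contributes $\Omega(k)$, since each activation of a u-fan colours one edge and removes one or two of its leaves, so it can be activated until it degenerates. The target is that this total is $\Omega(|I_\alpha|) = \Omega(\ell/d)$, reflecting that \textproc{Build-Collection} manages to place a constant fraction of the vertices of $I_\alpha$ either as centres of primed c-fans or as leaves of u-fans.

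Then I would analyse \textproc{Activate-Collection}. Activating a single fan --- via \textproc{Activate-c-Fan} (Lemma~\ref{lem:activatecfan}) or \textproc{Activate-u-Fan} --- colours one edge and flips one $\alpha\beta$-alternating path beginning at the fan's centre; the only fans of $\cC_\alpha$ that this can invalidate are the ones meeting the flipped path (in particular the far endpoint $w$ and any fan centre lying on the path), and \textproc{Activate-Collection} repairs or discards exactly those. The claim to establish is that, processing the fans one at a time and discarding those that become damaged, a constant fraction of the potential colourings counted above are actually realised, so the total number of edges coloured across Steps~2 and~3 is $\Omega(\ell/d)$.

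The main obstacle is bounding the running time of Step~3 --- equivalently, the total length of all flipped alternating paths --- by $O(m)$, and doing so in a way that simultaneously bounds the damage so that only a constant fraction of fans are lost. For a fixed colour pair $(\alpha,\beta)$ the $\alpha\beta$-paths are edge-disjoint, so flipping each of them at most once costs only $O(m)$; the subtlety is that different fans are primed by different colours $\beta$, so an edge coloured $\alpha$ could a priori lie on flipped paths for many values of $\beta$. Resolving this is where the substance of the proof lies: it will rest on the facts that flipping an $\alpha\beta$-path destroys colour $\alpha$ at the path's fan-centre and recolours its $\alpha$-edges (so the relevant $\alpha$-structure is consumed rather than repeatedly reused), together with --- if needed --- abandoning any fan whose flip path exceeds a length threshold and arguing, once more by edge-disjointness for a fixed colour pair, that only a small fraction of fans are discarded for that reason. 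These arguments are carried out in the detailed treatment of \textproc{Build-Collection} and \textproc{Activate-Collection} below.
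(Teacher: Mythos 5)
Your Step~1 matches the paper exactly: the same averaging gives $\sum_\beta |I_\beta| = \sum_{v\in I}|M(v)| \geq 2\ell$, hence $|I_\alpha| \geq 2\ell/(d+1)$ in $O(m)$ time. Your high-level outline for Steps~2 and~3 also has the right shape --- track a measure of ``productivity'' of $\cC_\alpha$, show it is $\Omega(|I_\alpha|)$, and show activation realizes a constant fraction of it. But the two places where the substance actually lives are left at the level of intention, and the one concrete mechanism you propose for the hard part is not the paper's and is not clearly sufficient.

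The hard part is the $O(m)$ bound on the total length of all flipped alternating paths, while simultaneously limiting collateral damage to other fans. You observe correctly that edge-disjointness of $\alpha\beta$-paths only holds for a \emph{fixed} $\beta$, and that an edge colored $\alpha$ could a priori appear in flips for many different $\beta$. Your proposed resolution (``flipping consumes the $\alpha$-structure,'' plus possibly abandoning fans whose flip path is too long) is speculative and is not what the paper does. The paper's actual mechanism is \textproc{Disconnect-Vertex}: after each activation the algorithm repeatedly extends the flipped $\alpha\beta_i$-path and repairs/discards fans at its endpoints until neither endpoint lies in a fan of $\cC_\alpha$. This enforces two invariants (Preconditions~1 and~2) which give (a) every fan in $\cC_\alpha$ is well-formed before the next activation, and (b) the paths produced within a single stage $i$ are vertex-disjoint (Lemma~\ref{lem:disjoint}). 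Then the crucial accounting observation is that stage $i$ only touches edges colored $\alpha$ or $\beta_i$, so the number $m_i$ of $\beta_i$-edges at the end of stage $i$ is unchanged thereafter, giving $\sum_i m_i \leq m$; since within stage $i$ the disjoint $\alpha\beta_i$-paths have total length at most $2m_i + (\text{number of paths})$, the overall bound $\sum_t \length(P_t) \leq 3m$ follows. Nothing in your proposal produces either the disjointness or the per-stage edge-count argument, and without them the threshold idea would have to be shown not to discard too many fans --- a claim you don't establish.

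Similarly, the counting in Steps~2 and~3 needs a precise charge scheme, not just ``a constant fraction.'' The paper proves: during \textproc{Build-Collection}, at most $3$ vertices of $I_\alpha$ leave $V(\cC_\alpha)$ per newly colored edge (Lemma~\ref{lem:buildNumber}), so $|V(\cC_\alpha)\cap I_\alpha| + 3(\ell_b-\ell_m) \geq |I_\alpha|$; during \textproc{Activate-Collection}, every iteration colors $r\geq 1$ new edges while removing at most $r+6$ vertices of $I_\alpha$ from $V(\cC_\alpha)$ (Lemma~\ref{lem:activateNumber}), and $\cC_\alpha$ empties entirely (Lemma~\ref{lem:empties}), giving $\ell_m-\ell_f \geq |V(\cC_\alpha)\cap I_\alpha|/7$. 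Combining, $\ell_b - \ell_f \geq |I_\alpha|/7 \geq 2\ell_b/(7(d+1))$. Your ``u-fan with $k$ leaves contributes $\Omega(k)$'' framing is not how the paper accounts and glosses over the fact that a single damaged endpoint, plus the degeneracy rule, can remove up to three $I_\alpha$-vertices from a u-fan per coloring. The shape of your argument is compatible with the paper's, but the parts you defer to ``the detailed treatment below'' are precisely the parts that carry the theorem.
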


\paragraph*{Step 1: Choose $\alpha$}

This step is straightforward.
\begin{lemma}
\label{lem:Ialpha}
There is some color $\alpha$ such that $|I_\alpha| \geq 2\ell/(d+1)$, and $\alpha$ can be found in $O(m)$ time.
\end{lemma}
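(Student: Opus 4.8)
The plan is to bound $\sum_\beta |I_\beta|$ from below and then apply an averaging argument over the $d+1$ colors. First I would observe that for any incomplete vertex $v$, since $v$ is incident to at least one uncolored edge, we have $\deg(v) \leq$ (number of colored edges at $v$) $+\, \deg(v) - 1$, so $v$ is missing at least... more precisely, $v$ is incident to at most $\deg(v)-1$ colored edges, hence $|M(v)| \geq (d+1) - (\deg(v)-1) = d + 2 - \deg(v) \geq 2$, since $\deg(v) \leq d$. So every incomplete vertex lies in $I_\beta$ for at least two colors $\beta$. Summing over $v \in I$ gives $\sum_{\beta \in [d+1]} |I_\beta| = \sum_{v \in I} |M(v)| \geq 2|I|$.

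Next I would relate $|I|$ to $\ell$. Each uncolored edge has two endpoints, both of which are incomplete, so $|I| \geq$ something — but this goes the wrong way. Instead, note that the $\ell$ uncolored edges are covered by the vertices in $I$, and each vertex in $I$ has degree at most $d$, so it is incident to at most $d$ uncolored edges; hence $\ell \leq \frac{1}{2}\sum_{v\in I}(\text{uncolored edges at }v) \leq \frac{1}{2}|I|\cdot d$, giving $|I| \geq 2\ell/d$. (A marginally cleaner bound replacing $d$ by $d+1$ or adjusting constants may be needed to land exactly on $2\ell/(d+1)$; I would chase the $+1$'s carefully here, possibly using that an incomplete vertex incident to $d$ uncolored edges has all its edges uncolored so it contributes extra missing colors, but the crude bound already gives the right order and a small adjustment suffices.)

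Combining, $\sum_{\beta\in[d+1]} |I_\beta| \geq 2|I| \geq 4\ell/d \geq 2\ell(d+1)/d \cdot \frac{2}{d+1}$... let me instead just say: $\sum_\beta |I_\beta| \geq 2|I|$ and $|I| \geq \ell(d+1)/d$ after the careful count, so by pigeonhole some $\alpha$ has $|I_\alpha| \geq \frac{2|I|}{d+1} \geq \frac{2\ell}{d+1}$. Finally, for the running time: computing $I$ takes $O(m)$ time by scanning all edges and marking incomplete vertices; computing all $|I_\beta|$ amounts to iterating over each $v \in I$ and each color in $\mu(v) \cap M(v)$ — but $M(v)$ may be large, so instead I would iterate over the \emph{colored} edges at each incomplete vertex to compute $|\Mb(v)|$, then $|I_\beta|$ is accumulated as $(d+1) - |\Mb(v)|$ spread appropriately; more directly, maintain a counter array indexed by color, and for each $v\in I$ increment the counters for colors in $M(v)$ — again the issue is $M(v)$ being large. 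The clean fix: $\sum_\beta |I_\beta| = \sum_{v \in I} |M(v)| = (d+1)|I| - \sum_{v\in I}|\Mb(v)|$, and $|\Mb(v)|$ equals the number of distinct colors on edges at $v$, computable in $O(\deg v)$ time; but we need the per-color counts $|I_\beta|$, not just the sum. I expect this bookkeeping to be the main obstacle: to get each $|I_\beta|$ in total $O(m)$ time, I would, for each incomplete $v$, mark the colors \emph{present} at $v$ (in $O(\deg v)$ time using the edge list), then realize $\beta \in M(v)$ iff $\beta$ is unmarked, so decrement a "global in-$I$" count of $|I|$ by the marked colors — i.e.\ initialize $\mathrm{cnt}[\beta] = |I|$ for all $\beta$, then for each $v\in I$ and each color $\gamma$ present at $v$, decrement $\mathrm{cnt}[\gamma]$; unmark afterward. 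Total work is $O(\sum_{v\in I}\deg v) = O(m)$, and afterward $\mathrm{cnt}[\beta] = |I_\beta|$, so scanning for the maximum takes $O(d) \leq O(m)$ time. This establishes both claims.
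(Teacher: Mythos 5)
There is a genuine gap in the counting argument. You correctly write $\sum_{\beta} |I_\beta| = \sum_{v\in I}|M(v)|$, but then lower-bound each term by $|M(v)|\geq 2$ and try to recover the factor of $\ell$ by proving $|I| \geq \ell(d+1)/d$. That second claim is false in general. If the $\ell$ uncolored edges form a clique $K_k$ on $k$ vertices (each of degree $\leq d$), then $|I| = k$ while $\ell = k(k-1)/2$, so $|I|$ can be smaller than $\ell$ by a factor of $\Theta(k)$, not larger. Your combination $2|I| \geq 2\ell$ therefore fails, and the "crude bound" does not give the right order of magnitude — the worst case has $4\ell/(d(d+1))$ rather than $2\ell/(d+1)$, off by a factor of $\Theta(d)$.

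The fix is to not throw away how \emph{many} uncolored edges are incident to $v$. If $v$ has $k_v\geq 1$ uncolored incident edges, then at most $\deg(v)-k_v \leq d-k_v$ of its edges are colored, so $|M(v)| \geq (d+1)-(d-k_v) = k_v+1$. Summing over $v\in I$ gives
\[
\sum_{\beta\in[d+1]}|I_\beta| \;=\; \sum_{v\in I}|M(v)| \;\geq\; \sum_{v\in I}(k_v+1) \;=\; 2\ell + |I| \;\geq\; 2\ell,
\]
since $\sum_{v\in I}k_v = 2\ell$ (each uncolored edge is counted once at each of its two endpoints, both of which lie in $I$). Pigeonhole over the $d+1$ colors then gives $|I_\alpha|\geq 2\ell/(d+1)$ with no need for any bound on $|I|$ at all. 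This is exactly the paper's argument. Your runtime discussion (marking the colors present at each incomplete vertex, maintaining a per-color counter in $O(\sum_{v\in I}\deg v)=O(m)$ time) is fine and matches the paper's method.
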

\begin{proof}
Observe that
\[\sum_{\beta \in [d+1]} |I_\beta| = \sum_{v\in I} |M(v)| \geq 2\ell\]
Choose $\alpha$ to be the color missing at the largest number of incomplete vertices.
Then $|I_\alpha| \geq 2\ell/(d+1)$.

We can find $\alpha$ by counting the number of edges of each color incident to an incomplete vertex (an edge is counted twice if both endpoints are incomplete) and choosing $\alpha$ to be the color with the lowest count. Finding $\alpha$ takes $O(m)$ time. Then $I_\alpha$ can be constructed in $O(m)$ time.
\end{proof}

\paragraph*{Step 2: Build an $\alpha$-Collection}
\textproc{Color-Many} calls a procedure \textproc{Build-Collection} to build an $\alpha$-collection $\cC_\alpha$ of disjoint fans containing vertices of $I_\alpha$.
This $\alpha$-collection shall have the property that each center of a c-fan and each leaf of a u-fan is a vertex of $I_\alpha$ (and the other vertices of $V(\cC_\alpha)$ are not in $I_\alpha$).

\textproc{Build-Collection} creates a set $S$ initially containing the vertices of $I_\alpha$.
While $S$ is nonempty, \textproc{Build-Collection} removes a vertex $v$ from $S$ and calls a procedure \textproc{Make-Collection-Fan}, which attempts to build a primed c-fan with $v$ as the center.

\begin{algorithmic}[1]
\Procedure {Build-Collection}{{$\alpha, I_\alpha$}}
\State {$\cC_\alpha \gets \emptyset$}
\State {$S \gets I_\alpha$}
\For {$v \in S$}
\State {Remove $v$ from $S$}
\State {Choose uncolored incident edge $vx_0$}
\State {$F \gets \textproc{Make-Collection-Fan}(v, x_0, \alpha, \cC_\alpha)$}
\State {If $F \not= \Null$, add $F$ to $\cC_\alpha$}
\EndFor
\EndProcedure
\end{algorithmic}

\textproc{Make-Collection-Fan} attempts to build a c-fan centered at $v$ in the same way as \textproc{Make-Primed-Fan}, with two important differences:
First, if a leaf added to the fan already appears in some other fan of the collection, then the two intersecting fans need to be altered because fans must be disjoint in an $\alpha$-collection. This is done by a procedure called \textproc{Merge-Fans}.
Second, if a leaf $x$ is missing $\alpha$, then the fan is shifted from $x$ immediately and $vx$ is colored by $\alpha$.
This is necessary to ensure that a vertex of $I_\alpha$ does not end up as a leaf of a c-fan.


Formally, \textproc{Make-Collection-Fan} works as follows given an uncolored edge $vx_0$ with $v \in I_\alpha$.
Initialize a c-fan $F = (\alpha, v, x_0)$.
Repeatedly, letting $x_k$ be the last leaf of $F$, do the following:
\be
\item
Check whether $x_k$ already belongs to a different fan $F'$ in $\cC_\alpha$.
If it does, then $F$ and $F'$ overlap at $x_k$.
Call $\textproc{Merge-Fans}(F, F')$ and return the result.
\item
If $\alpha \in M(x_k)$ then shift $F$ from $x_k$.
Now $vx_k$ is uncolored and $\alpha$ is missing at both endpoints.
Color $vx_k$ by $\alpha$ and discard $F$.
If $x_k \in S$, then remove $x_k$ from $S$.
\item
Otherwise, proceed as in \textproc{Make-Primed-Fan}: Choose a color $\beta \in M(x_k)$.
If $\beta \in M(v)$, then prime $F$ with $\beta$ and return it.
If $\beta \not\in M(v)$, let $x_{k+1}$ be the vertex with $c(vx_{k+1}) = \beta$.
If $x_{k+1}$ already appears in $F$, then prime $F$ by $\beta$ and return $F$.
Otherwise append $x_{k+1}$ to $F$ as a leaf.
\ee

\textproc{Merge-Fans}$(F, F')$ accepts a c-fan $F$ whose last leaf $x_k$ also lies in the fan $F'$.
First, it shifts $F$ from $x_k$ so that $vx_k$ is uncolored.
Then it operates in four cases, depending on whether $F'$ is a c-fan or a u-fan and whether $x_k$ is the center of $F'$ or a leaf.
In all cases, the fan $F$ will be discarded (i.e. not included in the $\alpha$-collection).
These cases are illustrated in Figure~\ref{fig:mergecases}.

\be[label={\textbf{Case \Roman*:}}, leftmargin=\widthof{\textbf{Case III:}}+\labelsep]
\item If $F'$ is a c-fan and $x_k$ is its center, then $\alpha$ is missing at both $v$ and $x_k$.
Color $vx_k$ by $\alpha$ and discard $F'$.
\item If $F'$ is a c-fan (with center $u$) and $x_k$ is a leaf of $F'$, then create a u-fan as follows:
Shift $F'$ from $x_k$.
Now $ux_k$ and $vx_k$ are uncolored and $\alpha$ is missing at both $u$ and $v$.
By the construction of $F'$, $\alpha \not\in M(x_k)$.
Create a new u-fan having $x_k$ as its center and $u$ and $v$ as its leaves.
Add the u-fan to $\cC_\alpha$ and discard $F'$.

\item If $F'$ is a u-fan and $x_k$ is its center, then add $v$ as a leaf of $F'$.
\item If $F'$ is a u-fan and $x_k$ is a leaf of $F'$, then $\alpha$ is missing at both $v$ and $x_k$.
Color $vx_k$ by $\alpha$.
Remove $x_k$ from $F'$; if this causes $F'$ to be degenerate (i.e. have less than two leaves) then discard $F'$.
\ee

\begin{figure}[!ht]
\begin{center}
\includegraphics[scale=0.36]{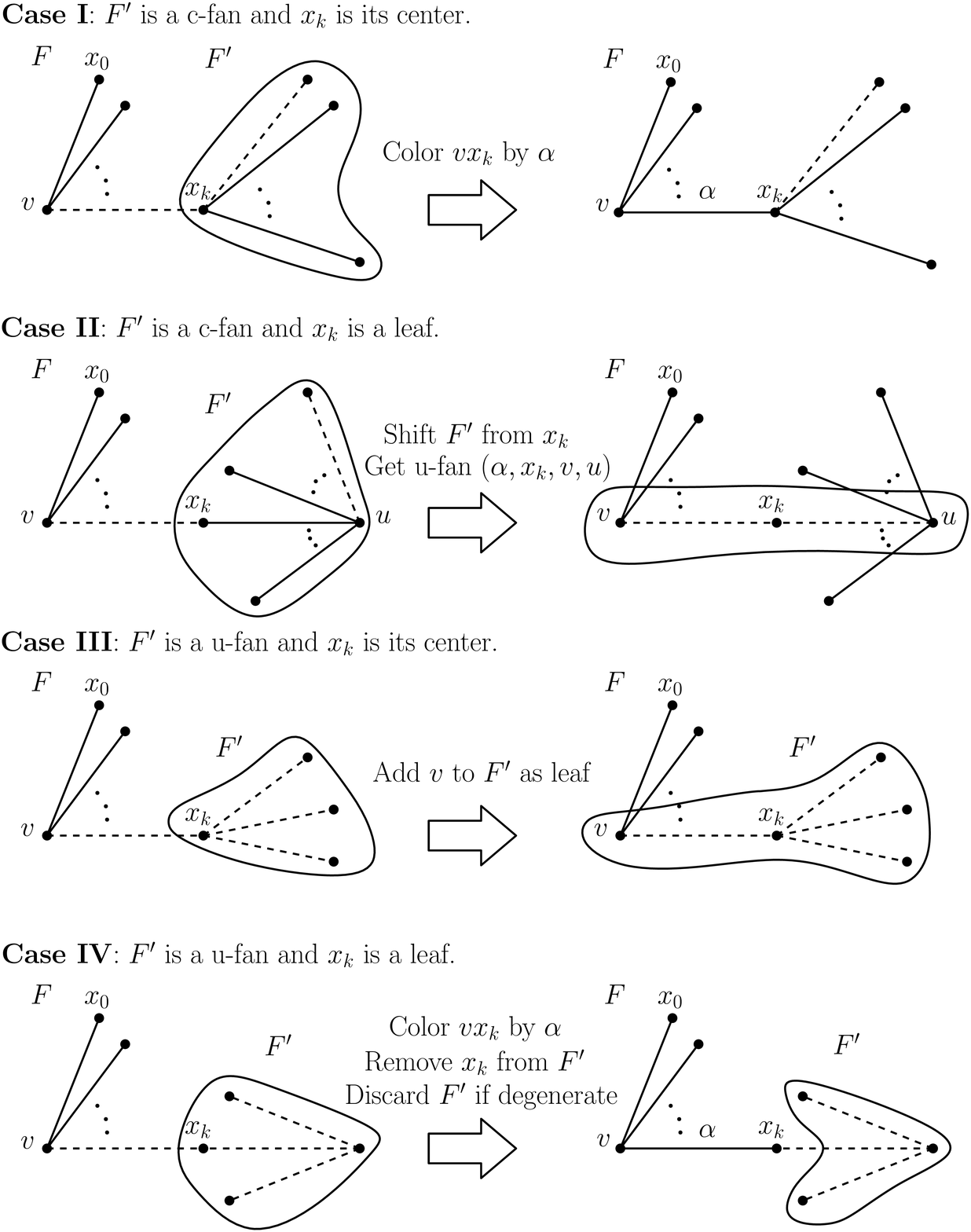}
\end{center}
\caption{A visual summary of \textproc{Merge-Fans}. In each of the four cases, a c-fan $F$ intersects at its last leaf $x_k$ with some fan $F'$ of $\cC_\alpha$. The first step is always to shift $F$ from $x_k$. The left side shows the configuration of the intersection (after the shift) and the right side shows the result of \textproc{Merge-Fans}. The circled vertices in each figure indicate the fans that belongs to $\cC_\alpha$.}
\label{fig:mergecases}
\end{figure}

\paragraph*{Analysis of Build-Collection}

Let us check the correctness and runtime of \textproc{Build-Collection}.

\begin{lemma}
\label{lem:buildTime}
$\textproc{Build-Collection}(\alpha, I_\alpha)$ makes an $\alpha$-collection in $O(m)$ time.
\end{lemma}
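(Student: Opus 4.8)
The plan is to establish two things separately: that \textproc{Build-Collection} produces a valid $\alpha$-collection, and that it runs in $O(m)$ time. For correctness, I would argue by a loop invariant: at the end of each iteration of the main \textbf{for} loop, $\cC_\alpha$ is an $\alpha$-collection, i.e.\ a set of vertex-disjoint fans each of which is either a primed c-fan or a u-fan with first parameter $\alpha$. The only way a new iteration can threaten this invariant is when \textproc{Make-Collection-Fan} grows a c-fan whose latest leaf $x_k$ collides with an existing fan $F'$; then \textproc{Merge-Fans} is invoked. So the heart of the correctness argument is a case analysis of \textproc{Merge-Fans} (the four cases in Figure~\ref{fig:mergecases}), checking in each that: the shift of $F$ from $x_k$ leaves a legal partial coloring (justified by the shift lemma, since $F$ is a c-fan); the coloring step colors an edge that is missing $\alpha$ at both endpoints (so legality is preserved); and the resulting modified $F'$, or the freshly created u-fan, satisfies the defining conditions of its fan type — in particular that a u-fan still has $\ge 2$ leaves or else is discarded. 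I would also note that Case~II relies on $\alpha \notin M(x_k)$, which holds because $x_k$ was a leaf of the c-fan $F'$ and leaves of c-fans produced here are never missing $\alpha$ (step 2 of \textproc{Make-Collection-Fan} would have caught that). Finally I would observe that the ``center of c-fan / leaf of u-fan is in $I_\alpha$'' structural property is maintained: step~2 of \textproc{Make-Collection-Fan} immediately colors off and discards any fan whose leaf is missing $\alpha$, and \textproc{Merge-Fans} only ever adds $v \in I_\alpha$ as a new u-fan leaf or as a u-fan center paired with leaves that are themselves former c-fan centers (hence in $I_\alpha$).

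For the runtime, the key idea is an amortization argument. Each call to \textproc{Make-Collection-Fan} from vertex $v$ runs in time proportional to the number of leaves it appends to its c-fan before terminating (priming, colliding, or hitting a leaf missing $\alpha$), plus $O(1)$ for the \textproc{Merge-Fans} call if one happens, plus the cost of any shift — but a shift of a c-fan of size $s$ costs $O(s)$, and $s$ is bounded by the number of leaves appended in that same call. Crucially, appending a leaf $x_{k+1}$ means examining the edge $vx_{k+1}$, and since the leaves of a single c-fan are distinct, the total work in one \textproc{Make-Collection-Fan} call centered at $v$ is $O(\deg(v))$ in the worst case; summing over all $v \in I_\alpha$ gives $O(\sum_v \deg(v)) = O(m)$. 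The subtle point is that \textproc{Merge-Fans} can discard or shrink existing fans, so one might worry about repeated rebuilding — but each \textbf{for}-loop iteration processes one vertex from the initial set $S = I_\alpha$ (vertices are only ever removed from $S$, never added), and vertices removed from $S$ inside step~2 of \textproc{Make-Collection-Fan} are exactly those colored off, so the loop runs at most $|I_\alpha| \le n$ times. Each iteration is charged to the degree of its center, and every edge-examination and every shift-step is charged to a distinct incident edge of the current center, giving $O(m)$ total; the constant-time data-structure operations (from Section~\ref{sec:datastructs}: checking membership in $M(\cdot)$, finding the edge of a given color, coloring/uncoloring, shifting) make each unit of work $O(1)$.

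The main obstacle I anticipate is pinning down the amortization cleanly in the presence of \textproc{Merge-Fans}, specifically making sure no edge or fan-vertex is charged twice across the whole execution. The danger scenario is a vertex $u$ that first becomes a c-fan center, then later has its c-fan converted (Case~II) into a u-fan centered at some $x_k$ with $u$ as a leaf, and later still gets touched again — I need the invariant that once \textproc{Make-Collection-Fan} has been called for $v$ (popping it from $S$), $v$ is never the center of a fresh c-fan construction again, so the $O(\deg(v))$ charge is incurred at most once per vertex. Shifts triggered inside \textproc{Merge-Fans} act on the \emph{incoming} c-fan $F$, whose size is already accounted for by the current call's leaf count, and on $F'$ via step~``shift $F'$ from $x_k$'' in Case~II, whose cost is $O(|F'|)$ — this needs a separate charge, but since $F'$ is then discarded, its vertices (which were charged when $F'$ was built) can absorb this one-time $O(|F'|)$ cost. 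I would make this precise with a token/potential argument: assign each fan in $\cC_\alpha$ a number of tokens equal to its size at creation, so that any later shift-and-discard of that fan is paid for. With that bookkeeping in place the runtime bound follows, and combined with the invariant-based correctness argument the lemma is proved.
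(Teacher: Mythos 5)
Your proposal is correct and follows essentially the same approach as the paper's proof: correctness by a loop invariant plus case analysis of \textproc{Merge-Fans}, and runtime by charging $O(\deg(v))$ to each $v \in I_\alpha$ for its single c-fan construction, then amortizing the cost of shifts against the (one-time) construction cost of the shifted fan since any shifted fan is discarded immediately afterward. The paper states the shift amortization a bit more tersely (``any shifted fan is always discarded immediately afterwards, and so each c-fan that is built can be shifted only once'') whereas you frame it as a token/potential argument, but these are the same observation.
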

\begin{proof}
\textproc{Build-Collection} maintains the invariant that every vertex of $V(\cC_\alpha) \cap I_\alpha$ is either the center of a c-fan or a leaf of a u-fan.
Moreover, $\alpha$ cannot be missing at any leaf of a c-fan in $\cC_\alpha$.
These are ensured by steps 1 and 2 of \textproc{Make-Collection-Fan}.

Let us check that $\cC_\alpha$ remains an $\alpha$-collection after a call to \textproc{Make-Collection-Fan}.
It can terminate in steps 1, 2, or 3.

If it terminates during step 3, then $F$ is a primed c-fan created just as in \textproc{Make-Primed-Fan} and it is disjoint from every other fan in the collection.
If it terminates because of step 2, then a new edge is colored and $F$ is discarded.
No other fans are affected in these cases, and hence $\cC_\alpha$ is an $\alpha$-collection.

If it terminates because of step 1, then $F$ is merged with a fan $F'$ in the collection because they overlap at $x_k$. \textproc{Merge-Fans} only touches the vertices of $F$ and $F'$, so no fans are affected except for these two. It either returns a new u-fan to be added to the collection (Case II), adds a new leaf to an existing u-fan (Case III), or colors a new edge (Cases I and IV), leaving $F'$ in the collection only if it is still a well-formed fan.
By inspection of the cases, every fan added to $\cC_\alpha$ is well-formed and disjoint from all others in $\cC_\alpha$.
Thus, \textproc{Build-Collection} outputs an $\alpha$-collection.

For the runtime, observe that only the vertices of $I_\alpha$ can ever be the centers of c-fans and at most one c-fan is built around each $v \in I_\alpha$ during \textproc{Build-Collection}.
That is, once a vertex of $I_\alpha$ \emph{stops} being the center of a c-fan, it is never again the center of a c-fan.
It takes $O(\deg(v))$ time to build a c-fan around a center $v \in I_\alpha$.
Hence the total time spent building c-fans is $O\left(\sum_{v \in I_\alpha} \deg(v)\right) = O(m)$.
This accounts for all time spent in steps 2 and 3 of \textproc{Make-Collection-Fan}.

The remaining time is spent on \textproc{Merge-Fans}.
Every operation in \textproc{Merge-Fans} takes constant time except for shifting $F$ and, in Case II, shifting $F'$.
In all cases, any shifted fan is always discarded immediately afterwards, and so each c-fan that is built can be shifted only once.
The total time for all shifts is therefore bounded by the sum of sizes of all c-fans: $O\left(\sum_{v \in I_\alpha} \deg(v)\right) = O(m)$.
Excluding the time spent on shifts, \textproc{Merge-Fans} takes constant time each time it is called (and it is called at most $I_\alpha|$ times).
Hence \textproc{Build-Collection} takes $O(m)$ time.
\end{proof}

Now let us estimate the number of edges colored during \textproc{Build-Collection} and the number of fans in $\cC_\alpha$.
There is be a balance between these two quantities because some vertices of $I_\alpha$ do not end up in $V(\cC_\alpha)$, but that only happens in cases where some edge is colored.

\begin{lemma}
\label{lem:buildNumber}
Let $\ell_b$ and $\ell_m$, respectively, be the number of uncolored edges before and after \textproc{Build-Collection}.
Let $\cC_\alpha$ be the collection output by \textproc{Build-Collection}$(I_\alpha)$.
Then 
\[|V(\cC_\alpha) \cap I_\alpha| + 3(\ell_b - \ell_m) \geq |I_\alpha|\]
\end{lemma}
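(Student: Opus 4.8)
The plan is to show that at most $3(\ell_b-\ell_m)$ vertices of $I_\alpha$ fail to land in $V(\cC_\alpha)$, since the claimed inequality is just $|I_\alpha|-|V(\cC_\alpha)\cap I_\alpha|\le 3(\ell_b-\ell_m)$. First I would record the structural facts that make the accounting possible. Throughout \textproc{Build-Collection} we have $S\subseteq I_\alpha$; by the invariant already used in the proof of Lemma~\ref{lem:buildTime}, the vertices of $V(\cC_\alpha)\cap I_\alpha$ at any moment are exactly the centers of the c-fans and the leaves of the u-fans of $\cC_\alpha$, while the leaves of c-fans and the centers of u-fans are never in $I_\alpha$. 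Each vertex is removed from $S$ exactly once (by the main loop, or inside step 2 of some \textproc{Make-Collection-Fan}) and is never returned. Moreover, an $I_\alpha$-vertex can \emph{enter} $V(\cC_\alpha)$ only as the active center $v$ passed to \textproc{Make-Collection-Fan} — step 3 makes it a c-fan center, and Cases II and III of \textproc{Merge-Fans} make it a u-fan leaf — which requires it to have just been removed from $S$; the conversion of an old c-fan center into a u-fan leaf in Case II is not an entry, since that vertex was already in $V(\cC_\alpha)$. Hence a vertex that is out of $S$ and out of $V(\cC_\alpha)$ stays out of $V(\cC_\alpha)$ forever, and since $S=\emptyset$ at the end, the vertices of $I_\alpha\setminus V(\cC_\alpha)$ are precisely those that at some point were dropped from $S$ by a step 2, or were processed and left outside $V(\cC_\alpha)$, or entered $V(\cC_\alpha)$ and later left it.

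To bundle this into one estimate I would use the potential
\[\Phi \;=\; |V(\cC_\alpha)\cap I_\alpha| \;+\; 3\cdot(\text{number of edges colored so far}) \;+\; |S|.\]
At the start $\Phi=|I_\alpha|$ and at the end $\Phi=|V(\cC_\alpha)\cap I_\alpha|+3(\ell_b-\ell_m)$, so it suffices to show that $\Phi$ never decreases across an iteration of the \textproc{Build-Collection} loop. In an iteration, the vertex $v$ is removed from $S$ (this lowers $\Phi$ by $1$) and then \textproc{Make-Collection-Fan}$(v,x_0,\alpha,\cC_\alpha)$ runs; appending leaves during its loop changes nothing, so $\Phi$ moves again only at the termination step, which is exactly one of: (step 3) a primed c-fan with center $v$ is added, so $v$ joins $V(\cC_\alpha)\cap I_\alpha$ ($+1$); (\textproc{Merge-Fans} Case II or III) $v$ becomes a leaf of a u-fan ($+1$) and the touched fan $F'$ loses none of its $I_\alpha$-vertices, since in Case II its center is converted to a u-fan leaf rather than discarded; (step 2) one edge is colored and possibly one other vertex is deleted from $S$ ($+3$, then $-1$ if $x_k\in S$); (\textproc{Merge-Fans} Case I) one edge is colored and the center of the discarded c-fan $F'$ leaves $V(\cC_\alpha)\cap I_\alpha$ ($+3-1$); (\textproc{Merge-Fans} Case IV) one edge is colored, the u-fan leaf $x_k$ is removed, and if this degenerates $F'$ its one remaining leaf is removed as well, so at most two vertices leave $V(\cC_\alpha)\cap I_\alpha$ ($+3-2$). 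Combining with the initial $-1$, the net change over the iteration is $\ge 0$ in every case, and the constant $3$ is forced exactly by Case IV with a collapsing u-fan. Thus $\Phi$ is nondecreasing, which gives the lemma.

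The part that needs real care is the case analysis of \textproc{Merge-Fans}: one must verify that no single edge-coloring step pushes more than three $I_\alpha$-vertices out of $V(\cC_\alpha)$, which hinges on (i) in Cases II and III the fan $F'$ loses no $I_\alpha$-vertices — in particular an old c-fan center is \emph{converted} to a u-fan leaf rather than deleted — and (ii) in step 2 and Case IV the only vertices that can leave are the ones named explicitly (the active center $v$, the overlap vertex $x_k$, and at most one stranded u-fan leaf). A smaller subtlety is confirming that the ``lost'' status is absorbing — a vertex never re-enters $V(\cC_\alpha)$ or $S$ — which is what legitimizes the per-iteration bookkeeping; this follows from the observation that an $I_\alpha$-vertex joins a fan of $\cC_\alpha$ only during the single iteration in which it is the center handed to \textproc{Make-Collection-Fan}.
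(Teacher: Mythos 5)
Correct, and essentially the same approach as the paper: both arguments count, per edge colored during \textproc{Build-Collection}, at most three vertices of $I_\alpha$ that fail to land in $V(\cC_\alpha)$, with the tight Case~IV (a collapsing two-leaf u-fan) forcing the constant~$3$. Your potential $\Phi$ — in particular, folding $|S|$ into the invariant so that the active center $v$ is accounted for at the moment it leaves $S$ rather than treated as a vertex ``removed from'' a set it never formally entered — is a tidier formalization than the paper's per-edge tally, but the underlying counting and case analysis are identical.
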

\begin{proof}
Let $v \in I_\alpha$.
There are only a few ways that \textproc{Build-Collection} can affect $v$.

Possibly, $v$ could end up in some fan of $\cC_\alpha$, either as the center of a c-fan or a leaf of a u-fan.
That is, $v$ could be in $V(\cC_\alpha) \cap I_\alpha$.

If this does not happen, it is because $v$ is part of a discarded fan or removed from its fan.
These events can occur during the construction of a c-fan (in step 2 of \textproc{Make-Collection-Fan}) or in Cases I and IV of \textproc{Merge-Fans}.
However, all three of these scenarios coincide with an uncolored edge being colored.
Observe that every time an edge is colored, at most three\footnote{The worst case happens in Case IV when $F'$ is a u-fan with two leaves, one being $x_k$. The edge $vx_k$ is colored and $v$ and the two leaves of $F'$ are discarded.} vertices are removed from $V(\cC_\alpha) \cap I_\alpha$.

Therefore, at most $3(\ell_b - \ell_m)$ vertices of $I_\alpha$ do not end up in a fan of $\cC_\alpha$.
It follows that
\[|V(\cC_\alpha) \cap I_\alpha| + 3(\ell_b - \ell_m) \geq |I_\alpha|\qedhere\]
\end{proof}

\paragraph*{Step 3: Activate the $\alpha$-Collection}

Once $\cC_\alpha$ is constructed, \textproc{Color-Many} calls \textproc{Activate-Collection} to color a large fraction of the uncolored edges in the fans of $\cC_\alpha$.
Together with the edges colored in step 2, $\Omega(\ell/d)$ edges will be colored in total.

\textproc{Activate-Collection} works in $d$ stages, one for each color other than $\alpha$.
Let $\beta_1, \dots, \beta_d$ be the colors of $[d+1]\setminus \{\alpha\}$ in some fixed order.
For $i = 1, \dots, d$, let 
\begin{align*}
Q_i = \{F \in \cC_\alpha \mid &\text{ $F$ is a c-fan primed by $\beta_i$ or}\\
&\text{ $F$ is a u-fan and $\beta_i$ is missing at its center}\}
\end{align*}

We assume that the algorithm correctly maintains all the sets $Q_i$ at all times.
This can be handled automatically with constant overhead when coloring and uncoloring edges.

\textproc{Activate-Collection} runs in $d$ stages, where the $i$th stage deals only with the fans of $Q_i$.
Stage $i$ works as follows.
While $Q_i$ is nonempty, pick a fan $F \in Q_i$ and let $v$ be the center of $F$.
If $F$ is a c-fan (primed by $\beta_i$) then call $\textproc{Activate-c-Fan}(F)$.
If $F$ is a u-fan (with $\beta_i \in M(v)$) then call $\textproc{Activate-u-Fan}(F, \beta_i)$.
After the activation, discard $F$ if it is a c-fan or a degenerate u-fan.
Then call \textproc{Disconnect-Vertex}$(v)$, a procedure that cleans up the damage that may have been caused by the activation.

\textproc{Activate-Collection} is given in pseudocode below.
\begin{algorithmic}[1]
\Procedure {Activate-Collection}{{$Q_1, \dots, Q_d$}}
\For {$i$ from $1$ to $d$}
\State\AlgComment {Stage $i$}
\While {$Q_i \not= \emptyset$}
\State {Pick $F \in Q_i$}
\If {$F = (\alpha, v, x_0, \dots, x_k)$ is a c-fan primed by $\beta_i$}
\State {$\textproc{Activate-c-Fan}(F)$}
\State {Remove $F$ from $\cC_\alpha$}
\ElseComment {$F = (\alpha, v, x_1, \dots, x_k)$ is a u-fan and $\beta_i \in M(v)$}
\State {$\textproc{Activate-u-Fan}(F, \beta_i)$}
\If {$F$ has less than two leaves}
\State {Remove $F$ from $\cC_\alpha$}
\EndIf
\EndIf
\State {\textproc{Disconnect-Vertex}$(v)$}
\EndWhile
\EndFor
\EndProcedure
\end{algorithmic}

Except for the call to \textproc{Disconnect-Vertex}, this procedure doesn't do anything surprising.
It activates each fan using the appropriate function, and then discards it unless it is a u-fan that could be activated again later.
The purpose of \textproc{Disconnect-Vertex} is to address two issues.

First, if the alternating path that was flipped during the activation ends at a vertex $w$ of some fan $F'$, then $F'$ may no longer be well-formed (i.e. may not adhere to the definition of a u-fan or a primed c-fan) because $M(w)$ has changed --- specifically, $\alpha$ and $\beta_i$ have been exchanged in $M(w)$.
Whenever a fan $F'$ is malformed for this reason, we say $F'$ is \emph{damaged} and that $w$ is the \emph{damaged vertex} of $F'$.

A damaged fan must be repaired or discarded before the next activation, for otherwise $\cC_\alpha$ contains some ``fan-like object'' that does not satisfy the definition of a fan. Let us formalize this requirement.

\be[leftmargin=\widthof{\textbf{Precondition 1:}}+\labelsep]
\item[\textbf{Precondition 1:}] $\cC_\alpha$ is always an $\alpha$-collection before a fan is activated during \textproc{Activate-Collection}.
In particular, no fan in $\cC_\alpha$ is damaged.
\ee

Second, the edges of the flipped path may now belong to some larger $\alpha\beta_i$-path in the graph which could be flipped later during a different activation. 
If the same edges are in many different flipped paths during stage $i$, the runtime could exceed $O(m)$.
We avoid this by isolating every vertex in the flipped path so that it cannot be accessed again during stage $i$.

During stage $i$, say a vertex $x$ is \emph{disconnected} if it lies in an $\alpha\beta_i$-path whose endpoints are not in $V(\cC_\alpha)$.

\be[leftmargin=\widthof{\textbf{Precondition 2:}}+\labelsep]
\item[\textbf{Precondition 2:}]
Before a fan is activated during stage $i$, any vertex that had previously been in a flipped path during stage $i$ is disconnected.
\ee

Note that \textproc{Disconnect-Vertex} may also flip $\alpha\beta_i$-paths, and the vertices in those paths must satisfy Precondition 2 as well.

\paragraph*{Disconnect-Vertex}

The purpose of \textproc{Disconnect-Vertex} is to ensure that these preconditions are satisfied at the top of each iteration.
\textproc{Disconnect-Vertex} is given a vertex $v$ and it explores the $\alpha\beta_i$-path $P$ containing it.
It manipulates the fans containing the endpoints of the path with the twofold goal of repairing those fans if they are damaged (with the endpoint being the damaged vertex) and ensuring that the endpoints of $P$ do not lie in fans of $\cC_\alpha$.


The procedure works as follows.
Let $P$ be the $\alpha\beta_i$-path containing $v$.
Until neither endpoint of $P$ lies in a fan of $\cC_\alpha$, do the following:
Choose $w$ to be an endpoint of $P$ that lies in a fan $F' \in \cC_\alpha$, and if possible choose $w$ so that $F'$ is damaged (with $w$ being the damaged vertex).
Then act in three cases.
\be[label={\textbf{Case \arabic*:}}, leftmargin=\widthof{\textbf{Case 3:}}+\labelsep]
\item If $F'$ is a c-fan or a u-fan with only two leaves, then discard $F'$ (i.e. delete it from $\cC_\alpha$).
\item If $F'$ is a u-fan with at least three leaves and $w$ is a leaf of $F'$, then remove $w$ from $F'$.
\item If $F'$ is a u-fan with at least three leaves and $w$ is the center of $F'$, then choose a leaf $x$ of $F'$.
If $x \in P$, replace $x$ with a different leaf of $F'$.

Now we have two subcases depending on whether $\alpha \in M(w)$.

\be[label={\textbf{Case 3\alph*:}}, leftmargin=\widthof{\textbf{C:}}+\labelsep]
\item
If $\alpha \in M(w)$, then $w$ must be the damaged vertex of $F'$.
We have $\alpha \in M(w) \cap M(x)$.
Color $wx$ by $\alpha$ and remove $x$ from $F'$.
Now $F'$ is undamaged.
\item
If $\alpha \not\in M(w)$, then $F'$ is a valid u-fan.
We must have $\beta_i \in M(w)$ for otherwise $w$ would not be an endpoint of $P$.
Flip the $\alpha\beta_i$-path beginning at $x$.
Now $\beta_i \in M(w) \cap M(x)$.
Color $wx$ by $\beta_i$ and remove $x$ from $F'$.
\ee
Now $wx$ is colored by $\alpha$ or $\beta_i$, and so $w$ is no longer an endpoint of $P$.
Extend $P$ by continuing the $\alpha\beta_i$-path through $wx$ until it reaches an endpoint.
\ee
\textproc{Disconnect-Vertex} is given below in pseudocode.
\begin{algorithmic}[1]
\Procedure {Disconnect-Vertex}{{$v$}}
\State {Let $P$ be the $\alpha\beta_i$-path containing $v$}
\While {an endpoint of $P$ lies in a fan of $\cC_\alpha$}
\State {Let $e_1$ and $e_2$ be the endpoints of $P$}
\If {$e_1$ is the damaged vertex of a fan or $e_2$ is not in a fan}
\State {$w \gets e_1$}
\Else
\State {$w \gets e_2$}
\EndIf
\State {Let $F'$ be the fan containing $w$}
\Statex {\textbf{Case 1}}
\If {$F'$ is a c-fan or a u-fan with two leaves}
\State {Remove $F'$ from $\cC_\alpha$}
\Else
\Statex {\textbf{Case 2}}
\If {$F'$ is a u-fan and $w$ is a leaf of $F'$}
\State {Remove $w$ from $F'$}
\Else
\Statex {\textbf{Case 3}}
\If {$F'$ is a u-fan and $w$ is the center of $F'$}
\State {Choose a leaf $x$ of $F'$}
\If {$x \in P$}
\State {Replace $x$ with a different leaf of $F'$}
\EndIf
\State {Remove $x$ from $F'$}
\If {$\alpha \in M(w)$}
\Statex {\qquad\qquad\quad\textbf{Case 3a}}
\State\AlgComment {$F'$ is damaged as $\alpha \in M(w)$}
\State\AlgComment {$\alpha \in M(w) \cap M(x)$}
\State {Color $wx$ by $\alpha$}
\Else
\Statex {\qquad\qquad\quad\textbf{Case 3b}}
\State\AlgComment {$F'$ is not damaged}
\State {Flip the $\alpha\beta_i$-path containing $x$}
\State\AlgComment {$\beta_i \in M(w) \cap M(x)$}
\State {Color $wx$ by $\beta_i$}
\EndIf
\State\AlgComment{$w$ is no longer an endpoint of $P$}
\State {Extend $P$ by following the $\alpha\beta_i$-path through $wx$}
\EndIf
\EndIf
\EndIf
\EndWhile
\EndProcedure
\end{algorithmic}

\paragraph*{Analysis of \textproc{Disconnect-Vertex}}

Let us first establish a few facts about \textproc{Disconnect-Vertex}.

\begin{proposition}
\label{prop:disconnect}
Consider a fan $F$ with center $v$ that is activated during stage $i$ of \textproc{Activate-Collection}.
Let $P_F$ be the $\alpha\beta_i$-path flipped during the activation and let $v$ and $w$ be its endpoints, and assume Preconditions 1 and 2 held before the activation of $F$.
At all times during \textproc{Disconnect-Vertex}$(v)$, the following hold.
\be
\item
$P$ contains $P_F$.
\item
$P$ is a maximal $\alpha\beta_i$-path. In particular, $P$ never becomes a cycle.
\item
At most one endpoint of $P$ is the damaged vertex of a fan in $\cC_\alpha$ and no other fan in $\cC_\alpha$ is damaged.
\ee
\end{proposition}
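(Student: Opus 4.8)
The plan is to verify the three statements by induction on the iterations of the \texttt{while} loop in \textproc{Disconnect-Vertex}$(v)$, treating the loop body as a single step. The key observation is that the path $P$ is only ever modified in Case~3, where it is \emph{extended} by following the $\alpha\beta_i$-path through the newly colored edge $wx$ — and crucially, $w$ was already an endpoint of $P$ before this step, so $wx$ is attached at an end of $P$ rather than in its interior. Thus $P$ only grows, which immediately gives claim~(1): at the start of \textproc{Disconnect-Vertex} we set $P = P_F$, and every subsequent modification keeps $P \supseteq P_F$. For claim~(2), I would argue that $P$ is maximal at the start (it is \emph{the} $\alpha\beta_i$-path containing $v$, i.e.\ a full component of the $\alpha\beta_i$-subgraph) and that each extension in Case~3 preserves maximality: coloring $wx$ merges $P$ with the $\alpha\beta_i$-path formerly hanging off $x$ (or the trivial path at $x$), and since both were maximal $\alpha\beta_i$-paths before and $w$ had $\beta_i$ missing (so $x$'s path did not already reach $w$), the union is again a maximal path. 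The ``never becomes a cycle'' part follows because a cycle has no endpoints, so the loop would have terminated; more carefully, $v$ is always an endpoint of $P$ (it has $\alpha$ missing throughout, as \textproc{Activate-c-Fan}/\textproc{Activate-u-Fan} leave $\alpha \in M(v)$ and nothing in \textproc{Disconnect-Vertex}$(v)$ touches $M(v)$), so $P$ retains at least one endpoint at all times and cannot close into a cycle.

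Claim~(3) is the substantive one and where I expect the real work to be. Before the activation, Precondition~1 says $\cC_\alpha$ is a genuine $\alpha$-collection with no damaged fan. The activation of $F$ flips $P_F$, exchanging $\alpha$ and $\beta_i$ in $M(w)$ for the far endpoint $w$ only — every other vertex's missing set is unchanged — so immediately after the activation, $w$ is the unique candidate for a damaged vertex, and a fan $F'$ is damaged only if $w \in F'$. I would then show the loop invariant ``at most one endpoint of $P$ is a damaged vertex, and the fan it damages is the only damaged fan in $\cC_\alpha$'' is preserved by each case. Cases~1 and~2 only \emph{remove} fans or shrink them, and they are chosen (via the $w \gets e_1$ preference) precisely to target the damaged vertex when one exists, so they can only decrease the number of damaged fans. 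Case~3 is the delicate one: here $w$ is the center of a u-fan $F'$; in Case~3a ($\alpha \in M(w)$, so $F'$ is the damaged fan) we color $wx$ by $\alpha$ and drop $x$, and I must check the resulting $F'$ is a valid u-fan (it has $\ge 2$ leaves left, $\alpha$ is still missing at all of them — $x$'s removal and the recoloring of $wx$ don't touch the other leaves — and $\alpha \notin M(w)$ now, since we just colored an $\alpha$-edge at $w$), hence $F'$ is repaired and no longer damaged. In Case~3b ($\alpha \notin M(w)$, so $F'$ was not damaged) I flip the $\alpha\beta_i$-path through $x$; the only vertex whose missing set changes besides $x$ is the \emph{other} endpoint of that path — but that path is a sub-path of $P$ after extension, and its far end is the far end of $P$, so this is consistent with ``the damaged vertex is an endpoint of $P$'' and does not create a second damaged fan among the \emph{other} fans (I need here that the flipped path's other endpoint, if it lies in a fan, becomes the new damaged vertex and we have made progress; and that $x$ being removed from $F'$ keeps $F'$ valid for the same reasons as in 3a). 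The main obstacle is bookkeeping all the ways a vertex's membership in $\cC_\alpha$ and its missing set interact across the recoloring of $wx$ and the auxiliary flip in Case~3b — in particular confirming that the auxiliary flip in 3b cannot damage a \emph{third} fan or re-damage $F'$, and that after Case~3 the new endpoint of the extended $P$ is the only place a fresh damaged vertex can appear. I would handle this by noting that every recoloring in \textproc{Disconnect-Vertex} only changes $M(\cdot)$ at endpoints of (sub-paths of) $P$, and $P$'s far endpoint is ``carried along'' as the unique potentially-damaged site, with $v$ staying clean throughout.

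Finally, for termination (needed to make the induction meaningful, though not strictly part of the proposition), I would remark that each Case~3 iteration colors a new edge $wx$ and strictly lengthens $P$, while Cases~1 and~2 strictly decrease the number of fans meeting the two endpoints of $P$; since $P$ is confined to a fixed $\alpha\beta_i$-component of bounded size and $\cC_\alpha$ is finite, the loop halts, at which point neither endpoint of $P$ lies in a fan — establishing the postcondition that feeds Preconditions~1 and~2 for the next activation.
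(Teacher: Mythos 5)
Your argument for claim (2) has a genuine gap. You claim that ``$v$ is always an endpoint of $P$ (it has $\alpha$ missing throughout, as \textproc{Activate-c-Fan}/\textproc{Activate-u-Fan} leave $\alpha \in M(v)$).'' This is false. Consider \textproc{Activate-u-Fan}: by the definition of a u-fan, $\alpha \not\in M(v)$ at the outset, so the $\alpha\beta_i$-path from $v$ is non-trivial and its first edge is $\alpha$-colored. After the flip that edge becomes $\beta_i$-colored, and then the procedure colors a new edge $vx$ by $\alpha$; so at the end $v$ has both an $\alpha$- and a $\beta_i$-colored incident edge, i.e.\ $v$ is an \emph{interior} vertex of $P$, not an endpoint. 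The same thing happens in the non-trivial branch of \textproc{Activate-c-Fan}: after the flip, $vx_j$ is $\alpha$-colored, and then $vx_{j-1}$ (or $vx_k$) is colored $\beta_i$, leaving $v$ interior to $P$. So the premise ``$v$ stays an endpoint'' cannot be used to rule out cycles, and the ``\,$w$ had $\beta_i$ missing so $x$'s path did not already reach $w$\,'' observation does not fill the hole either: the dangerous scenario is precisely that $x$ is the \emph{other endpoint} of $P$ (possible, since leaves have $\alpha$ missing and hence are endpoints), in which case coloring $wx$ closes $P$ into a cycle.

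The correct argument, which your plan omits entirely, hinges on the ``\,if $x \in P$, replace $x$ with a different leaf\,'' step in Case~3 of \textproc{Disconnect-Vertex}. Because every leaf of the u-fan $F'$ has $\alpha$ missing, any leaf lying in $P$ must be an \emph{endpoint} of $P$, and hence at most one leaf of $F'$ can lie in $P$; since in Case~3 the fan has at least three leaves, a leaf $x \not\in P$ always exists. Coloring $wx$ then joins $P$ to a vertex-disjoint $\alpha\beta_i$-path, which cannot create a cycle and preserves maximality. This is the key lemma-level fact for claim~(2) and needs to be stated and used. Relatedly, your claim that ``at the start of \textproc{Disconnect-Vertex} we set $P = P_F$'' is not accurate: line~2 sets $P$ to the current $\alpha\beta_i$-path containing $v$, which can strictly contain $P_F$ because the activation colors a new edge after flipping; the correct (and sufficient) statement is only $P \supseteq P_F$, which indeed follows by case analysis of the two activation routines. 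Your treatment of claim~(3) is in the spirit of the intended argument, though it is a sketch with the hard bookkeeping (that the auxiliary flip in Case~3b damages at most the far endpoint of the extended $P$, and nothing else) explicitly deferred.
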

\begin{proof}
We check these in order.
\be
\item
At the start of \textproc{Disconnect-Vertex}$(v)$, $P$ contains $P_F$.
This can be checked by case analysis of \textproc{Activate-c-Fan} and \textproc{Activate-u-Fan}.
Since vertices are never removed from $P$, it remains true.
\item
Initially, $P$ is chosen to be the (maximal) $\alpha\beta_i$-path containing $v$.
Suppose $P$ is a maximal $\alpha\beta_i$-path at the start of the loop (at line 4 in \textproc{Disconnect-Vertex}).
At the end of the loop, either $P$ will be unchanged and will still be maximal (as in Cases 1 and 2), or $P$ will have been extended at one endpoint by Case 3.
In all cases, $P$ will be maximal.

We must check that $P$ cannot become a cycle due to Case 3.
Case 3 extends $P$ through an edge $wx$ of a u-fan $F'$, where $w$ is the center of $F'$ (and an endpoint of $P$) and $x$ is a leaf of $F'$.
Observe that any leaf of $F'$ that lies in $P$ must be the other endpoint of $P$, since $\alpha$ is missing at each leaf.
If necessary, the procedure replaces $x$ by a different leaf so that $x$ is not in $P$.
Thus, when $wx$ is colored to join $P$ with the $\alpha\beta_i$-path containing $x$, it cannot create a cycle.

\item
Before \textproc{Disconnect-Vertex} is called, only the fan containing $w$ (if any) can be damaged and only with $w$ being the damaged vertex.
Indeed, $M(w)$ has been altered because $P_F$ was flipped, and so that fan may not satisfy the definition of a u-fan or primed c-fan.
We check that all other fans in $\cC_\alpha$ are still well-formed.
The activation of $F$ only affected fans containing vertices of $P_F$ (including $F$ itself).
The fans containing only \emph{interior} vertices of $P_F$ cannot be made invalid due to the activation of $F$ because flipping $P_F$ does not change the set of missing colors at any interior vertex.
The fan $F$ is discarded immediately unless it is a non-degenerate u-fan after its activation, in which case it is still well-formed.
Thus, only the fan containing $w$ can be malformed, and only in the sense that $w$ is the damaged vertex of that fan.

Now assume inductively that at most one endpoint of $P$ is the damaged vertex of a fan in $\cC_\alpha$ and all other fans are well-formed.
If an endpoint of $P$ is the damaged vertex of a fan $F'$, then algorithm chooses $w$ to be that vertex.
In Case 1, $F'$ is discarded and so all fans left in $\cC_\alpha$ are well-formed.
In Case 2, $w$ is removed from $F'$ which repairs $F'$, and $w$ is no longer in any fan of $\cC_\alpha$.
In Case 3a, $F'$ is damaged initially and repaired by coloring an incident edge by $\alpha$; at this point, no fan can be damaged.
After these cases, no fan in $\cC_\alpha$ can be damaged.

In Case 3b, $F'$ is not damaged and so no fan in $\cC_\alpha$ is damaged, for otherwise $w$ would have been chosen to be the damaged vertex.
However, when the $\alpha\beta_i$-path containing $x$ is flipped, the fan on the other end could be damaged.
The damaged vertex of this fan becomes the new endpoint of $P$ when $P$ is extended through the edge $wx$.
Thus, there is at most one damaged fan at the end of this case and its damaged vertex must be an endpoint of $P$.
\qedhere
\ee
\end{proof}

Now we can check that \textproc{Disconnect-Vertex} ensures Preconditions 1 and 2. Establishing Precondition 1 will imply the correctness of \textproc{Activate-Collection}, and Precondition 2 will be useful in measuring its runtime.

\begin{lemma}
Preconditions 1 and 2 hold before each fan is activated in \textproc{Activate-Collection}.
\end{lemma}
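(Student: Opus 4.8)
The plan is to prove the two preconditions by induction on the sequence of fan activations carried out by \textproc{Activate-Collection}, with \textproc{Disconnect-Vertex} being the key mechanism that restores both invariants after each activation. I would set up the induction so that the inductive hypothesis is exactly: ``Preconditions 1 and 2 hold at the top of the current iteration of the \textbf{while} loop in \textproc{Activate-Collection}.'' The base case is immediate: before the first activation, $\cC_\alpha$ is the $\alpha$-collection produced by \textproc{Build-Collection} (which is an $\alpha$-collection by Lemma~\ref{lem:buildTime}, hence has no damaged fans), and no vertex has yet been in a flipped path during the current stage, so Precondition 2 is vacuous. When a new stage $i$ begins, Precondition 2 again resets to vacuous, so only Precondition 1 needs to carry over across the stage boundary, and it does because the last action of the previous stage was a \textproc{Disconnect-Vertex} call that left $\cC_\alpha$ an $\alpha$-collection.

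For the inductive step, I would fix an activation of a fan $F$ with center $v$, assume both preconditions held just before it, and trace through what happens. The activation calls \textproc{Activate-c-Fan} or \textproc{Activate-u-Fan}, which flips one $\alpha\beta_i$-path $P_F$ with endpoint $v$; by Proposition~\ref{prop:disconnect}(3), immediately after the activation the only possibly-damaged fan is the one containing the other endpoint $w$ of $P_F$, with $w$ as its damaged vertex. Now \textproc{Disconnect-Vertex}$(v)$ runs. I would invoke Proposition~\ref{prop:disconnect} throughout: part~(2) guarantees $P$ stays a maximal $\alpha\beta_i$-path (never a cycle), and part~(3) guarantees the loop invariant that at most one endpoint of $P$ is a damaged vertex and no other fan is damaged. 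The loop terminates precisely when neither endpoint of $P$ lies in a fan of $\cC_\alpha$; I need to argue it \emph{does} terminate --- each iteration either removes a fan from $\cC_\alpha$, removes a vertex from a fan, or colors an edge and extends $P$ to a strictly longer maximal path in the (finite) graph, and the colored edge $wx$ is newly colored so the extension is genuine --- and that on termination, Precondition 1 holds because by part~(3) no fan is damaged when the unique candidate damaged vertex has been eliminated, and all other fans were well-formed throughout.

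For Precondition 2, I would observe that the set of vertices that have been in a flipped path during stage $i$ consists of the vertices of $P_F$ (from the activation) together with the vertices of every $\alpha\beta_i$-path flipped inside \textproc{Disconnect-Vertex} (the Case 3b flips), all of which become part of the single path $P$ that \textproc{Disconnect-Vertex} tracks and extends --- this is Proposition~\ref{prop:disconnect}(1) plus the fact that Case 3b joins the flipped sub-path to $P$ through the freshly colored edge $wx$. On termination, $P$ is a maximal $\alpha\beta_i$-path whose endpoints are not in $V(\cC_\alpha)$, which is exactly the definition of every vertex of $P$ being disconnected. I also need that vertices disconnected in \emph{earlier} iterations of the stage stay disconnected: a disconnected vertex lies in a maximal $\alpha\beta_i$-path with both endpoints outside $V(\cC_\alpha)$, and later iterations only touch paths currently having an endpoint \emph{in} $V(\cC_\alpha)$, so such a path is never revisited --- this is where Precondition 2's role in bounding the runtime comes from, but here I only need the combinatorial persistence.

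The main obstacle I expect is the careful case-by-case verification inside \textproc{Disconnect-Vertex} that each of Cases 1, 2, 3a, 3b indeed (i) makes progress toward removing the chosen endpoint $w$ from the fan structure while (ii) not introducing a second damaged fan and (iii) not turning $P$ into a cycle --- but this is precisely the content of Proposition~\ref{prop:disconnect}, so the real work is in correctly \emph{applying} that proposition at each step rather than reproving it. A secondary subtlety is the handling of Case 3b: the flip of the $\alpha\beta_i$-path containing the leaf $x$ can damage the fan at its far end, and one must check that this new damaged vertex is exactly the new endpoint of the extended $P$ (so the invariant ``at most one damaged fan, with damaged vertex an endpoint of $P$'' is preserved) --- again furnished by Proposition~\ref{prop:disconnect}(3). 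The remaining details are routine bookkeeping, so I would state them briefly and refer back to the proposition.
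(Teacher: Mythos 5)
Your proposal is correct and takes essentially the same approach as the paper: induction on activations, with Proposition~\ref{prop:disconnect} supplying the key structural invariants that let \textproc{Disconnect-Vertex} restore both preconditions, a termination argument via the shrinking of $\cC_\alpha$, and an observation that every $\alpha\beta_i$-edge touched during the iteration is absorbed into the maximal path $P$ whose endpoints end up outside $V(\cC_\alpha)$. The paper phrases the persistence-of-disconnection step slightly differently (``every affected $\alpha$- or $\beta_i$-colored edge becomes an edge in $P$'' rather than your ``only paths with an endpoint in $V(\cC_\alpha)$ are touched''), but these are two views of the same fact and neither introduces a gap.
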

\begin{proof}

Both preconditions hold before any fan is activated.
Assume inductively that the preconditions hold before the activation of a fan $F$.

By Proposition~\ref{prop:disconnect}, no fan can be damaged or otherwise malformed except for those containing the endpoints of $P$, and neither endpoint of $P$ lies in a fan of $\cC_\alpha$ when \textproc{Disconnect-Vertex}$(v)$ terminates.
Clearly \textproc{Disconnect-Vertex} must terminate as each iteration decreases $|V(\cC_\alpha)|$. 
Therefore Precondition 1 is restored.

Now we check Precondition 2.
By Proposition~\ref{prop:disconnect}, $P$ is always maximal $\alpha\beta_i$-path and it contains all the vertices of the path flipped during the activation of $F$.
When the procedure is over, every vertex in $P$ is disconnected because the endpoints of $P$ are not in fans of $\cC_\alpha$.

All other vertices that were disconnected before $F$ was activated must still be disconnected.
This holds because every edge colored by $\alpha$ or $\beta_i$ that is affected by the activation of $F$ and the subsequent call to \textproc{Disconnect-Vertex} becomes an edge in $P$.
Hence Precondition 2 is restored.
\end{proof}

\begin{lemma}
\label{lem:disjoint}
The paths $P_1, \dots, P_k$ traversed by the applications of \textproc{Disconnect-Vertex} during stage $i$ are vertex-disjoint.
\end{lemma}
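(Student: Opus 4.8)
The plan is to induct on the chronological order of the calls to \textproc{Disconnect-Vertex} within stage $i$. Number these calls $1,\dots,k$ in order, and let $P_j$ be the (final) path traversed by call $j$; recall that call $j$ is immediately preceded by the activation of a fan $F_j$ with centre $v_j$, which flips an $\alpha\beta_i$-path $P_{F_j}$ starting at $v_j$. I want $P_j\cap P_{j'}=\emptyset$ whenever $j<j'$. Three facts are available for free. (i) By Proposition~\ref{prop:disconnect}(2) and the exit condition of the loop in \textproc{Disconnect-Vertex}, the moment call $j$ terminates, $P_j$ is a \emph{maximal} $\alpha\beta_i$-path and neither of its endpoints lies in any fan of $\cC_\alpha$ --- equivalently, neither endpoint is in $V(\cC_\alpha)$. (ii) As noted in the proof that Preconditions~1 and~2 are maintained (and as one can also see from Proposition~\ref{prop:disconnect}(1) together with a reading of \textproc{Activate-c-Fan}, \textproc{Activate-u-Fan}, and \textproc{Disconnect-Vertex}), every $\alpha$- or $\beta_i$-edge recoloured during call $j$ or its preceding activation is either incident to $v_j$, incident to an endpoint of the growing path $P$, or lies on some $\alpha\beta_i$-path that is flipped wholesale and then absorbed into $P$; in particular, every \emph{newly} $\alpha\beta_i$-coloured edge is incident to $v_j$ or to a u-fan centre handled by \textproc{Disconnect-Vertex}, all of which are in $V(\cC_\alpha)$ at the time. (iii) During \textproc{Activate-Collection}, $V(\cC_\alpha)$ only shrinks: fans and leaves are deleted, never created.

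The crux is a single observation. Whenever a later step acts on a structural vertex $u$ of a fan still in $\cC_\alpha$ --- the centre of a c-fan being activated, the centre of a u-fan in $Q_i$ being activated, or a leaf of a u-fan along which \textproc{Disconnect-Vertex} colours an edge in Case~3 --- the colour $\alpha$ or $\beta_i$ is missing at $u$ (a c-fan centre and a u-fan leaf miss $\alpha$; the centre of a u-fan in $Q_i$ misses $\beta_i$). Hence $u$ has at most one incident $\alpha\beta_i$-edge and is an \emph{endpoint} of the $\alpha\beta_i$-path containing it. Therefore, as long as $P_j$ is still a maximal $\alpha\beta_i$-path whose endpoints are outside $V(\cC_\alpha)$, no such $u$ can lie on $P_j$: if it did, it would be an endpoint of $P_j$, contradicting $u\in V(\cC_\alpha)$.

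Now fix $j$ and prove by induction on $j'\in\{j+1,\dots,k\}$ that, when call $j'$ begins, $P_j$ is still the same maximal $\alpha\beta_i$-path it was at the end of call $j$, with both endpoints still outside $V(\cC_\alpha)$. The base case is immediate. For the step, look at an intermediate call $j''$. Its path $P_{j''}$ starts as the maximal $\alpha\beta_i$-path through $v_{j''}$; by the observation $v_{j''}\notin P_j$, so this is a different component from $P_j$ and disjoint from it. Thereafter $P_{j''}$ grows only by absorbing, across a newly coloured edge $v_{j''}x$ or $wx$ with $x$ a current u-fan leaf, the $\alpha\beta_i$-path through $x$; again $x\notin P_j$ and that component is not $P_j$. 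So $P_{j''}$ stays disjoint from $P_j$ throughout call $j''$, and by (ii) every $\alpha$- or $\beta_i$-recolouring of call $j''$ either touches a vertex of $P_{j''}$ or lies on a flipped component disjoint from $P_j$ --- either way it misses $P_j$'s edges. Also $P_j$ remains maximal, since any newly $\alpha\beta_i$-coloured edge is incident to a vertex in $V(\cC_\alpha)$ (by (ii)) and hence not incident to an endpoint of $P_j$; and those endpoints stay out of $V(\cC_\alpha)$ by (iii). This closes the induction. Applying the same reasoning once more to call $j'$ itself yields $P_{j'}\cap P_j=\emptyset$; since $j<j'$ were arbitrary, $P_1,\dots,P_k$ are pairwise vertex-disjoint.

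The step I expect to require the most care is fact (ii): pinning down exactly which edges an activation-plus-\textproc{Disconnect-Vertex} can recolour with $\alpha$ or $\beta_i$ and verifying each is captured by the path $P$. This means walking through the sub-fan shift inside \textproc{Activate-c-Fan} (checking the shifted colours are neither $\alpha$ nor $\beta_i$ and that the freshly coloured edge extends $P_{F_j}$ at $v_j$), the colour-$\alpha$ step of \textproc{Activate-u-Fan}, and Cases~3a and~3b of \textproc{Disconnect-Vertex}, where a further $\alpha\beta_i$-path is flipped and spliced in through a u-fan centre--leaf edge (the ``if $x\in P$, replace $x$'' clause keeping the splice from doubling back, and, as in Proposition~\ref{prop:disconnect}(2), from closing a cycle). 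Once that inventory is in hand, the observation and the induction are short. I note that the delicate bookkeeping about which endpoint of $P$ is ``damaged'' is irrelevant here: disjointness follows purely from $P_j$ ending up maximal with its endpoints off the collection, together with the fact that every later call only reaches into components anchored at fan vertices.
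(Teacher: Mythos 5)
Your proof is correct and uses the same core idea as the paper's: each previously traversed path ends up as a maximal $\alpha\beta_i$-path with both endpoints outside $V(\cC_\alpha)$, while every new path is anchored at, and extended only through, vertices of $V(\cC_\alpha)$ that are endpoints of their $\alpha\beta_i$-components, so the two can never share a vertex. The only real difference is presentation: you re-derive inline the persistence of old paths (your fact~(ii) plus the inner induction on $j'$), whereas the paper packages exactly that bookkeeping into Precondition~2 and its maintenance lemma, and then simply invokes it here.
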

\begin{proof}
Proceed by induction on $k$.
Let $P_k$ be the $\alpha\beta_i$-path constructed by \textproc{Disconnect-Vertex} after the activation of a fan $F$ with center $v$.
By Precondition 2, all the vertices in $P_1, \dots, P_{k-1}$ are disconnected before the activation.
Since $v$ was the center of $F$, $v$ was not disconnected before the activation, and so it cannot lie in $P_1, \dots, P_{k-1}$.
At the start of \textproc{Disconnect-Vertex}$(v)$, $P_k$ is the $\alpha\beta_i$-path containing $v$.
Whenever \textproc{Disconnect-Vertex} extends $P_k$ by coloring an edge (as in Case 3), the $\alpha\beta_i$-path that forms the extension has an endpoint in a fan of $\cC_\alpha$.
Hence a path $P_j$ ($j < k$) cannot be traversed during the construction of $P_k$, and so $P_k$ is disjoint from all previous paths.
It follows that $P_1, \dots, P_k$ are disjoint.
\end{proof}

Finally, we measure the runtime of \textproc{Disconnect-Vertex}.

\begin{lemma}
\label{lem:disconnectTime}
\textproc{Disconnect-Vertex} takes $O(\length(P))$ time, where $P$ is the constructed path.
\end{lemma}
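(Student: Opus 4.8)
The plan is to bound separately the number of iterations of the \textbf{while} loop in \textproc{Disconnect-Vertex} and the work done per iteration, using the structural facts established in Proposition~\ref{prop:disconnect}. First I would fix the representation: $P$ is stored implicitly by the current coloring together with pointers to its two endpoints, so the loop test ``an endpoint of $P$ lies in a fan of $\cC_\alpha$'' and the selection of $w$ both take $O(1)$ time, provided each vertex records the fan of $\cC_\alpha$ containing it (if any) and whether it is that fan's damaged vertex. The crucial monotonicity observation is that $P$ never loses an edge: Cases~1 and~2 leave $P$ unchanged, and Case~3 colors a fresh edge $wx$ with $x \notin P$ and then follows the $\alpha\beta_i$-path through $wx$, so $P$ gains at least the edge $wx$. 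Also note the one-time setup cost ``let $P$ be the $\alpha\beta_i$-path containing $v$'' is $O(\length(P))$, since that initial path is contained in the final $P$.

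Next I would verify that a single iteration costs $O(1)$ plus work proportional to the number of edges newly appended to $P$. Everything apart from the path manipulations in Case~3 is a constant number of missing-color and dictionary operations, each $O(1)$ by the data structures of Section~\ref{sec:datastructs}: discarding a fan, removing a vertex from a u-fan, choosing a leaf $x$ of $F'$, removing $x$, and coloring $wx$. The test ``$x \in P$'' is $O(1)$ because every leaf of a u-fan is missing $\alpha$, and a vertex of $P$ missing $\alpha$ must be an endpoint of $P$; hence $x \in P$ iff $x$ equals the endpoint of $P$ other than $w$, which we read off the endpoint pointers (and if so, there is always another leaf of $F'$, which cannot also be an endpoint, to use instead). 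The only remaining cost, incurred solely in Case~3, is flipping the $\alpha\beta_i$-path from $x$ (in Case~3b) and then traversing the newly added portion of $P$ to relocate its endpoint; both take time proportional to the length of that new portion, and by Proposition~\ref{prop:disconnect} those are exactly the edges appended to $P$. Summed over all iterations, this path-manipulation work telescopes to $O(\length(P))$.

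It then remains to bound the number of iterations by $O(\length(P))$, which I expect to be the main obstacle: the obvious termination argument only gives a bound of $|V(\cC_\alpha)|$, which is far too large. I would partition the iterations into \emph{extending} iterations (Case~3) and \emph{resolving} iterations (Cases~1 and~2). Each extending iteration appends at least one new edge to $P$ and those edges are never removed, so there are at most $\length(P)$ of them. For the resolving iterations, observe that a Case~1 or Case~2 step removes the chosen endpoint $w$ of $P$ from every fan of $\cC_\alpha$ (fans are vertex-disjoint, so $w$ lies in at most one) and leaves $P$ unchanged; moreover $P$'s endpoints change only when $P$ is extended, and extension always happens on the side opposite a just-resolved endpoint, so a resolved endpoint never again becomes an endpoint lying in a fan. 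Hence each of the two ``sides'' of $P$ triggers at most one resolving iteration, for a total of at most two. Combining, the loop runs $O(\length(P))$ times, each iteration costs $O(1)$ beyond the path work already accounted for, and with the $O(\length(P))$ setup cost this gives a total running time of $O(\length(P))$ for \textproc{Disconnect-Vertex}.
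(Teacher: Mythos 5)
Your proof is correct and carries out, in careful detail, the same charging-to-$P$ argument that the paper states in one line (``constant time is spent on each vertex of $P$ and vertices are never removed from $P$''). The breakdown into setup cost, per-iteration $O(1)$ work, and telescoping path-manipulation work is sound, and your observation that the ``$x\in P$'' test reduces to an endpoint comparison (since a u-fan leaf has $\alpha$ missing and hence could only be an endpoint of $P$) correctly fills in an implementation detail the paper glosses over. One remark on the iteration count: you prove the sharp bound of at most two resolving (Case~1/2) iterations via the ``two sides'' argument, which is true but more than is needed and whose phrasing (``extension always happens on the side opposite a just-resolved endpoint'') is slightly imprecise, since extensions can also occur before any endpoint has been resolved. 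A simpler and more robust way to get the same $O(\length(P))$ bound on the number of iterations, closer in spirit to the paper, is to note that each iteration selects a $w$ that is both an endpoint of $P$ and in a fan, and after the iteration $w$ permanently fails one of those two conditions (Cases 1/2 remove $w$ from all fans and never re-add it; Case 3 makes $w$ an interior vertex of $P$, and $P$ only grows), so the $w$'s across iterations are distinct vertices of the final $P$, giving at most $|V(P)| = \length(P)+1$ iterations.
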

\begin{proof}
The runtime is clear as constant time is spent on each vertex of $P$ and vertices are never removed from $P$.
\end{proof}

\paragraph*{Analysis of \textproc{Activate-Collection}}

We aim to prove that \textproc{Activate-Collection} runs in $O(m)$ time and colors a number of edges that is at least a constant fraction of $|V(\cC_\alpha) \cap I_\alpha|$.

\begin{lemma}
\label{lem:activateTime}
\textproc{Activate-Collection} runs in $O(m)$ time.
\end{lemma}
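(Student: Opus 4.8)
The plan is to split the running time of \textproc{Activate-Collection} into a per-iteration overhead, the cost of activating fans, and the cost of the calls to \textproc{Disconnect-Vertex}, and to bound each piece.

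First I would dispatch the cheap parts. Every iteration of the inner while loop activates one fan and colours at least one edge, and no edge is ever uncoloured during \textproc{Activate-Collection} (shifting a fan and flipping a path leave the number of coloured edges unchanged). Hence, summed over all $d$ stages, there are at most $\ell = O(m)$ iterations, which bounds the constant per-iteration bookkeeping (maintaining the sets $Q_i$, testing whether to discard $F$, and so on). Next, \textproc{Activate-c-Fan}$(F)$ runs in $O(\deg(v) + \length(P_F))$ time by Lemma~\ref{lem:activatecfan}, where $v$ is the centre of $F$ and $P_F$ is the flipped path. Since the fans of $\cC_\alpha$ are vertex-disjoint, distinct c-fans have distinct centres, and each c-fan is removed from $\cC_\alpha$ as soon as it is activated; so $\sum_v \deg(v) \le 2m$ over all c-fan activations. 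The activations of u-fans contribute only $O(\length(P_F))$ each, with no degree term.

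What remains is the path work: $O(\length(P_F))$ per activation, plus $O(\length(P))$ per call to \textproc{Disconnect-Vertex}$(v)$ by Lemma~\ref{lem:disconnectTime}, where $P$ is the path that call constructs. By Proposition~\ref{prop:disconnect}(1) we have $P_F \subseteq P$, so it suffices to bound $\sum \length(P)$ over all iterations and all stages. Within a single stage, Lemma~\ref{lem:disjoint} tells us these paths are vertex-disjoint, hence have total length at most $n$; but summing this over $d$ stages only yields $O(nd)$, which is too weak. The idea is to amortise across stages by charging the edges on each constructed path. Each such edge is either (i) one of the connecting edges $wx$ that \textproc{Disconnect-Vertex} itself colours in Case~3 --- of which at most $\ell$ are ever created --- or (ii) an edge already coloured $\alpha$ or $\beta_i$ when the path is built. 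An edge of the second kind coloured $\beta_i$ can lie on a constructed path only during stage $i$, since its colour differs from $\alpha$ and from every $\beta_j$ with $j \neq i$, and at most once there by Lemma~\ref{lem:disjoint}. An edge coloured $\alpha$ should contribute only $O(1)$ times as well: the first time it is actually flipped its colour becomes some $\beta_j$, after which it is permanently irrelevant, and within any single stage it lies on at most one constructed path.

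The delicate point --- the step I expect to be the real obstacle --- is making the last part of this charging argument rigorous for edges that are repeatedly \emph{traversed without being flipped}. This happens in Case~3a of \textproc{Disconnect-Vertex}, where a fresh edge $wx$ is coloured $\alpha$ and $P$ is extended through it, absorbing the $\alpha\beta_i$-path at $x$ without flipping it. To control these extensions I would first note that each occurrence of Case~1, 2, or~3 strictly decreases $|V(\cC_\alpha)|$ (and Case~3 also creates a new coloured edge), so these events occur only $O(|V(\cC_\alpha)|) = O(\sum_v(\deg(v)+1)) = O(m)$ times in total; then I would argue that the paths flipped by the activations, the paths flipped by Case~3b, and the subpaths absorbed by Case~3a together form, within each stage, exactly the vertex-disjoint family of Lemma~\ref{lem:disjoint}, so that no edge is absorbed more than $O(1)$ times before either being flipped --- and thereby rendered inert for every later stage --- or having an endpoint removed from $V(\cC_\alpha)$ for good. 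Once this accounting is in place, the three bounds add up to $O(m)$.
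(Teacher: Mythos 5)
Your overall decomposition into per-iteration overhead, degree costs of c-fan activations, and path-traversal costs is sound, and the first two pieces are handled correctly. The gap is in the path-traversal bound, and it is exactly where you anticipate: edges that remain coloured $\alpha$ across stages.

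Your charging scheme needs every $\alpha$-coloured edge to be traversed $O(1)$ times in total, but nothing in your argument delivers this. Within stage $i$, Lemma~\ref{lem:disjoint} gives you one traversal, but an $\alpha$-coloured edge $e$ that is carried along a constructed $\alpha\beta_i$-path \emph{without} being flipped stays coloured $\alpha$, and in stage $i+1$ it can lie on a completely different $\alpha\beta_{i+1}$-path whose endpoints are in $V(\cC_\alpha)$; the fact that the endpoints of the stage-$i$ path were evicted from $V(\cC_\alpha)$ says nothing about the endpoints of the stage-$(i+1)$ path. So without a new idea the $\alpha$-edges contribute $O(1)$ per stage, i.e.\ $O(d)$ over all stages, giving only $O(md)$ --- which is what you were trying to avoid. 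Bounding the number of Case~1/2/3 events by $|V(\cC_\alpha)|$ caps how many times $P$ is \emph{extended}, but not the lengths of the maximal $\alpha\beta_i$-subpaths absorbed at each extension, so it does not rescue the bound either.

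The paper sidesteps this by never charging $\alpha$-edges at all. Since every constructed $P_t$ in stage $i$ is a maximal $\alpha\beta_i$-path, it has at most one more $\alpha$-edge than $\beta_i$-edge, so $\length(P_t)\le 2k_t+1$ where $k_t$ is its number of $\beta_i$-edges at the end of the stage. Because the paths within a stage are disjoint (Lemma~\ref{lem:disjoint}), $\sum_t \length(P_t)\le 2m_i+(T_i-T_{i-1})$ where $m_i$ is the number of $\beta_i$-edges after stage $i$; and because stage $i$ touches only $\alpha$- and $\beta_i$-edges, $m_i$ equals the final count of $\beta_i$-edges, so $\sum_i m_i\le m$. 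Together with $T\le m$ this gives the $O(m)$ bound. The crucial move you are missing is to pay for the $\alpha$-edges on a path out of the budget of the \emph{paired} $\beta_i$-edges, which are stage-specific, rather than trying to show the $\alpha$-edges themselves are rarely revisited.
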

\begin{proof}
Consider a fan $F$ in $\cC_\alpha$.
If $F$ is a c-fan, then it takes $O(|F| + \length(P))$ time to activate $F$ and call \textproc{Disconnect-Vertex}, where $P$ is the entire $\alpha\beta_i$-path formed by \textproc{Disconnect-Vertex}.
This follows from Lemma~\ref{lem:disconnectTime} and the fact that the path flipped during the activation of $F$ is contained in $P$.
If $F$ is a u-fan, then the time is just $O(\length(P))$ because the activation itself (not including the flipping operation) takes constant time.

Let $P_1, \dots, P_T$ be the paths explored by \textproc{Disconnect-Vertex} during \textproc{Activate-Collection}.
Choose indices $1 = T_0 \leq T_1 \leq \cdots \leq T_d = T + 1$ such that $P_{T_{i-1}}, \dots, P_{T_i - 1}$ are the paths constructed during stage $i$.
The total time is then 
\[O\left(\sum_{\text{c-fan }F\in\cC_\alpha} |F| + \sum_{i=1}^d\sum_{t=T_{i-1}}^{T_i-1} \length(P_t)\right)\]
We have $\sum_{\text{c-fan }F \in \cC_\alpha} |F| \leq m$ as the c-fans are disjoint and a c-fan can only be activated once.
We now show that the sum of the lengths of all paths $P_t$ is $O(m)$.

For $i = 1, \dots, d$, let $m_i$ be the number of edges colored by $\beta_i$ at the end of stage $i$.
Observe that the $i$th stage only modifies the number of edges colored by $\alpha$ and $\beta_i$ --- all others are fixed.
Indeed, the only operation in stage $i$ that involves an edge not colored by $\alpha$ or $\beta$ is the shift operation, and shifting a c-fan does not change the number of edges of each color.
That means, for each $i$, the number of edges colored by $\beta_i$ after stage $d$ is also $m_i$.
Therefore $\sum_{i=1}^d m_i \leq m$.

Now consider stage $i$.
By Lemma~\ref{lem:disjoint}, $P_{T_{i-1}}, \dots, P_{T_i - 1}$ are disjoint.
At the end of stage $i$, each path $P_t$ that was built during stage $i$ is an $\alpha\beta_i$-path, and so has at most one more edge colored by $\alpha$ than by $\beta_i$.
That is, if $P_t$ has $k_t$ edges colored by $\beta_i$ then $\length(P_t) \leq 2k_t + 1$.
Thus, \[\sum_{t=T_{i-1}}^{T_i-1} \length(P_t) \leq \sum_{t=T_{i-1}}^{T_i-1} (2k_t + 1) \leq 2m_i + (T_i - T_{i-1})\]

Summing this estimate over all rounds gives us
\[\sum_{i=1}^d\sum_{t=T_{i-1}}^{T_i-1} \length(P_t) \leq \sum_{i=1}^d 2m_i + (T_i - T_{i-1})
= 2\left(\sum_{i=1}^d m_i\right) + (T_d - T_0)
\leq 2 m + T
\leq 3m
\]
The last inequality holds because every activation colors a new edge, and so $T \leq m$.
Therefore, \textproc{Activate-Collection} runs in $O(m)$ time.
\end{proof}

Now we establish that \textproc{Activate-Collection} does not ignore any fans in $\cC_\alpha$.

\begin{lemma}
\label{lem:empties}
$\cC_\alpha$ is empty after \textproc{Activate-Collection}$(\cC_\alpha)$.
\end{lemma}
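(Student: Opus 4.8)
The plan is to show that \textproc{Activate-Collection} runs its $d$ stages in a way that permanently removes fans from $\cC_\alpha$, so that nothing survives past stage $d$. The starting point is the observation that every fan currently in $\cC_\alpha$ lies in $Q_j$ for some $j \in \{1, \dots, d\}$: a u-fan $(\alpha, v, x_1, \dots, x_k)$ has $\alpha \notin M(v)$ but $M(v) \ne \emptyset$, so some $\beta_j \in M(v)$ and the fan is in $Q_j$; and a c-fan in $\cC_\alpha$ is primed by some color $\beta$, which must differ from $\alpha$ because \textproc{Make-Collection-Fan} primes a c-fan only in its step 3, after step 2 (which fires precisely when $\alpha$ is missing at the current last leaf) has failed to apply, so the priming satisfies $\beta \in M(x_k)$ with $\alpha \notin M(x_k)$, forcing $\beta = \beta_j$ for some $j$. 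Hence it suffices to prove that after stage $d$ every $Q_i$ is empty.

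Next I would argue that for each $i$, the set $Q_i$ is empty from the end of stage $i$ onward. That $Q_i = \emptyset$ when stage $i$ terminates is immediate from the loop condition (and stage $i$ does terminate, since every iteration activates a fan and thereby colors an edge, so the number of uncolored edges strictly decreases). The substance is showing that no fan re-enters $Q_i$ during a later stage $j > i$. For this I would lean on two structural facts read off the pseudocode: (i) during \textproc{Activate-Collection} no new fan is ever created, a c-fan is only ever discarded (never re-primed), and a u-fan only loses leaves, so its center is fixed once it is built; and (ii) — as already noted in the proof of Lemma~\ref{lem:activateTime} — stage $j$ recolors edges only with the colors $\alpha$ and $\beta_j$, with shifts merely permuting colors among the edges of a single c-fan and leaving that c-fan's center's missing-color set untouched. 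Given (i), a c-fan cannot join $Q_i$ (it would have to become primed by $\beta_i$), and a u-fan with center $v$ joins $Q_i$ only if $\beta_i$ newly enters $M(v)$; given (ii), no $\beta_i$-colored edge is created or destroyed during stage $j \ne i$, so the $\beta_i$-missing status at $v$ cannot change. Combining these, $Q_i$ stays empty through stages $i+1, \dots, d$.

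Finally I would assemble the pieces: after stage $d$ we have $Q_1 = Q_2 = \cdots = Q_d = \emptyset$, yet every fan in $\cC_\alpha$ belongs to some $Q_j$; hence $\cC_\alpha$ is empty.

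I expect the main obstacle to be pinning down fact (ii) precisely — in particular, making sure the shift operations carried out while activating the c-fans primed by $\beta_j$ really do not disturb the $\beta_i$-missing status ($i \ne j$) at the center of any u-fan that survives the stage. The key points are that a shift only moves colors among the edges of one c-fan and does not alter the missing-color set at that c-fan's own center, and that all fans in $\cC_\alpha$ are vertex-disjoint, so the recoloured edges of a c-fan that is about to be discarded are not incident to the center of any other surviving fan. A secondary check is the claim that no c-fan in $\cC_\alpha$ is ever primed by $\alpha$, which must be traced back to the interaction of steps 2 and 3 of \textproc{Make-Collection-Fan}.
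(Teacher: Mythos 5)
Your proof is correct and takes essentially the same route as the paper's: both argue that $Q_i$ cannot regain fans after stage $i$ because a c-fan's priming color is fixed, and a u-fan's center can only acquire $\beta_i$ as a missing color through an $\alpha\beta_i$-flip, which occurs only in stage $i$. You also make explicit a step the paper leaves implicit --- that every fan in $\cC_\alpha$ belongs to some $Q_j$, since $M(v)\neq\emptyset$ at a u-fan's center with $\alpha\notin M(v)$, and a c-fan's priming color cannot be $\alpha$ by the ordering of steps 2 and 3 of \textproc{Make-Collection-Fan} --- which is a worthwhile addition (your intermediate phrasing ``no $\beta_i$-colored edge is created or destroyed'' overstates things, since shifts do recolor edges, but your closing remarks correctly isolate what actually matters: shifts leave the shifted c-fan's center untouched and vertex-disjointness shields every other fan's center).
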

\begin{proof}
\textproc{Activate-Collection} runs in $d$ stages, where the $i$th stage finishes once $Q_i$ is empty.
We show that no fan can be added to $Q_i$ in later stages.
A c-fan cannot be added to $Q_i$ in later stages because the color that primes a c-fan never changes.
A u-fan $F \not\in Q_i$ can only be added to $Q_i$ when its center is the endpoint of a flipped path that interchanges $\alpha$ and $\beta_i$.
However, $\alpha\beta_i$-paths are only flipped during stage $i$, and so u-fans can only be added to $Q_i$ during stage $i$.
Thus, $Q_i$ remains empty after stage $i$.
It follows that $Q_1, \dots, Q_d$ are all empty at the end of \textproc{Activate-Collection}.
\end{proof}

Next, we compare the number of edges colored by \textproc{Activate-Collection} against the number of vertices removed from $V(\cC_\alpha) \cap I_\alpha$.

\begin{lemma}
\label{lem:activateNumber}
In every iteration of \textproc{Activate-Collection}, the number of uncolored edges decreases by $r \geq 1$ and at most $r+6$ vertices of $I_\alpha$ are removed from $V(C_\alpha)$.
\end{lemma}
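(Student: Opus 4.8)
The plan is to dissect one iteration of \textproc{Activate-Collection}, i.e., one fan activation followed by one call to \textproc{Disconnect-Vertex}: the iteration picks a fan $F$ with center $v$, activates it via \textproc{Activate-c-Fan} (if $F$ is a c-fan) or \textproc{Activate-u-Fan} (if $F$ is a u-fan), possibly discards $F$, and then runs \textproc{Disconnect-Vertex}$(v)$. Throughout I track two quantities: the number of edges that become colored---which equals $r$, the drop in the number of uncolored edges---and the number of vertices of the fixed set $I_\alpha$ that leave $V(\cC_\alpha)$. Two preliminary observations keep the accounting manageable. First, \textproc{Activate-Collection} never adds a vertex to a fan, never merges fans, and never turns a leaf into a center or vice versa (all of that happens only in \textproc{Build-Collection}); it only removes leaves and discards whole fans, so $V(\cC_\alpha)$ shrinks monotonically and the invariant from the proof of Lemma~\ref{lem:buildTime} persists: every vertex of $V(\cC_\alpha)\cap I_\alpha$ is the center of a c-fan or a leaf of a u-fan. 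Hence, when a c-fan is discarded, only its center can be a vertex of $I_\alpha$ leaving $V(\cC_\alpha)$; when a u-fan loses leaves or is discarded, each vanished leaf costs at most one vertex of $I_\alpha$. Second, shifts and flips never change the number of colored edges, so an edge-count change comes only from an explicit ``color $e$'' step, which makes $r$ easy to read off.

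First I would bound the activation. Each of \textproc{Activate-c-Fan} and \textproc{Activate-u-Fan} colors exactly one edge, so $r\geq 1$ at once. If $F$ is a c-fan it is discarded after activation, costing just its center: at most $1$ vertex of $I_\alpha$. If $F$ is a u-fan, \textproc{Activate-u-Fan} removes the endpoint $w$ of the flipped path from $F$ when $w$ is a leaf, then removes the chosen leaf $x$, and \textproc{Activate-Collection} discards $F$ outright if fewer than two leaves remain, which can kill one further leaf; so at most $3$ leaves vanish and at most $3$ vertices of $I_\alpha$ are lost. Either way the activation colors one edge and removes at most $3$ vertices of $I_\alpha$. (The $\alpha\beta_i$-path it flips may damage the fan at the far endpoint, but that fan is repaired or discarded inside the ensuing \textproc{Disconnect-Vertex}, not in the activation itself.)

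Next I would examine \textproc{Disconnect-Vertex}$(v)$, classifying its loop iterations by the pseudocode case. A Case 1 iteration discards a fan $F'$---a c-fan (losing its center) or a two-leaf u-fan (losing both leaves)---costing at most $2$ vertices of $I_\alpha$ and coloring nothing. A Case 2 iteration removes one leaf of a u-fan: at most $1$ vertex of $I_\alpha$, nothing colored. A Case 3 iteration removes the chosen leaf $x$ of a u-fan (its other vertex $w$ being that fan's center, hence not in $I_\alpha$) and colors exactly one edge: at most $1$ vertex of $I_\alpha$, one edge. The crucial point is that at most $2$ iterations are of type Case 1 or Case 2. Indeed, such an iteration makes the endpoint $w$ of $P$ it processes cease to lie in any fan of $\cC_\alpha$; and $P$ is never afterwards extended past such a ``cleaned'' endpoint, since $P$ grows only in Case 3, through an edge incident to the endpoint being processed, and that endpoint is chosen precisely because it still lies in a fan. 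So a cleaned endpoint stays an endpoint of $P$ and stays clean; $P$ has two endpoints and the loop runs only while some endpoint lies in a fan, so at most two Cases 1/2 occur. Consequently, if \textproc{Disconnect-Vertex} performs $c$ Case 3 iterations it colors $c$ edges and removes at most $c + 2\cdot 2 = c+4$ vertices of $I_\alpha$.

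Combining, the iteration colors $r = 1+c$ edges and removes at most $3 + (c+4) = c+7 = r+6$ vertices of $I_\alpha$, which is the claim. I expect the main obstacle to be the bound of $2$ on the number of Case 1/2 iterations of \textproc{Disconnect-Vertex}: this rests on the ``cleaned endpoint stays clean'' observation, which uses that Case 3 extends $P$ only at an endpoint currently lying in a fan, and it is this bound (together with the mild case-by-case bookkeeping above) that pins the constant at $6$. The rest---checking that each enumerated case behaves as stated, and that only the center of a discarded c-fan or the lost leaves of a u-fan can be $I_\alpha$-vertices---is routine given the invariant from Lemma~\ref{lem:buildTime}, Proposition~\ref{prop:disconnect}, and the monotonicity of $V(\cC_\alpha)$.
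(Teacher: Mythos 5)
Your proof is correct and follows essentially the same decomposition as the paper's: charge the activation with one colored edge and at most three lost $I_\alpha$-vertices, bound Cases~1 and~2 of \textproc{Disconnect-Vertex} by two occurrences (since each permanently cleans a distinct endpoint of $P$) costing at most four vertices total, and match each Case~3 iteration one-for-one with a colored edge and a single lost vertex. The only difference is that you spell out the ``cleaned endpoint stays clean'' observation more explicitly than the paper, which states it in one line.
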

\begin{proof}
Recall that the center of a c-fan is in $I_\alpha$ (but none of the leaves are) and each leaf of a u-fan is in $I_\alpha$ (but the center is not).
Activating a fan always causes a new edge to be colored and it removes a vertex of $I_\alpha$ from that fan: either the center of a c-fan or a leaf of a u-fan.
If it is a u-fan then another vertex of $I_\alpha$ may be removed if the flipped path ends at a leaf of that fan, and a third vertex of $I_\alpha$ may be removed if the u-fan becomes degenerate and is discarded.
Thus, due to the activation, one new edge is colored and $|V(\cC_\alpha) \cap I_\alpha|$ decreases by at most 3.

\textproc{Activate-Collection} then calls \textproc{Disconnect-Vertex}, which may color more edges and remove more vertices of $I_\alpha$.
In the first two cases of \textproc{Disconnect-Vertex}, no edges are colored and at most two vertices of $I_\alpha$ are removed from $V(\cC_\alpha)$.
These cases can happen only twice during \textproc{Disconnect-Vertex} because they cause an endpoint of $P$ to not lie in any fan of $\cC_\alpha$.
Hence, these cases cause $|V(\cC_\alpha) \cap I_\alpha|$ to decrease by at most 4.
Case 3 can occur many times, but it always colors one edge (either by $\alpha$ or by $\beta_i$) and removes one vertex of $V(\cC_\alpha) \cap I_\alpha$.

Thus, if $r$ new edges are colored during an iteration of \textproc{Activate-Collection}, at most $r + 6$ vertices are removed from $V(\cC_\alpha) \cap I_\alpha$: Up $3$ vertices are lost due the activation of the fan (while $1$ new edge is colored) and up to $(r-1) + 4$ vertices are lost during \textproc{Disconnect-Vertex} (while $r-1$ new edges are colored).
\end{proof}

The next corollary counts the number of edges newly colored by \textproc{Activate-Collection}$(\cC_\alpha)$.
\begin{corollary}
\label{cor:activateNumber}
Let $\ell_m$ and $\ell_f$, respectively, be the number of uncolored edges before and after \textproc{Activate-Collection}.
Let $\cC_\alpha$ be the collection output by \textproc{Build-Collection}.
Then
\[\ell_m-\ell_f \geq \frac{|V(\cC_\alpha) \cap I_\alpha|}{7}\]
\end{corollary}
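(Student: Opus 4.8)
The plan is to sum the per-iteration bound of Lemma~\ref{lem:activateNumber} over the entire run of \textproc{Activate-Collection}, and to use Lemma~\ref{lem:empties} to pin down the total number of vertices of $I_\alpha$ that leave the collection.

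First I would fix notation. Suppose the inner \textbf{while} loop of \textproc{Activate-Collection} executes $N$ iterations in total (across all $d$ stages), and let $r_j$ be the number of newly colored edges in iteration $j$. By Lemma~\ref{lem:activateNumber}, $r_j \geq 1$ and at most $r_j + 6$ vertices of $I_\alpha$ are removed from $V(\cC_\alpha)$ during iteration $j$. The key point is that $V(\cC_\alpha)$ shrinks monotonically throughout \textproc{Activate-Collection}: \textproc{Activate-c-Fan}, \textproc{Activate-u-Fan}, and \textproc{Disconnect-Vertex} only shift fans, discard fans, or drop a leaf (the leaf swap in Case~3 of \textproc{Disconnect-Vertex} replaces one existing leaf of $F'$ by another existing leaf, never adding a new vertex). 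Consequently a vertex removed in one iteration stays removed, so the removals counted in distinct iterations are to distinct vertices, and together they account for exactly the net decrease in $|V(\cC_\alpha)\cap I_\alpha|$. Since $\cC_\alpha$ is empty when \textproc{Activate-Collection} finishes by Lemma~\ref{lem:empties}, this net decrease equals the starting value $|V(\cC_\alpha)\cap I_\alpha|$ --- the count for the collection output by \textproc{Build-Collection}. Hence $\sum_{j=1}^N (r_j + 6) \geq |V(\cC_\alpha)\cap I_\alpha|$.

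Next I would bound the left-hand side. Each iteration colors at least one edge ($r_j \geq 1$), so $N \leq \sum_{j=1}^N r_j$, and $\sum_{j=1}^N r_j = \ell_m - \ell_f$ since $\ell_m,\ell_f$ are the uncolored-edge counts before and after the procedure. Therefore
\[
|V(\cC_\alpha)\cap I_\alpha| \;\leq\; \sum_{j=1}^N (r_j + 6) \;=\; (\ell_m - \ell_f) + 6N \;\leq\; 7(\ell_m - \ell_f),
\]
and dividing by $7$ gives the claimed inequality.

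The argument is short, and the only delicate point is the monotonicity claim used to guarantee that the per-iteration removals are to distinct vertices; I would discharge it by a brief inspection of the three relevant subroutines as indicated above. Everything else is a telescoping sum together with the already-established facts (Lemmas~\ref{lem:activateNumber} and~\ref{lem:empties}).
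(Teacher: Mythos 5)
Your proof is correct and takes essentially the same approach as the paper: combine the per-iteration bound of Lemma~\ref{lem:activateNumber} (at most $r+6\leq 7r$ vertices removed per $r$ edges colored) with Lemma~\ref{lem:empties} to conclude that the total number of removals equals $|V(\cC_\alpha)\cap I_\alpha|$. The only difference is that you make explicit the monotonicity of $V(\cC_\alpha)$ under \textproc{Activate-Collection} and \textproc{Disconnect-Vertex}, a detail the paper's two-sentence proof leaves implicit but which is indeed needed to equate the sum of per-iteration removals with the initial size of $V(\cC_\alpha)\cap I_\alpha$.
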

\begin{proof}
By Lemma~\ref{lem:empties}, $V(\cC_\alpha) \cap I_\alpha$ is empty at the end of \textproc{Activate-Collection}. By Lemma~\ref{lem:activateNumber}, at least one edge is colored for every 7 vertices removed from $V(\cC_\alpha) \cap I_\alpha$.
\end{proof}

We are now able to wrap up the proof of Theorem~\ref{thm:many}.
\begin{theorem}
\label{thm:many}
\textproc{Color-Many} colors $\Omega(\ell/d)$ edges in $O(m)$ time.
\end{theorem}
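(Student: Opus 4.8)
The plan is to simply chain together the results already proved for the three steps. The runtime bound is immediate: Step~1 takes $O(m)$ time by Lemma~\ref{lem:Ialpha}, \textproc{Build-Collection} takes $O(m)$ time by Lemma~\ref{lem:buildTime}, and \textproc{Activate-Collection} takes $O(m)$ time by Lemma~\ref{lem:activateTime}. Since these are the only operations \textproc{Color-Many} performs, it runs in $O(m)$ time. All the genuine content is in bounding the number of edges colored, and even that reduces to arithmetic on three inequalities already in hand.

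For the count, I would write $\ell = \ell_b$ for the number of uncolored edges at the start of \textproc{Color-Many} (Step~1 colors nothing, so this agrees with the $\ell_b$ of Lemma~\ref{lem:buildNumber}), $\ell_m$ for the number after \textproc{Build-Collection}, and $\ell_f$ for the number after \textproc{Activate-Collection}; the total number of newly colored edges is then $\ell_b - \ell_f$. First invoke Lemma~\ref{lem:Ialpha} to get $|I_\alpha| \ge 2\ell/(d+1)$. Then Lemma~\ref{lem:buildNumber} gives $|V(\cC_\alpha)\cap I_\alpha| \ge |I_\alpha| - 3(\ell_b - \ell_m)$, while Corollary~\ref{cor:activateNumber} gives $\ell_m - \ell_f \ge |V(\cC_\alpha)\cap I_\alpha|/7$. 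Combining,
\[ 7(\ell_m - \ell_f) + 3(\ell_b - \ell_m) \;\ge\; |I_\alpha| \;\ge\; \frac{2\ell}{d+1}. \]
Since $\ell_b \ge \ell_m$, the left-hand side is at most $7(\ell_m-\ell_f) + 7(\ell_b-\ell_m) = 7(\ell_b - \ell_f)$, so $\ell_b - \ell_f \ge \frac{2\ell}{7(d+1)} = \Omega(\ell/d)$, which is the claimed bound.

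There is essentially no obstacle at the level of this theorem: the difficulty of \textproc{Color-Many} lives entirely in the lemmas feeding it — that \textproc{Build-Collection} produces a well-formed $\alpha$-collection in linear time while accounting for the merges (Lemma~\ref{lem:buildTime}), that \textproc{Disconnect-Vertex} restores Preconditions~1 and~2 while keeping the explored paths vertex-disjoint (Lemmas on \textproc{Disconnect-Vertex} and~\ref{lem:disjoint}), and that this disjointness together with the per-stage edge accounting bounds the total path length by $O(m)$ (Lemma~\ref{lem:activateTime}). Once those are in place, the only points to be careful about here are that the $\ell_b, \ell_m, \ell_f$ appearing in Lemma~\ref{lem:buildNumber} and Corollary~\ref{cor:activateNumber} refer to the same three intermediate states (they do, since nothing between \textproc{Build-Collection} and \textproc{Activate-Collection} changes the coloring), and that the ``$\Omega(\ell/d)$'' statement absorbs the $d+1$ in the denominator into the hidden constant.
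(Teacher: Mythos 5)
Your proof is correct and matches the paper's own argument: both invoke Lemmas~\ref{lem:Ialpha}, \ref{lem:buildTime}, \ref{lem:activateTime}, \ref{lem:buildNumber}, and Corollary~\ref{cor:activateNumber}, and both arrive at $\ell_b - \ell_f \geq \frac{2\ell}{7(d+1)}$. The only difference is cosmetic --- you close the chain via $\ell_b \geq \ell_m$, while the paper bounds $|V(\cC_\alpha) \cap I_\alpha| \leq |I_\alpha|$ inside the expression $\frac{7|I_\alpha| - 4|V(\cC_\alpha)\cap I_\alpha|}{21}$ --- but these are equivalent rearrangements of the same three inequalities.
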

\begin{proof}
\textproc{Color-Many} runs in $O(m)$ time by Lemmas~\ref{lem:Ialpha},~\ref{lem:buildTime}, and~\ref{lem:activateTime}.

Let $\ell_b$, $\ell_m$, and $\ell_f$, respectively, be the number of uncolored edges at the start of \textproc{Color-Many}, after \textproc{Build-Collection}, and after \textproc{Activate-Collection}.
Recall that $|I_\alpha| \geq 2\ell_b/(d+1)$ by Lemma~\ref{lem:Ialpha}.
By Lemma~\ref{lem:buildNumber},
\[|V(\cC_\alpha) \cap I_\alpha| + 3(\ell_b - \ell_m) \geq |I_\alpha|\]
By Corollary~\ref{cor:activateNumber},
\[\ell_m-\ell_f \geq \frac{|V(\cC_\alpha) \cap I_\alpha|}{7}\]
Hence,
\begin{align*}
\ell_b - \ell_f &= (\ell_b - \ell_m) + (\ell_m - \ell_f)\\
&\geq \frac{|I_\alpha| - |V(\cC_\alpha) \cap I_\alpha|}{3} + \frac{|V(\cC_\alpha) \cap I_\alpha|}{7}\\
&= \frac{7|I_\alpha| - 4|V(\cC_\alpha) \cap I_\alpha|}{21}\\
&\geq \frac{7|I_\alpha| - 4|I_\alpha|}{21}\\
&= \frac{|I_\alpha|}{7}\\
&\geq \frac{2\ell_b}{7(d+1)}
\end{align*}

Therefore, $\Omega(\ell/d)$ edges are colored by \textproc{Color-Many}.
\end{proof}

\bibliographystyle{plain}
\bibliography{EC}
\newpage
\appendix

\section{Dictionary Data Structure}
\label{app:dictionary}
Here we describe the implementation of $\cD$, the dictionary data structure introduced in Section~\ref{sec:datastructs}.
Recall that $\cD$ is meant to map a vertex-color pair $(v, \gamma)$ to the edge incident to $v$ colored by $\gamma$, if such an edge exists, and we wish to support \textproc{Search}, \textproc{Insert}, and \textproc{Delete} in constant time.

For this description, treat the vertices of $G$ as integers from $0$ to $n-1$.
Intuitively, we would like to simply store an array of length $n(d+1)$ where the entry at index $(v(d+1) + \gamma)$ contains the edge incident to $v$ colored by $\gamma$.
This may be too costly in preprocessing as $n(d+1)$ could exceed our desired time complexity of $O(m\sqrt n)$.
Instead we use a two-level decomposition of this method which takes $O(\sqrt{mnd})$ time to initialize.

$\cD$ must store key-value pairs, where the keys are drawn from a universe $V \times [d+1]$ of size $U = n(d+1)$ and we know that the dictionary contains at most $M = 2m$ entries at any time (at most one for each endpoint).

Let $b = \lceil\sqrt{U/M}\rceil$.
Divide the universe into $\lceil U/b\rceil$ ranges $R_0, \dots, R_{\lceil U/b\rceil - 1}$ of size at most $b$ (specifically, the ranges are $[0, b-1]$, $[b, 2b-1]$, etc).
Initialize an array $A$ of length $\lceil U/b\rceil$, where the $k$th entry of $A$ corresponds to the range $R_k$.
For each index $k$ of $A$, store a counter $C[k]$ that tracks the number of entries in range $R_k$.

Then initialize $M$ arrays $B_1, B_2, \dots, B_M$ of length $b$.
Call these \emph{block arrays}.
An entry of a block array is either empty ($\Null$) or it has a pointer to an edge.

If $C[k] = 0$, then $A[k]$ is empty (it contains $\Null$).
Otherwise, $A[k]$ contains a pointer to a block array $B$, and for as long as $C[k] > 0$, $B$ will act as an array storing the entries in the range $R_k$.
At any time, $B$ can only be linked from one location in $A$ (but it can change locations over time).

Create a linked list $E$ which shall contain the empty (unused) block arrays.
Initially all block arrays are in $E$.

The operations are implemented as follows:
\bi
\item
\textproc{Search}$(v, \gamma)$:
Write the index $v(d+1) + \gamma$ as a sum $kb + j$ with $k$ and $j$ integers.\footnote{Specifically $k =\lfloor (v(d+1) + \gamma)/b \rfloor$ and $j = (v(d+1) + \gamma) \mod b$.}
If $A[k] = \Null$ then return $\Null$. If $A[k] = B$, then return $B[j]$.
\item
\textproc{Insert}$(v, \gamma, e)$: Compute $k$ and $j$ as above.
If $A[k] = B$ then set $B[j] = \gamma$.
If $A[k] = \Null$, then remove an empty block array $B$ from $E$, set $A[k] = B$, and set $B[j] = \gamma$.
Increment $C[k]$.

Observe that $E$ cannot be empty when a block array is removed from $E$, for there are $M$ block arrays and at most $M$ entries in the dictionary.
\item
\textproc{Delete}$(v, \gamma)$: Compute $k$ and $j$ as above.
If $A[k] = \Null$, then do nothing.
If $A[k] = B$ and $B[j] \not= \Null$, then set $B[j] = \Null$ and decrement $C[k]$.
If $C[k] = 0$ now, then set $A[k] = \Null$ and append $B$ to $E$ because $B$ is empty.
\ei

Each operation takes constant time.
Initializing the data structure takes $O(U/b + Mb) = O(\sqrt{UM}) = O(\sqrt{mnd})$ time.

\paragraph*{A Note About Alternative Implementations}

Our implementation is somewhat unconventional, and it would be nice if an out-of-the-box dictionary could be used instead.
Our goal is to have constant time operations, and the barrier in our case is just the preprocessing time.
We need $\cD$ to be initialized in $O(m\sqrt{n})$ time.
There are alternative solutions available.

Firstly, if $m\sqrt{n} = \Omega(nd)$, as is often the case, we can simply use an array of length $n(d+1)$ as our dictionary.

Secondly, there are \emph{randomized} dictionaries that can be initialized in constant time and perform all operations in constant expected amortized time (e.g. generalized cuckoo hashing~\cite{FPSS05}).
If deterministic computation is not a priority, and certainly in the case of \textproc{Random-Euler-Color}, this type of dictionary can be substituted.

In fact, $\cD$ is only \emph{truly} needed for \textproc{Random-Color-One}.
Versions of \textproc{Color-One} and \textproc{Color-Many} exist that do not use $\cD$ at all; this was the case for \textproc{Recolor} and \textproc{Parallel-Color}, the analogous subroutines in~\cite{GNKLT85}.
Although $\cD$ is a valuable tool that simplifies their implementations, it is possible to achieve the same deterministic $O(m\sqrt{n})$ bound for edge-coloring without it.

\paragraph*{The Recursive Strategy and $\cD$}
We must be careful in how we manage our data structures with respect to the recursive strategy.
Since $\cD$ takes more than $O(m)$ time to initialize, we cannot afford to reinitialize $\cD$ for each subgraph during \textproc{Euler-Color} or \textproc{Random-Euler-Color}.
If done improperly, different branches of the recursion could run into conflicts by reading and writing to the same locations in $\cD$, even though they are operation on edge-disjoint subgraphs.

A simple workaround is to populate and depopulate $\cD$ in every \stp{Repair} step.
Before the \stp{Repair} step begins, insert the color of every edge into $\cD$; when it is over, remove every color from $\cD$.
This way, every time a \stp{Repair} step occurs during the strategy, $\cD$ exactly reflects the current subgraph and its colors.
It adds $O(m)$ insertions and deletions to every \stp{Repair} step which does not change the asymptotic runtime.
It is also possible to do without this workaround by careful management of the colors.

\end{document}